\documentclass[a4paper,11pt]{article}
\usepackage{fullpage}
\usepackage[T1]{fontenc}
\usepackage[utf8]{inputenc}
\usepackage[font=small]{caption}
\usepackage{latexsym,amssymb,amsmath,amsthm,graphicx,enumerate,mathtools}
\usepackage{cite}
\usepackage{braket}
\usepackage{stmaryrd}
\usepackage[most]{tcolorbox}
\usepackage{tikz}
\usetikzlibrary{arrows.meta,positioning,calc,fit,trees,shapes.misc,backgrounds,spy,shapes.geometric}
\usepackage[ruled,vlined,linesnumbered]{algorithm2e}
\usepackage{wasysym}
\usepackage{thmtools}
\usepackage{thm-restate}
\usepackage{pgfplots}
\pgfplotsset{compat=1.18}
\usepgfplotslibrary{fillbetween}
\usepackage{hyperref}
\usepackage[nameinlink,noabbrev]{cleveref}

\theoremstyle{plain}
\newtheorem{theorem}{Theorem}
\newtheorem{lemma}[theorem]{Lemma}
\newtheorem{proposition}[theorem]{Proposition}
\newtheorem{corollary}[theorem]{Corollary}
\newtheorem{claim}[theorem]{Claim}

\theoremstyle{definition}
\newtheorem{definition}[theorem]{Definition}
\newtheorem{example}[theorem]{Example}

\theoremstyle{remark}
\newtheorem{remark}[theorem]{Remark}

\newcommand{\convexpath}[2]{
    [
    create hullcoords/.code={
        \global\edef\namelist{#1}
        \foreach [count=\counter] \nodename in \namelist {
            \global\edef\numberofnodes{\counter}
            \coordinate (hullcoord\counter) at (\nodename);
        }
        \coordinate (hullcoord0) at (hullcoord\numberofnodes);
        \pgfmathtruncatemacro\lastnumber{\numberofnodes+1}
        \coordinate (hullcoord\lastnumber) at (hullcoord1);
    },
    create hullcoords
    ]
    ($(hullcoord1)!#2!-90:(hullcoord0)$)
    \foreach [
    evaluate=\currentnode as \previousnode using int(\currentnode-1),
    evaluate=\currentnode as \nextnode using int(\currentnode+1)
    ] \currentnode in {1,...,\numberofnodes} {
        let \p1 = ($(hullcoord\currentnode) - (hullcoord\previousnode)$),
        \n1 = {atan2(\y1,\x1) + 90},
        \p2 = ($(hullcoord\nextnode) - (hullcoord\currentnode)$),
        \n2 = {atan2(\y2,\x2) + 90},
        \n{delta} = {Mod(\n2-\n1,360) - 360}
        in
        {arc [start angle=\n1, delta angle=\n{delta}, radius=#2]}
        -- ($(hullcoord\nextnode)!#2!-90:(hullcoord\currentnode)$)
    }
}

\usepackage{xcolor}
\hypersetup{
  colorlinks=true,
  citecolor=green!50!black,
  linkcolor=blue!80!black,
  urlcolor=blue!0!black
}

\newcommand{\bM}{\boldsymbol{M}}

\newcommand{\B}{\mathcal{B}}
\newcommand{\bB}{\boldsymbol{B}}
\newcommand{\bD}{\boldsymbol{D}}
\newcommand{\F}{\mathcal{F}}
\newcommand{\C}{\mathcal{C}}
\renewcommand{\L}{\mathcal{L}}
\newcommand{\I}{\mathcal{I}}
\newcommand{\X}{\mathcal{X}}
\newcommand{\Top}{\mathsf{Top}}
\newcommand{\ch}{\mathsf{ch}}
\newcommand{\rk}{\mathrm{rk}}
\newcommand{\id}{\mathrm{id}}
\newcommand{\inv}{\mathrm{inv}}

\newcommand{\dKT}{d_{\mathrm{KT}}}
\newcommand{\cost}{\mathrm{cost}}

\title{Ranking and Rank Aggregation with \\Matroid Prefix Constraints}
\author{
Seiei Ando\thanks{Department of Mathematical and Computing Science, School of Computing, Institute of Science Tokyo, Tokyo 152-8552, Japan. Email: \texttt{ando.s.373d@m.isct.ac.jp}}
\and
Yu Yokoi\thanks{Department of Mathematical and Computing Science, School of Computing, Institute of Science Tokyo, Tokyo 152-8552, Japan. Email: \texttt{yokoi@comp.isct.ac.jp}}
}
\date{}

\begin{document}
\maketitle

\begin{abstract}
We study ranking and rank aggregation under the Kendall tau distance, subject to matroid or flag matroid constraints on prefixes of the output ranking. In the matroid case, the top-$k$ prefix is required to form a base of a matroid; in the flag matroid case, several prescribed prefixes are required to form bases of a sequence of matroids linked by quotient relations. This framework contains the previously studied notions of $k$-fairness and block-fairness as special cases, and also captures more general hierarchical and assignment-type lower- and upper-quota constraints.

We provide a polynomial-time algorithm for finding, given a single input ranking, a closest feasible ranking under flag matroid prefix constraints. The algorithm is a natural greedy procedure, and its optimality is proved via a Bruhat order argument on the symmetric group. As a consequence, existing approximation frameworks for fair rank aggregation carry over to the matroidal setting. We also prove that rank aggregation with matroid constraints is NP-hard for every fixed number $m\ge 2$ of input rankings, even under partition matroid constraints.
\end{abstract}

\section{Introduction}
\label{sec:intro}
Ranking a set of alternatives and aggregating multiple rankings into a single consensus ranking are fundamental tasks in social choice, information retrieval, meta-search, recommendation, and many other applications \cite{dwor01web,wang24survey}.
Throughout the paper, a \emph{ranking} of a finite set $E$ is a total order on $E$, which we identify with a bijection $\pi:[n]\to E$ where $[n]=\{1,\dots,n\}$, $n=|E|$, and $\pi(i)$ is the element placed at position $i$.

To compare rankings, we use the {\em Kendall tau distance} (a.k.a. {\em bubble-sort distance}). For two rankings $\pi$ and $\sigma$, their Kendall tau distance, denoted by $\dKT(\pi,\sigma)$, is the number of unordered pairs of elements on which $\pi$ and $\sigma$ disagree. This is one of the most standard distances in rank aggregation and voting theory. In the unconstrained setting, minimizing the sum of Kendall tau distances to the input rankings yields the classical \emph{Kemeny rule} \cite{Kem59,YoungLevenglick78,You88}. The Kemeny aggregation rule has strong axiomatic justification, but its computational side is difficult: computing an optimal Kemeny ranking is NP-hard \cite{BartholdiTT89,hema05kemeny}. On the positive side, constant-factor approximations and even a PTAS are known \cite{ailo08agg,KenyonMathieuSchudy07}.

In this paper, we study constrained versions of ranking and rank aggregation problems. One natural source of constraints is fairness.
Recent work on fair ranking has imposed group-fair representation requirements on top positions of a ranking. For instance, Celis--Straszak--Vishnoi~\cite{CelisSV18} studied top-position fairness under metrics other than Kendall tau. In the Kendall tau setting, the \emph{closest fair ranking} (CFR) problem and the \emph{fair rank aggregation} (FRA) problem have been studied under proportionate fairness notions~\cite{chak22fair,wei22sigmod}. In the model of Chakraborty--Das--Khan--Subramanian~\cite{chak22fair}, the candidate set is partitioned into groups, and $k$-fairness requires the top-$k$ prefix to satisfy lower and upper bounds on the number of candidates from each group. The stronger notion of block-fairness imposes analogous constraints on multiple prescribed prefixes. Under these notions, CFR admits exact polynomial-time algorithms, whereas FRA has mainly been studied via approximation algorithms~\cite{chak22fair,chak25improve,wei22sigmod}.

Our starting point is the observation that the fairness constraints studied in CFR and FRA have a matroidal structure. The $k$-fairness constraint can be expressed as the requirement that the top-$k$ set forms a base of an appropriate matroid, while the stronger block-fairness constraint naturally gives rise to a flag matroid, i.e., a sequence of matroids linked by quotient relations \cite{Borovik03,CameronDMS22,JarraLorscheid24,brandenburg2024quotients}. For matroids $M$ and $N$ on the same ground set, $N$ is a {\em quotient} of $M$ if their rank functions satisfy $\rk_N(Y)-\rk_N(X)\leq \rk_M(Y)-\rk_M(X)$ for any $X\subseteq Y\subseteq E$.

\subsection*{Our contributions}
We study matroidal generalizations of closest fair ranking (CFR) and fair rank aggregation (FRA). In \emph{Matroid-CFR}, given a ranking $\pi$ and a matroid $M$ of rank $k$ on $E$, the task is to find a ranking $\sigma$ minimizing $\dKT(\sigma, \pi)$ subject to the top $k$ elements of $\sigma$ forming a base of~$M$. In \emph{Flag-Matroid-CFR}, the constraint is given by a flag matroid $\bM=(M_1,M_2,\dots,M_s)$ with rank $k_1<k_2<\cdots<k_s$: for each $i\in[s]$, the top $k_i$ elements of the output $\sigma$ must form a base of $M_i$. The corresponding aggregation problems are \emph{Matroid-FRA} and \emph{Flag-Matroid-FRA}, where multiple rankings $\pi_1, \pi_2,\dots, \pi_m$ are given and the objective is $\sum_{i=1}^m \dKT(\sigma,\pi_i)$.

\paragraph*{Exact closest ranking under flag matroid constraints.}
Our main contribution is an exact polynomial-time algorithm for Flag-Matroid-CFR.

Matroid-CFR, i.e., a special case with a single matroid, serves as a warm-up. Once the top-$k$ set is fixed, the optimal relative order of the chosen elements and of the remaining elements is inherited from the input ranking $\pi$, and the objective reduces to a linear weight sum over the chosen base. Thus Matroid-CFR reduces to a minimum-weight base problem (Theorem~\ref{thm:CFR}).

The main difficulty lies in the genuine flag matroid case, where several prefix constraints must be satisfied simultaneously. Even after fixing the feasible prefixes, the objective does not decompose into independent layer-wise terms, because $\dKT(\cdot,\pi)$ depends on pairwise relative positions, including pairs crossing different layers. A natural integer-programming formulation therefore leads to a quadratic objective, and we are not aware of a direct reduction to a known polynomial-time solvable combinatorial optimization problem.

Our algorithm is nonetheless a simple greedy-type algorithm. The nontrivial part is proving that its output is indeed optimal. After relabeling the candidates so that $\pi=\id$, minimizing $\dKT(\sigma,\pi)$ is equivalent to minimizing the inversion number $\inv(\sigma)$. We then compare feasible rankings via the Bruhat order on the symmetric group and prove, using the tableau criterion of Björner--Brenti~\cite{BjornerBrenti1996}, that the greedy output is the Bruhat-minimum among all feasible rankings. This establishes the correctness of our greedy-type algorithm for Flag-Matroid-CFR (Corollary~\ref{cor:flag-CFR}).

Once the closest-ranking problem becomes tractable, existing approximation frameworks for fair rank aggregation by Chakraborty--Das--Dey--Yan~\cite{chak25improve} extend to the matroidal setting because these algorithms use CFR-type procedures as subroutines. With appropriate generalizations of these procedures, we obtain a $(2+\varepsilon)$-approximation algorithm for Matroid-FRA for every constant $\varepsilon>0$ and a $2.881$-approximation algorithm for Flag-Matroid-FRA.

\paragraph*{Natural classes of relevant flag matroids.}
To demonstrate the scope of the framework, Section~\ref{sec:examples}
presents explicit classes of matroids and flag matroids that can arise naturally
in ranking and aggregation settings. While many examples of flag matroids in the
literature come from algebraic or geometric constructions
\cite{CameronDMS22,JarraLorscheid24,Borovik03} or special combinatorial
families such as lattice paths~\cite{NaturalFlag}, our examples arise from
familiar combinatorial constraint structures. Starting from a partition, a
bipartite eligibility graph, or a laminar family, we show that two compatible
parameter choices yield matroids in a quotient relation, and hence a flag
matroid. These classes model, for instance, assignment-type feasibility
constraints in reserve-based market design~\cite{DoganImamuraYenmez25} and
hierarchical group quotas in fairness-motivated selection and ranking
settings~\cite{BredereckFILS18,GorantlaMDL23}, beyond the original
partition-based constraints.

\paragraph*{Hardness of aggregation for a fixed number of voters.}
We also study matroid-constrained rank aggregation when the number $m$ of voters is fixed. For the unconstrained Kemeny problem, the complexity with constant number of voters has been well studied, where polynomial-time solvability is known for $m\le 2$ (as folklore), whereas NP-hardness is known for every even $m\ge 4$ \cite{dwor01web} and every odd $m\ge 7$ \cite{bach19kmajor}; the cases $m=3$ and $m=5$ remain open \cite{bach19kmajor,krai23popular}. In contrast, we prove that Matroid-FRA is NP-hard for every fixed number of voters $m\ge 2$, already under partition matroid constraints. Thus the constrained setting exhibits a sharp dichotomy: the case $m=1$ is tractable, since it coincides with Matroid-CFR, whereas every fixed case $m\ge 2$ is NP-hard.

\subsubsection*{Further related work}
Classical rank aggregation under Kendall tau distance has been studied extensively from the viewpoints of social choice, approximation algorithms, and parameterized complexity; see, e.g., \cite{Kem59,YoungLevenglick78,You88,ailo08agg,betz09fpt,fitz21kemeny,wang24survey,bach19kmajor,krai23popular}. On the fairness side, recent work has considered both post-processing methods for fair rankings and aggregation under group-based quota constraints \cite{ZehlikeBCM17,CelisSV18,wei22sigmod,chak22fair,chak25improve}. Our work extends this line by replacing quota systems with general matroid and flag matroid feasibility constraints.

Matroids and matroid quotients are classical objects in matroid theory \cite{Oxley11,Schrijver03}. Flag matroids are matroidal analogues of flags and form a special class of Coxeter matroids \cite{Borovik03,CameronDMS22,JarraLorscheid24}.
Recent work studies quotients in the broader framework of M-convex sets \cite{brandenburg2024quotients}. Our analysis of Flag-Matroid-CFR uses the Bruhat order on the symmetric group; for background, we refer to \cite{BjornerBrenti1996,BjornerBrenti}.

\medskip
\medskip
The rest of the paper is organized as follows.
We begin with preliminaries on rankings and Kendall tau distance in Section~\ref{sec:prelimi}, and on matroids and flag matroids in Section~\ref{sec:matroid_flag_matroid}.
We then study the closest ranking problems in Section~\ref{sec:matroid-CFR}.
After that, we present approximation consequences for the aggregation problems in Section~\ref{sec:approx} and prove hardness results for a fixed number of voters in Section~\ref{sec:hardness}.
Proofs of statements marked with $(\star)$ are postponed to the appendices.

\section{Fair Ranking Problems}\label{sec:prelimi}
Let $E$ be a finite ground set with $|E|=n$, whose elements we regard as candidates to be ranked.
We use the notation $[k]=\{1,2,\dots, k\}$ for any positive integer $k$.

A {\em ranking} of $E$ is a bijection $\pi:[n]\to E$, where $\pi(i)$ denotes the $i$th ranked element. We also identify $\pi$ with the list $(\pi(1),\pi(2),\dots,\pi(n))$ in which elements are arranged from best to worst. For $e,f\in E$, we write $e\prec_\pi f$ if $\pi^{-1}(e)<\pi^{-1}(f)$ and write $e\preceq_\pi f$ if $e\prec_\pi f$ or $e=f$. The notations $e\succ_\pi f$ and $e\succeq_\pi f$ are defined similarly.

For rankings $\pi,\sigma:[n]\to E$, the {\em Kendall tau distance} between
$\pi$ and $\sigma$ is defined by
\[
\dKT(\pi,\sigma)
=
\left|
\left\{
(e,f)\in E\times E
\;\middle|\;
e \prec_{\pi} f
\text{ and }
e \succ_{\sigma} f
\right\}
\right|.
\]
Equivalently, $\dKT(\pi,\sigma)$ counts the
number of unordered pairs of elements on which $\pi$ and $\sigma$ disagree. It is well known that $\dKT$ is a metric on the set of rankings \cite{dwor01web}.

We now recall the fairness notions introduced in~\cite{chak22fair}.
These notions aim to protect minorities while preventing any group from
dominating the top positions. For a ranking $\pi:[n]\to E$ and an integer
$k\in[n]$, let $\Top_k(\pi)\coloneqq \{\pi(1),\pi(2),\dots,\pi(k)\}$ denote the set of top-$k$ candidates in $\pi$.

\begin{definition}[$(\boldsymbol{\alpha},\boldsymbol{\beta})$-$k$-fairness \cite{chak22fair}]
\label{def:weak-fair}
Let $E$ be a set of $n$ candidates partitioned into $g$ groups
$G_1,\dots,G_g$. Let
$\boldsymbol{\alpha}= (\alpha_1,\dots,\alpha_g) \in [0,1]^g$,
$\boldsymbol{\beta}= (\beta_1,\dots,\beta_g) \in [0,1]^g$, and $k \in [n]$. A ranking $\pi:[n]\to E$ is
{\em $(\boldsymbol{\alpha},\boldsymbol{\beta})$-$k$-fair (or $k$-fair)} if
\[
\lfloor \alpha_i k \rfloor \le |\Top_k(\pi)\cap G_i|
\le \lceil \beta_i k \rceil
\qquad \text{for all } i\in[g].
\]
\end{definition}
This notion imposes the constraint only at a single prefix length, namely $k$. The next notion strengthens this requirement by imposing analogous constraints on multiple prefixes.

\begin{definition}[$(\boldsymbol{\alpha},\boldsymbol{\beta})$-block-$k$-fairness \cite{chak22fair}]
\label{def:block-fair}
Let $E$ be a set of $n$ candidates partitioned into $g$ groups
$G_1,\dots,G_g$. Let
$\boldsymbol{\alpha}= (\alpha_1,\dots,\alpha_g) \in [0,1]^g$,
$\boldsymbol{\beta}= (\beta_1,\dots,\beta_g) \in [0,1]^g$, and $k,b\in[n]$, where we assume that $b\alpha_i$ and $b\beta_i$ are integers for all
$i\in[g]$. A ranking $\pi:[n]\to E$ is
{\em $(\boldsymbol{\alpha},\boldsymbol{\beta})$-block-$k$-fair (or block-$k$-fair)} if
\[
\alpha_i h \le |\Top_h(\pi)\cap G_i| \le \beta_i h
\quad
\text{for all } i\in[g] \text{ and all } h\in[n]
\text{ with } h\ge k \text{ and } h\equiv 0 \!\!\pmod{b}.
\]
\end{definition}

The {\em Closest Fair Ranking (CFR)} problem, investigated in~\cite{chak22fair}, asks, given a ranking $\pi$ on $E=\bigsqcup_{i=1}^g G_i$, vectors $\boldsymbol{\alpha},\boldsymbol{\beta}\in [0,1]^g$, and an integer $k$, to find a $k$-fair ranking $\sigma$ minimizing the Kendall tau distance $\dKT(\sigma,\pi)$.

The {\em Fair Rank Aggregation (FRA)} problem, investigated in~\cite{chak22fair,chak25improve}, asks, given $m$ rankings $\pi_1, \dots, \pi_m$ on $E=\bigsqcup_{i=1}^g G_i$, vectors $\boldsymbol{\alpha},\boldsymbol{\beta}\in [0,1]^g$, and an integer $k$, to find a $k$-fair ranking $\sigma$ minimizing $\sum_{j=1}^m \dKT(\sigma,\pi_j)$.

The block-fairness versions of these problems can also be defined.
Efficient algorithms for CFR were given for the $k$-fair and block-$k$-fair versions \cite{chak22fair}. In contrast, FRA is a generalization of the NP-hard Kemeny consensus problem, and previous work has focused on approximation \cite{chak22fair,chak25improve}.

\section{Matroids and Flag Matroids}\label{sec:matroid_flag_matroid}
We briefly recall standard facts on matroids and then introduce some basics of flag matroids; see, e.g.,
\cite{Oxley11,Schrijver03} for more details on matroids and \cite{CameronDMS22,JarraLorscheid24,Borovik03} for details on flag matroids.
We also provide examples of flag matroids that arise from combinatorial structures.
\subsection{Matroids}

A {\em matroid} is a pair $M=(E,\I)$ consisting of a finite set $E$ and a family
$\I\subseteq 2^E$ satisfying the following three axioms:
(I1) $\emptyset\in \I$,
(I2) If $I\subseteq J\in \I$, then $I\in \I$, and
(I3) If $I,J\in \I$ and $|I|<|J|$, then there exists
$e\in J\setminus I$ such that $I\cup\{e\}\in \I$.

The set $E$ is called the {\em ground set}, and $\I$ the {\em independent set
family} of $M$. The {\em base family} $\B$ of $M$ consists of the
inclusion-wise maximal members of $\I$. By (I3), all members of $\B$ have the
same cardinality, say $k$; this number is called the {\em rank} of $M$.

A matroid can be specified by its base family, i.e., when $\B$ is the base family of a matroid, the independent set family is given as $\I=\set{I\subseteq E |\exists B\in \B: I\subseteq B}$.

In this paper, we mainly use the base-family viewpoint; we suppose that a matroid is represented in the form $M=(E, \B)$ using the base family. For algorithmic purposes, however, we assume access to an independence oracle, following the usual convention.

The rank function $\rk_M:2^E\to \mathbb{Z}_+$ of a matroid $M=(E, \B)$ is given by
\[\rk_M(S)=\max\{\,|S\cap B|\mid B\in \B\,\}\quad (S\subseteq E).\]
It is also written with the independent set family $\I$ as $\rk_M(S)=\max\{\,|I|\mid I\in \I, I\subseteq S\,\}$.
\medskip

Below, we fix a finite set $E$ and write $n=|E|$.
For any ranking $\pi:[n]\to E$ and any integer $k\in [n]$, the
{\em Gale order on $\binom{E}{k}$ induced by $\pi$} is the partial order
$\preceq_\pi^{\mathrm G}$ on $\binom{E}{k}$ defined as follows.
For $A,B\in \binom{E}{k}$, write
\begin{align*}
A&=\{a_1,a_2,\dots,a_k\},
\quad a_1\prec_\pi a_2\prec_\pi \cdots \prec_\pi a_k,\\
B&=\{b_1,b_2,\dots,b_k\},
\quad b_1\prec_\pi b_2\prec_\pi \cdots \prec_\pi b_k,
\end{align*}
and define $\preceq_\pi^{\mathrm G}$ by
\[
A\preceq_\pi^{\mathrm G} B
\quad\Longleftrightarrow\quad a_i\preceq_\pi b_i\ (i=1,2,\dots, k).
\]
One reformulation of Gale's theorem \cite{Gale68} states that
a family $\mathcal{B}\subseteq \binom{E}{k}$ is the base family of a matroid on
$E$ if and only if, for every ranking $\pi$ over $E$, the family
$\mathcal{B}$ has a unique $\preceq_\pi^{\mathrm G}$-minimum member.

The unique $\preceq_\pi^{\mathrm G}$-minimum base is known to be computable by the {\em greedy algorithm}. For a matroid $M=(E,\B)$ and a ranking $\pi$, the greedy algorithm on $M$ with respect to $\pi$
works as follows: start with $X=\emptyset$, and for each $i=1,2,\dots,n$, if
$X\cup\{\pi(i)\}\in \I$, then update $X\gets X\cup\{\pi(i)\}$. The output is the final set $X$. Clearly, this algorithm requires $O(n)$ oracle calls.
The algorithm is also known to compute a minimum-weight base. We summarize these well-known facts in the following proposition; see \cite[Theorem~40.1]{Schrijver03} and also \cite{Gale68,Vince02}.

\begin{proposition}\label{prop:basic}
Let $M=(E, \B)$ be a matroid and $\pi:[n]\to E$ be a ranking.
The output of the greedy algorithm on $M$ with respect to $\pi$ coincides with
the unique $\preceq_\pi^{\mathrm G}$-minimum base $B^*\in \B$.
In addition, $B^*$ is a minimum-weight base with respect to any weight function compatible with $\pi$ (i.e., $w\in\mathbb{R}^E$ such that $e\prec_{\pi} f\implies w(e)\leq w(f)$).
\end{proposition}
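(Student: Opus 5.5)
We prove the two claims in order, using only the independence axioms (I1)--(I3) and the basis exchange property they imply. Throughout, let $X$ be the output of the greedy algorithm on $M$ with respect to $\pi$, and write $X=\{x_1,\dots,x_r\}$ with $x_1\prec_\pi\cdots\prec_\pi x_r$.

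\emph{Step 1: $X$ is a base.} By construction $X$ is independent. Suppose $X$ is not maximal, so $X\subsetneq I$ for some $I\in\I$. Pick $e\in I\setminus X$. When the algorithm processed $e$, it held a current set $X'\subseteq X$ with $X'\cup\{e\}\notin\I$. But $X'\cup\{e\}\subseteq I$, which contradicts (I2). Hence $X\in\B$ and $r=k$.

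\emph{Step 2: $X$ is the $\preceq^{\mathrm G}_\pi$-minimum base.} Let $B=\{b_1,\dots,b_k\}\in\B$ with $b_1\prec_\pi\cdots\prec_\pi b_k$. We show $x_i\preceq_\pi b_i$ for every $i$. Suppose instead that $b_i\prec_\pi x_i$ for some $i$. Let $X_{i-1}=\{x_1,\dots,x_{i-1}\}$ and $B_i=\{b_1,\dots,b_i\}$. Both are independent, and $|B_i|>|X_{i-1}|$. By (I3) there is $b_j\in B_i\setminus X_{i-1}$ with $X_{i-1}\cup\{b_j\}\in\I$. Since $b_j\preceq_\pi b_i\prec_\pi x_i$, the element $b_j$ was scanned before $x_i$. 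At that moment the current greedy set was some subset $Y\subseteq X_{i-1}$. Then $Y\cup\{b_j\}\subseteq X_{i-1}\cup\{b_j\}$ is independent by (I2), so the algorithm would have added $b_j$. This forces $b_j\in X$. Since $b_j$ precedes $x_i$, it must lie in $X_{i-1}$, which contradicts $b_j\notin X_{i-1}$.

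This shows $X\preceq^{\mathrm G}_\pi B$ for all $B\in\B$. Uniqueness of the minimum then follows from antisymmetry of the Gale order. This is the main step of the proof; the only delicate point is choosing prefixes of the right sizes so that (I3) yields an element scanned before $x_i$.

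\emph{Step 3: minimum weight.} Let $w$ be compatible with $\pi$ and let $B\in\B$. By Step 2, $x_i\preceq_\pi b_i$ for every $i$. Compatibility then gives $w(x_i)\le w(b_i)$ for every $i$. Summing over $i$ yields $w(X)\le w(B)$, so $X$ is a minimum-weight base. Once Step 2 is in place, this step is routine.
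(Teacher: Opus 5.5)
Your proof is correct and complete. The paper does not prove Proposition~\ref{prop:basic} itself. It records the statement as a well-known fact and refers to \cite[Theorem~40.1]{Schrijver03} and to Gale, so your argument fills in what the paper leaves to the literature.

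What you give is essentially Gale's classical argument. You apply (I3) to the greedy prefix $\{x_1,\dots,x_{i-1}\}$ and the base prefix $\{b_1,\dots,b_i\}$, which produces an element scanned before $x_i$ that greedy would have been forced to accept. The one implicit step is that the greedy set current when $b_j$ is scanned lies inside $\{x_1,\dots,x_{i-1}\}$. This holds because everything added so far precedes $b_j$, and hence precedes $x_i$.

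A slightly different route is closer to what the paper does in the proof of Proposition~\ref{prop:optimalflag}. There one observes that after scanning $\Top_j(\pi)$ the greedy set is a maximal independent subset of $\Top_j(\pi)$. Hence $|X\cap \Top_j(\pi)|=\rk_M(\Top_j(\pi))\ge |B\cap \Top_j(\pi)|$ for every base $B$ and every $j$, and this prefix-count domination is equivalent to $X\preceq_\pi^{\mathrm G} B$. Your version avoids the rank function and works directly from the axioms. The rank-function version makes the prefix structure explicit, which is what the paper then reuses for the flag case.

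Your last step deduces weight-minimality from the termwise domination $x_i\preceq_\pi b_i$. This is the cleanest way to obtain exactly the compatible-weight form that the paper later uses with $w=\pi^{-1}$ in Theorem~\ref{thm:CFR}.
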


\subsection{Flag Matroids}\label{sec:flag}
As in the previous subsection, $E$ is a finite set with $|E|=n$.
Let $\boldsymbol{k}=(k_1,\dots,k_s)$ be a strictly increasing sequence of integers with
$0\le k_1<\cdots<k_s\le n$.
A {\em flag of rank $\boldsymbol{k}$ on $E$} is a chain
\[
F_1\subsetneq F_2\subsetneq \cdots \subsetneq F_s \subseteq E
\qquad\text{with}\qquad |F_i|=k_i \ \ (i=1,2,\dots,s).
\]
We write $\F^{\boldsymbol{k}}_{E}$ for the set of all such flags.

A {\em flag matroid} of rank $\boldsymbol{k}$ on $E$ is a sequence $\bM=(M_1,\dots,M_s)$ of matroids on the same ground set $E$ such that
$M_i$ has rank $k_i$ for each $i\in[s]$ and $M_i$ is a quotient of $M_{i+1}$ for each $i\in[s-1]$.
For matroids $M$ and $N$ on the same ground set, $N$ is a
{\em quotient} of $M$ if
\[\rk_N(Y)-\rk_N(X)\leq \rk_M(Y)-\rk_M(X)\quad \text{for all $X\subseteq Y\subseteq E$}.\]
Various equivalent definitions of matroid quotient in terms of bases, circuits, and flats are known; see, e.g., \cite[Definition~1.1]{brandenburg2024quotients}, \cite[Proposition~7.4.7]{Bry86}, and \cite[Definition~44]{CameronDMS22}. If $\B_i$ denotes the base family of $M_i$, then the associated family of {\em flag bases} is
\[
\mathcal{F}(\bM)
=
\set{(B_1,\dots,B_s)| B_i\in \B_i \ (i=1,2,\dots,s),\ B_1\subsetneq B_2\subsetneq \cdots \subsetneq B_s}.
\]

As with ordinary matroids, flag matroids admit a characterization via Gale order. To state it, we extend the Gale order from $\binom{E}{k}$ to $\F^{\boldsymbol{k}}_{E}$. For a ranking $\pi$ over $E$ and any two flags
$\boldsymbol{F}=(F_1,\dots,F_s), \boldsymbol{G}=(G_1,\dots,G_s)\in \F^{\boldsymbol{k}}_{E}$, we define $\preceq_\pi^{\mathrm G}$ by
\[
\boldsymbol{F}\preceq_{\pi}^{\mathrm G} \boldsymbol{G}
\quad\Longleftrightarrow\quad
F_h\preceq_{\pi}^{\mathrm G} G_h
\ \ (h=1,2,\dots,s).
\]

Similarly to the case of matroids, a flag matroid is characterized by the existence of a minimum flag base with respect to Gale order \cite[Section~3.3 and Theorem~3.2]{BorovikGelfandStone02};
a nonempty family $\mathcal{F}\subseteq \F^{\boldsymbol{k}}_{E}$ is the
family of flag bases of a flag matroid on $E$ if and only if, for every ranking
$\pi$ over $E$, the family $\mathcal{F}$ has a unique
$\preceq_{\pi}^{\mathrm G}$-minimum member.

For algorithmic purposes, we provide here a constructive proof of the ``only if'' direction.

\begin{proposition}\label{prop:optimalflag}
For a flag matroid $\bM=(M_1,\dots,M_s)$ on $E$ and a ranking $\pi:[n]\to E$, the outputs $B_i^*~(i\in[s])$ of the greedy algorithm on $M_i$ with respect to $\pi$ form a chain $B_1^*\subsetneq B_2^*\subsetneq \cdots \subsetneq B_s^*$, and $(B_1^*, B_2^*, \cdots, B_s^*)$ is the unique $\preceq_{\pi}^{\mathrm G}$-minimum flag base.
\end{proposition}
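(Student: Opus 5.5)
Two things need proof: the chain property $B_i^*\subsetneq B_{i+1}^*$, and the minimality and uniqueness of $(B_1^*,\dots,B_s^*)$. Once the chain property is established, minimality follows from Proposition~\ref{prop:basic}. For any flag base $(B_1,\dots,B_s)\in\F(\bM)$ we have $B_h\in\B_h$, so $B_h^*\preceq^{\mathrm G}_\pi B_h$ for every $h$, and hence $(B_h^*)_h\preceq^{\mathrm G}_\pi(B_h)_h$. The chain property also shows that $(B_h^*)_h$ is itself a flag base, since the sizes are $k_1<\dots<k_s$ and so the inclusions are strict. Uniqueness is immediate because $\preceq^{\mathrm G}_\pi$ on flags is the componentwise product of partial orders, so it is antisymmetric. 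It therefore remains to show $B_i^*\subseteq B_{i+1}^*$ whenever $N=M_i$ is a quotient of $M=M_{i+1}$.

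The key step uses the fact that the greedy output can be read off from rank increments. For a matroid $M$ and the prefixes $P_j=\Top_j(\pi)$ (with $P_0=\emptyset$), the greedy algorithm adds $\pi(j)$ exactly when $\rk_M(P_j)-\rk_M(P_{j-1})=1$. I will verify this by induction: the current set $X_{j-1}$ is a maximal independent subset of $P_{j-1}$, so $|X_{j-1}|=\rk_M(P_{j-1})$. Then $X_{j-1}+\pi(j)$ is independent iff $\rk_M(P_j)>\rk_M(P_{j-1})$, by the standard fact that all maximal independent subsets of a set have size equal to its rank, which follows from (I3). So
\[
B^*_M=\set{\pi(j) | \rk_M(P_j)-\rk_M(P_{j-1})=1}.
\]
Applying the quotient inequality with $X=P_{j-1}\subseteq Y=P_j$ gives $\rk_N(P_j)-\rk_N(P_{j-1})\le \rk_M(P_j)-\rk_M(P_{j-1})$. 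Therefore, if $\pi(j)\in B^*_N$ then $\pi(j)\in B^*_M$, which shows $B_i^*\subseteq B_{i+1}^*$.

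I expect no real obstacle. The only care point is justifying the rank-increment description of greedy, in particular that greedy maintains a maximal independent subset of each prefix. This is standard and I will state it briefly with the (I3) argument. The argument also yields an algorithm, since computing each $B_i^*$ takes $O(n)$ oracle calls, which is the constructive content claimed.
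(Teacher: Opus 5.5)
Your proposal is correct and follows essentially the same route as the paper. Both proofs describe the greedy output through the rank increments on the prefixes $\Top_j(\pi)$, apply the quotient inequality to consecutive prefixes to get $B_i^*\subseteq B_{i+1}^*$, and then obtain minimality and uniqueness componentwise from Proposition~\ref{prop:basic}. The paper phrases the first step as running greedy on all $M_i$ simultaneously and checking the chain property by induction, but the argument is the same.
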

\begin{proof}
For each $i\in[s]$, let $\I_i$ denote the independent set family of $M_i$. Consider running the greedy algorithm on all $M_i$ simultaneously. Start with $(X_1,\dots, X_s)=(\emptyset,\dots, \emptyset)$, and scan $\pi(j)$ for $j=1,2,\dots,n$ sequentially as follows: for each $i\in [s]$, if $X_i\cup\{\pi(j)\}\in \I_i$, then update $X_i\gets X_i\cup\{\pi(j)\}$. After scanning the last element $\pi(n)$, output the final set $(X_1,\dots, X_s)$. Clearly, each $X_i$ in the output coincides with $B_i^*$, since its construction is exactly the usual greedy algorithm on $M_i$. We now show by induction that $(X_1,\dots, X_s)$ is always a chain during the algorithm.

The claim clearly holds before the algorithm scans $\pi(1)$. For any $j\in[n]$, suppose that the claim holds just before the algorithm scans $\pi(j)$. Observe that, just before $\pi(j)$ is scanned, for any $i\in [s]$, the set $X_i$ is an inclusion-wise maximal independent subset of $\Top_{j-1}(\pi)=\{\pi(1),\dots, \pi(j-1)\}$ in $M_i$ (this follows from (I2)). Therefore, $X_i\cup\{\pi(j)\}\in \I_i$ holds if and only if $\rk_{M_i}(\Top_{j}(\pi))-\rk_{M_i}(\Top_{j-1}(\pi))=1$ (this follows from (I3) and the definition of $\rk_{M_i}$). By repeated application of the quotient property, for any $i'>i$, if $\rk_{M_i}(\Top_{j}(\pi))-\rk_{M_i}(\Top_{j-1}(\pi))=1$, then $\rk_{M_{i'}}(\Top_{j}(\pi))-\rk_{M_{i'}}(\Top_{j-1}(\pi))=1$, which is equivalent to $X_{i'}\cup\{\pi(j)\}\in \I_{i'}$. Therefore, whenever $\pi(j)$ is added to $X_i$, it is also added to $X_{i'}$ for all $i'>i$.
Thus, $(X_1,\dots, X_s)$ remains a chain.

By Proposition~\ref{prop:basic}, for each $i\in[s]$, the set $B_i^*$ is the unique $\preceq_{\pi}^{\mathrm G}$-minimum base of $M_i$. Hence, for any flag base $(B_1,\dots,B_s)\in \mathcal{F}(\bM)$, we have $B_i^*\preceq_{\pi}^{\mathrm G} B_i$ for all $i\in[s]$, and hence $(B_1^*,\dots,B_s^*)$ is the unique $\preceq_{\pi}^{\mathrm G}$-minimum flag base.
\end{proof}

\subsection{Examples of Matroids and Flag Matroids}\label{sec:examples}
We present three classes of matroids relevant to constrained ranking and aggregation and identify sufficient conditions under which members of the same class are in a quotient relation.

The first example directly captures $k$-fairness; see \cite[Theorem~12]{HorschSzigeti21} and \cite[p.~182]{Frankbook}.
\begin{example}[Generalized Partition Matroid]\label{ex:g-partition}
Let $E=G_1\sqcup \cdots \sqcup G_g$ be a partition, let $\boldsymbol{\ell}=(\ell_i)_{i\in [g]}\in \mathbb{Z}_+^g$ and
$\boldsymbol{u}=(u_i)_{i\in [g]}\in \mathbb{Z}_+^g$ with $0\le \ell_i \le u_i \le |G_i|~(i\in[g])$, and let $k\in \mathbb{Z}_{+}$ satisfy $\sum_{i=1}^g \ell_i \le k \le \sum_{i=1}^g u_i$.
Then the family
\[
\B_{k,\boldsymbol{\ell},\boldsymbol{u}}
=
\set{B\subseteq E |~ |B|=k,\ \ \ell_i \le |B\cap G_i|\le u_i
\ \text{for all } i\in[g]\,}
\]
is the base family of a matroid, called a {\em generalized partition matroid}.
\end{example}

A usual partition matroid is the case $\ell_i=0~(i\in[g])$ and $k=\sum_{i\in[g]} u_i$.

The $k$-fairness condition in Section~\ref{sec:prelimi} is captured by a generalized partition matroid: if we set $\ell^*_i=\lfloor \alpha_i k \rfloor$ and $u^*_i=\lceil \beta_i k \rceil$ for each $i\in[g]$, then a ranking $\pi$ over $E$ is $(\boldsymbol{\alpha},\boldsymbol{\beta})$-$k$-fair if and only if $\Top_k(\pi)\in \B_{k,\boldsymbol{\ell}^*, \boldsymbol{u}^*}$, provided that this family is nonempty.

The next proposition gives a sufficient condition for quotient relations between generalized partition matroids, and hence for obtaining flag matroids. The proof is in Appendix~\ref{app:quotient}.
\begin{restatable}[$\star$]{proposition}{PropPartitionQuotient}\label{prop:quotient-partition}
Let $N$ and $M$ be generalized partition matroids defined from the same partition $E=G_1\sqcup \cdots \sqcup G_g$, but with different parameters $(k, \boldsymbol{\ell}, \boldsymbol{u})$ and $(k', \boldsymbol{\ell}', \boldsymbol{u}')$, respectively, where $k<k'$ and neither $\B_{k,\boldsymbol{\ell},\boldsymbol{u}}$ nor $\B_{k',\boldsymbol{\ell}',\boldsymbol{u}'}$ is empty.
Then $N$ is a quotient of $M$ if
\[\ell_i\leq \ell'_i\leq \lfloor\tfrac{k'}{k}\ell_i\rfloor,
\qquad
u_i\leq u'_i\leq \lfloor\tfrac{k'}{k}u_i \rfloor
\qquad
\text{for all } i\in[g].
\]
\end{restatable}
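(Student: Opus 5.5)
The plan is to reduce the quotient inequality to single-element increments and then compare the two matroids through an explicit rank formula. By telescoping along a chain $X=Z_0\subset Z_1\subset\cdots\subset Z_r=Y$ that adds one element at a time, and using that both rank functions increase by $0$ or $1$ per element, it suffices to prove the following. For every $X\subseteq E$ and $e\in E\setminus X$,
\[
\rk_N(X\cup\{e\})-\rk_N(X)=1 \;\Longrightarrow\; \rk_M(X\cup\{e\})-\rk_M(X)=1 .
\]

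\emph{Step 1: a rank formula.} Write $x_i=|X\cap G_i|$ and $c_i=\min(x_i,u_i)$. A set $I$ with counts $y_i=|I\cap G_i|$ is independent in $N$ iff $y_i\le u_i$ for all $i$ and $\sum_i \max(y_i,\ell_i)\le k$. The second condition says that $I$ can be padded up to the lower quotas and then to size $k$; this is where nonemptiness of $\B_{k,\boldsymbol{\ell},\boldsymbol{u}}$ is used. Maximizing $\sum_i y_i$ subject to $y_i\le c_i$ gives
\[
\rk_N(X)=\min\Bigl(\textstyle\sum_i c_i,\; k-\sum_i(\ell_i-c_i)^+\Bigr).
\]
The same formula holds for $M$ with primed parameters.

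\emph{Step 2: characterize an increment.} From Step 1, adding $e\in G_j$ raises $\rk_N$ exactly when $x_j<u_j$ and, in addition, either $x_j<\ell_j$ or $\sum_i\max(c_i,\ell_i)<k$. The second alternative says that $X\cap\bigcup_i G_i$, truncated at the upper quotas, still leaves slack below $k$ after the lower quotas are reserved.

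\emph{Step 3: transfer the increment to $M$.} Assume the $N$-increment conditions hold. The conditions $x_j<u_j\le u'_j$ and (in the first alternative) $x_j<\ell_j\le\ell'_j$ transfer immediately, using the left halves of the hypotheses. The remaining case is $\ell'_j\le x_j<u_j$ together with $\sum_i\max(c_i,\ell_i)<k$, where we must show $\sum_i\max(\min(x_i,u'_i),\ell'_i)<k'$. I would first try a direct termwise bound. Since $\ell'_i\le\lfloor\tfrac{k'}{k}\ell_i\rfloor$ and $u'_i\le\lfloor\tfrac{k'}{k}u_i\rfloor$, I would aim for
\[
\max(\min(x_i,u'_i),\ell'_i)\le \tfrac{k'}{k}\max(c_i,\ell_i)
\]
whenever $c_i<u_i$. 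That is, for groups not saturated in $N$, the $M$-term is at most $x_i$ or $\ell'_i$, and both are controlled.

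\emph{Main obstacle.} Saturated groups ($x_i\ge u_i$) break this termwise bound, since $\min(x_i,u'_i)$ can exceed $u_i$ by up to $u'_i-u_i$. I expect this to be the hardest step. If the termwise approach fails, I would instead argue with explicit bases rather than rank sums. Start from a maximal $N$-independent $I\subseteq X$ with $I\cup\{e\}$ independent, extend it to an $N$-base $B\supseteq I\cup\{e\}$, and build an $M$-base $B'\supseteq B$ by adding $k'-k$ further elements while respecting the quotas $\ell'_i\le|B'\cap G_i|\le u'_i$. The scaling hypotheses ensure there is enough room for this; the argument is a flow or Hall-type count over the groups. I would then use the basis-exchange form of the quotient relation to conclude that $e$ is not spanned by $X$ in $M$.
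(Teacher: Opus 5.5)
Your Steps 1--3 are correct, and the direct termwise bound already finishes the proof. The ``main obstacle'' is not real. Take a saturated group ($x_i\ge u_i$). Then $c_i=u_i$, so $\max(c_i,\ell_i)=u_i$. Also $\max(\min(x_i,u'_i),\ell'_i)\le u'_i$, because $\ell'_i\le u'_i$ by the definition of a generalized partition matroid. The hypothesis $u'_i\le\lfloor\tfrac{k'}{k}u_i\rfloor$ then gives exactly $u'_i\le \tfrac{k'}{k}\max(c_i,\ell_i)$. You compared $\min(x_i,u'_i)$ with $u_i$ instead of with $\tfrac{k'}{k}u_i$. So the termwise inequality holds for every $i$, and summing gives
\[
\sum_i \max(\min(x_i,u'_i),\ell'_i)\;\le\;\tfrac{k'}{k}\sum_i\max(c_i,\ell_i)\;\le\;\tfrac{k'}{k}(k-1)\;<\;k',
\]
which is precisely the $M$-increment condition.

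You should drop the fallback, because as stated it does not work. Exhibiting an $M$-base that contains $I\cup\{e\}$, for an $N$-basis $I$ of $X$, only shows that $I\cup\{e\}$ is $M$-independent. It does not show $e\notin\mathrm{cl}_M(X)$, since an $M$-basis of $X$ can be strictly larger than $I$. Invoking the exchange form of the quotient relation is also circular unless you verify that exchange property directly.

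With this fix, your argument is a valid proof along a different route from the paper's. The paper never computes ranks. Instead it verifies the base-exchange characterization (Q) of quotients (Theorem~\ref{thm:base_quotient}, imported from Brandenburg--Loho--Smith). Given $X\in\B(N)$, $Y\in\B(M)$ and $x\in X\setminus Y$, it exchanges inside $x$'s group when possible. Otherwise it finds, by contradiction, a group $j$ with $|X_j|<|Y_j|$, $\ell'_j<|Y_j|$ and $|X_j|<u_j$. If no such $j$ existed, then $|Y_i|\le\lfloor\tfrac{k'}{k}|X_i|\rfloor$ for all $i$, strictly for $x$'s group, and summing contradicts $|Y|=k'$.

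The core is the same scaled counting inequality summed over groups. The paper's cases (a)--(c) correspond to your unsaturated, lower-quota, and saturated cases. Your version has two advantages: it works straight from the rank-function definition of quotient, needing no external exchange characterization, and it yields an explicit rank formula. The cost is deriving that formula, which uses $\sum_i\ell_i\le k\le\sum_i u_i$. The paper's exchange template, in turn, is what it reuses for the laminar case (Proposition~\ref{prop:quotient-laminar}), where a closed-form rank would be less convenient.
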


Consequently, any sequence $M_1,\dots,M_s$ such that each pair $(M_j,M_{j+1})$ satisfies Proposition~\ref{prop:quotient-partition} forms a flag matroid. In particular, block-$k$-fairness is a special case; as we formally prove in the appendix as Corollary~\ref{cor:block-flag}, if we define a flag matroid $\bM=(M_1, \dots, M_s)$ and its rank $(h_1, \dots, h_s)$ appropriately, we have that a ranking $\pi$ is $(\boldsymbol{\alpha},\boldsymbol{\beta})$-block-$k$-fair if and only if $(\Top_{h_1}(\pi), \dots, \Top_{h_s}(\pi))\in \F(\bM)$.

\medskip
We next give two more general examples.

One is a generalized transversal matroid, which is the size-$k$ truncation of the generalized matroid in \cite[Example~2]{yoko17gpoly}. It models assignment-type feasibility constraints. For example, the top-$k$ prefix of a consensus ranking may have to be assignable to eligible departments, tracks, or reserved positions under capacity or reserve
requirements. A related reserve-graph formulation uses transversal matroids to encode eligibility for reserved positions~\cite{DoganImamuraYenmez25}; our generalized version further allows lower and upper quotas for each position type.
\begin{example}[Generalized Transversal Matroid]\label{example-transversal}
Let $E$ and $D$ be disjoint finite sets, let $\Gamma:D\to 2^E$, let
$\boldsymbol{p}=(p_d)_{d\in D}\in \mathbb{Z}_+^D$ and
$\boldsymbol{q}=(q_d)_{d\in D}\in \mathbb{Z}_+^D$ with $0\leq p_d\leq q_d\leq |\Gamma(d)|~(d\in D)$, and let $k\in \mathbb{Z}_+$. Define
\[
\B^{\rm tv}_{k,\boldsymbol{p},\boldsymbol{q}}
=
\left\{\,
B\subseteq E
\ \middle|\
\begin{array}{l}
|B|=k,\ \exists \varphi:B\to D \text{ such that}\\
\qquad \varphi^{-1}(d)\subseteq \Gamma(d),\
p_d\le |\varphi^{-1}(d)|\le q_d
~~\,\text{for all}~d\in D
\end{array}
\right\}.
\]
If nonempty, $\B^{\rm tv}_{k,\boldsymbol{p},\boldsymbol{q}}$ is the base family of a matroid, called a {\em generalized transversal matroid}.
\end{example}
When $p_d=0$ and $q_d=1$ for all $d\in D$, this reduces to the size-$k$ truncation of the usual {\em transversal matroid}.

We give a sufficient condition for two generalized transversal matroids defined from the same bipartite structure to have a quotient relation.
\begin{restatable}[$\star$]{proposition}{PropTransversalQuotient}\label{prop:quotient-transversal}
Let $N$ and $M$ be generalized transversal matroids on $E$ defined with the
same $D$, $\Gamma$, but with different parameters
$(k,\boldsymbol{p},\boldsymbol{q})$ and
$(k',\boldsymbol{p}',\boldsymbol{q}')$, respectively, where $k<k'$ and neither
$\B^{\rm tv}_{k,\boldsymbol{p},\boldsymbol{q}}$ nor
$\B^{\rm tv}_{k',\boldsymbol{p}',\boldsymbol{q}'}$ is empty.
Then $N$ is a quotient of $M$ if
\[
p_d\le p'_d\le \lfloor\tfrac{k'}{k}\,p_d\rfloor,
\qquad
q_d\le q'_d\le \lfloor\tfrac{k'}{k}\,q_d\rfloor
\qquad
\text{for all } d\in D.
\]
\end{restatable}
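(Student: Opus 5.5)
The plan is to work with the rank functions directly and use a standard local form of the quotient condition. Summing increments along a chain from $X$ to $Y$ shows that $N$ is a quotient of $M$ if and only if, for every $X\subseteq E$ and $e\in E\setminus X$,
\[
\rk_M(X\cup\{e\})=\rk_M(X)\ \Longrightarrow\ \rk_N(X\cup\{e\})=\rk_N(X).
\]
Equivalently, every flat of $M$ is a flat of $N$. So the goal is to show that whenever $e$ is spanned by $X$ in $M$, it is spanned by $X$ in $N$.

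\textbf{Step 1: a min--max rank formula.} First I will derive a formula for $\rk$ of a generalized transversal matroid with parameters $(k,\boldsymbol{p},\boldsymbol{q})$. Model it as a flow network $s\to E\to D\to t$ with unit capacities on the $E$-arcs, arcs $e\to d$ for $e\in\Gamma(d)$, and bounds $[p_d,q_d]$ on the arcs $d\to t$. A set $I$ is independent iff some feasible flow of value $k$ uses every element of $I$. By Hoffman/Gale-type feasibility, or the g-polymatroid description in \cite{yoko17gpoly}, $\rk(X)$ is the minimum of $k$ and an expression of the form
\[
\min_{D_1,D_2}\Bigl(\sum_{d\in D_1} q_d \;+\; |X\setminus \Gamma(D_1)| \;-\; \text{(a lower-quota correction involving } \textstyle\sum_{d\in D_2}p_d \text{ and } k)\Bigr).
\]
I will pin this formula down concretely before using it, because the whole argument rests on it.

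\textbf{Step 2: when $e$ is spanned.} With the formula in hand, $\rk_M(X\cup\{e\})=\rk_M(X)$ means one of two things:
\begin{itemize}
\item[(a)] $\rk_M(X)=k'$, i.e.\ $X$ is spanning in $M$; or
\item[(b)] a minimizing certificate $(D_1,D_2)$ for $X$ in $M$ has $e$ covered, i.e.\ $e\in\Gamma(D_1)$, with the upper-quota part tight.
\end{itemize}
In case (b) I will show that the same certificate, read with the parameters $(k,\boldsymbol{p},\boldsymbol{q})$, is tight for $N$. The inequalities $q_d\le q'_d$ and $p_d\le p'_d$ go in the right direction for the upper-quota terms. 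The inequalities $q'_d\le \lfloor\tfrac{k'}{k}q_d\rfloor$ and $p'_d\le\lfloor\tfrac{k'}{k}p_d\rfloor$ say that the $N$-constraints are the $M$-constraints scaled down by at most the factor $k/k'$. So whenever the $k'$-truncation or a quota in $M$ is the binding obstruction, the corresponding scaled obstruction binds no later in $N$. This mirrors the partition case, Proposition~\ref{prop:quotient-partition}, which I take as a sanity check: it is the special case where the sets $\Gamma(d)$ are disjoint and $\varphi$ is forced. In case (a), I will show that $X$ is also spanning in $N$. For this I take an $M$-base inside $X$ with its assignment $\varphi'$ and shrink it to an $N$-base. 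Proportional scaling with $k/k'$ and rounding keeps each class size between $p_d$ and $q_d$, which is exactly what the floor bounds allow.

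\textbf{Main obstacle.} I expect the hard step to be the interaction of the lower quotas with the truncation to size $k$. With lower quotas, an element can fail to be addable not because it lacks capacity but because adding it would make it impossible to still meet the quotas $p_d$ within the remaining budget $k-|I|$. The floors in the hypotheses must be used precisely here. If the certificate comparison in case (b) becomes unwieldy, my fallback is a constructive exchange argument on assignments. Starting from a maximum $N$-independent subset $I$ of $X$ and an assignment $\varphi$, I would use the $M$-certificate that $e$ is spanned to produce an alternating path in the bipartite graph $(E,D;\Gamma)$. I would then argue that any augmenting path for $I\cup\{e\}$ in $N$ would lift, after scaling the quotas by $k'/k$, to one for the corresponding set in $M$, contradicting that $e$ is spanned in $M$.
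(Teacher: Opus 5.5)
There is a genuine gap, and it sits in case (b), which is where all the content of the statement lies. Your reduction to the local condition (every flat of $M$ is a flat of $N$) is correct. Case (a) also goes through: if $Y\subseteq X$ is an $M$-base with class sizes $y_d$, then $p_d\le p'_d\le y_d$ and $\min(q_d,y_d)\ge\tfrac{k}{k'}y_d$, so $\sum_d p_d\le k\le\sum_d\min(q_d,y_d)$, and hence $Y$ contains an $N$-base.

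In case (b), however, you have neither the rank formula nor a transfer argument.

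\emph{The formula's shape is off.} In the transversal rank formula $\min_{D_1}\bigl(q(D_1)+|X\cap\Gamma(D\setminus D_1)|\bigr)$, with $q(D_1)=\sum_{d\in D_1}q_d$, an element is absorbed only if \emph{all} its neighbours lie in $D_1$. So ``$e$ covered'' must mean $\{d: e\in\Gamma(d)\}\subseteq D_1$, not $e\in\Gamma(D_1)$.

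\emph{The transfer claim is false.} It is not true that an $M$-tight certificate, read with $(k,\boldsymbol{p},\boldsymbol{q})$, is tight for $N$. Take $D=\{d_1,d_2,d_3\}$, $\Gamma(d_1)=\{x,e\}$, $\Gamma(d_2)=\{y_1,y_2,y_3\}$, $\Gamma(d_3)=\{z_1,\dots,z_6\}$, $\boldsymbol{p}=\boldsymbol{p}'=\boldsymbol{0}$, $\boldsymbol{q}=(1,1,2)$, $k=3$, $\boldsymbol{q}'=(1,3,6)$, $k'=9$; all hypotheses hold. For $X=\{x,y_1,y_2,y_3\}$ we get $\rk_M(X)=\rk_M(X\cup\{e\})=4<k'$. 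The set $D_1=\{d_1\}$ is an $M$-tight certificate covering $e$, with value $1+3$. With $\boldsymbol{q}$ it still has value $4$, whereas $\rk_N(X)=2<k$. Here $e$ is spanned in $N$ only through the different certificate $\{d_1,d_2\}$.

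\emph{What a repair would need.} For upper quotas alone this is fixable. Unite the $M$-certificate with an $N$-minimizer, using submodularity of $g(D_1)=q(D_1)+|X\cap\Gamma(D\setminus D_1)|$ and monotonicity of $g_M-g_N=(\boldsymbol{q}'-\boldsymbol{q})(D_1)$. But you would have to supply this, and it does not touch the lower-quota/truncation obstruction, which you flag and leave open. That obstruction is a bound of the form $\rk(X)\le k-p(D_2)+|X\cap\Gamma(D_2)|$, since every base meets $\Gamma(D_2)$ in at least $p(D_2)$ elements. The floors give $k-p(D_2)\le\tfrac{k}{k'}\bigl(k'-p'(D_2)\bigr)$, but $|X\cap\Gamma(D_2)|$ does not scale. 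So ``the scaled obstruction binds no later in $N$'' is not a termwise comparison and needs a real proof.

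\emph{The fallback is not yet an argument.} $N$-independent sets have size at most $k$ and $M$-independent ones up to $k'$. You give no correspondence under which an $N$-augmenting path ``lifts'' to one in $M$.

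The missing idea, which the paper uses, avoids rank formulas entirely. Blow $E$ up to $\widehat{E}=E\times D$ with classes $\widehat{E}_d=E\times\{d\}$. The generalized partition matroids on $\widehat{E}$ with parameters $(k,\boldsymbol{p},\boldsymbol{q})$ and $(k',\boldsymbol{p}',\boldsymbol{q}')$ satisfy the hypotheses of Proposition~\ref{prop:quotient-partition} verbatim, so the first is a quotient of the second. Inducing both through the bipartite graph joining $e$ to $(e,d)$ whenever $e\in\Gamma(d)$ gives exactly $\B^{\rm tv}_{k,\boldsymbol{p},\boldsymbol{q}}$ and $\B^{\rm tv}_{k',\boldsymbol{p}',\boldsymbol{q}'}$, since matchings correspond to assignments $\varphi$. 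Induction preserves quotients by Lemma~\ref{lem:quotient-induction}. You observed that the partition case is the special case of disjoint $\Gamma(d)$. The point is the converse: every generalized transversal matroid is induced from a generalized partition matroid, and induction carries the quotient relation along.
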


We postpone the proof of Proposition~\ref{prop:quotient-transversal} to Appendix~\ref{app:quotient} and describe only its main ingredient here. The proof is based on induction by a bipartite graph, an operation that constructs a new matroid from a given one. Brandenburg--Loho--Smith~\cite{brandenburg2024quotients} showed that induction preserves quotient relations. Proposition~\ref{prop:quotient-transversal} then follows by applying Proposition~\ref{prop:quotient-partition} to suitable generalized partition matroids and then using induction.

A second example is what we call a generalized laminar matroid. It is the
fixed-cardinality restriction of the laminar lower/upper-quota family, which is
a generalized matroid~\cite[Example~1]{yoko17gpoly}. Unlike the
partition-based constraints, these constraints allow nested groups, so one
can impose quotas simultaneously on broad groups and their subgroups. Similar
hierarchical constraints have appeared in fairness-motivated committee selection
and ranking. Bredereck et al.~\cite{BredereckFILS18} study committee selection
with label-wise cardinality constraints and explicitly discuss laminar label
structures such as countries and continents; Gorantla et al.~\cite{GorantlaMDL23}
consider laminar protected groups in fair ranking.

\begin{example}[Generalized Laminar Matroid]\label{example-laminar}
Let $E$ be a finite set, $\L\subseteq 2^E$ be a laminar family,
$g,f:\L\to \mathbb{Z}_{+}$ be functions with $g(L)\leq f(L)$ for all $L\in \L$,
and $k\in \mathbb{Z}_+$. Define
\[
\B^{\rm lam}_{k,g,f}
=\set{ B\subseteq E| ~|B|=k,~~g(L)\leq |B\cap L|\leq f(L) \text{~for all~} L\in \L}.
\]
If nonempty, $\B^{\rm lam}_{k,g,f}$ forms the base family of a matroid, called a {\em generalized laminar matroid}.
\end{example}
When $g(L)=0$ for all $L\in \L$, this reduces to a size-$k$ truncation of the usual {\em laminar matroid}.

We give a sufficient condition for two generalized laminar matroids defined on the same laminar family to have a quotient relation.
\begin{restatable}[$\star$]{proposition}{PropLaminarQuotient}\label{prop:quotient-laminar}
Suppose that generalized laminar matroids $N$ and $M$ are defined with the
same laminar family $\L$, but with different parameters
$(k,g,f)$ and
$(k',g',f')$, respectively, where $k<k'$ and neither
$\B^{\rm lam}_{k,g,f}$ nor
$\B^{\rm lam}_{k',g',f'}$ is empty.
Then $N$ is a quotient of $M$ if
\[
g'(L)=\lfloor \tfrac{k'}{k}\,g(L)\rfloor,
\qquad
f'(L)=\lfloor \tfrac{k'}{k}\,f(L)\rfloor
\qquad
\text{for all } L\in \L.
\]
\end{restatable}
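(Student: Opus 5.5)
The plan is to use a base-exchange characterization of quotients rather than the rank inequality directly. Among the equivalent definitions cited above, a convenient one is: $N$ is a quotient of $M$ if and only if every base of $N$ extends to a base of $M$, and every base of $M$ contains a base of $N$, and the two families are linked by an exchange condition. An alternative is the rank criterion $\rk_N(Y)-\rk_N(X)\le\rk_M(Y)-\rk_M(X)$ for $X\subseteq Y$. Since both families are described by linear inequalities on a laminar family, the rank functions have closed-form min-formulas, and the criterion can be checked directly.

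\textbf{Step 1: rank formula.} First derive a formula for $\rk$ of a generalized laminar matroid. Because the constraint matrix of a laminar family together with the cardinality row is totally unimodular (it is a network matrix), $\rk_{N}(S)$ is the largest $|I|$ with $I\subseteq S$ such that $I$ extends to some $B\in\B^{\rm lam}_{k,g,f}$. By TU, this maximum equals the maximum of $|x(S)|$ over integer points $x\in\{0,1\}^E$ with $g(L)\le x(L)\le f(L)$ and $x(E)=k$. LP duality then gives $\rk_N(S)$ as a minimum over dual objects: subfamilies of $\L$ with signs, i.e.\ which upper bounds $f(L)$ are tight, which lower bounds $g(L)$ are used on the complement, and the bound $k$.

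\textbf{Step 2: reduce to single-element increments.} It suffices to prove the quotient inequality for $Y=X\cup\{e\}$, which we then chain. Equivalently, we show that if $e$ is a coloop of $N|_{X\cup e}$ relative to $X$ (i.e.\ $\rk_N$ increases), then $\rk_M$ increases as well. This is exactly the property used in the proof of Proposition~\ref{prop:optimalflag}. Contrapositively: suppose $e\in\mathrm{cl}_M(X)$, witnessed by a tight dual certificate for $M$, namely a set $L\ni e$ with $|I\cap L|=f'(L)$ forced, or a lower-bound/cardinality tightness. We then produce a tight certificate for $N$.

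\textbf{Step 3: transfer certificates by scaling.} Here we use $g'=\lfloor \tfrac{k'}{k}g\rfloor$ and $f'=\lfloor \tfrac{k'}{k}f\rfloor$ with $k'/k>1$. The idea is that the constraints of $M$ are the scaled constraints of $N$. A blocking configuration for $M$ is a combination of upper bounds on some sets, lower bounds on disjoint sets, and total size $k'$; it forces $x(L)=f'(L)$ on some chain. Multiplying by $k/k'$ and using the floors, the same combination is at least as tight for $N$. The key inequality needed is
\[
\left\lfloor \tfrac{k'}{k}a\right\rfloor-\left\lfloor \tfrac{k'}{k}b\right\rfloor\;\ge\; a-b
\]
whenever $a\ge b$, which holds because $\tfrac{k'}{k}\ge1$. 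It is applied to differences such as $k-\sum g$ versus $k'-\sum g'$ and $f(L)-\sum_{\text{children}} g$, which are exactly the slacks governing the rank increment. This mirrors the expected proof of Proposition~\ref{prop:quotient-partition}, which is the one-level case.

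\textbf{Main obstacle.} The hard part is Step 3 for nested sets. The slack at a set $L$ depends recursively on lower bounds of its children, so we need superadditivity-type inequalities such as $\lfloor c f(L)\rfloor-\sum_j\lfloor c\, g(L_j)\rfloor\ge f(L)-\sum_j g(L_j)$ with $c=k'/k$. These hold because $\lfloor c a\rfloor-a$ is nondecreasing in integer $a$ and $f(L)\ge\sum_j g(L_j)$ by nonemptiness. If this direct route gets messy, I would fall back on the same strategy the paper uses for Proposition~\ref{prop:quotient-transversal}. That strategy is to represent the laminar matroid as an induction of a generalized partition matroid via a bipartite/tree construction, apply Proposition~\ref{prop:quotient-partition} level by level, and invoke the result of Brandenburg--Loho--Smith that induction preserves quotients.
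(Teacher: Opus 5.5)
Your Steps 1 and 2 are sound: the TU rank formula is correct, and it is legitimate to reduce the quotient condition to $e\in\mathrm{cl}_M(X)\Rightarrow e\in\mathrm{cl}_N(X)$. There is, however, a genuine gap. Step 3 carries the whole content of the proposition, it is only asserted, and the mechanism you describe does not work.

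A tight certificate for $\rk_M(X)$ concerns an $M$-maximal independent set $I\subseteq X$. What you need is a saturated constraint containing $e$ for an $N$-maximal independent set $J\subseteq X$. These are different sets, and their intersections with members of $\mathcal{L}$ do not scale by $k/k'$ (e.g.\ if $X$ is independent in both matroids, then $I=J=X$). Likewise, an optimal dual certificate for $M$ at $X$ need not be optimal for $N$.

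Moreover, the comparisons you single out, $k'-\sum g'\ge k-\sum g$ and $f'(L)-\sum_j g'(L_j)\ge f(L)-\sum_j g(L_j)$, only say that $M$ has at least as much slack as $N$. They use only $f'\ge\lfloor cf\rfloor$ and $g'\le\lfloor cg\rfloor$ (with $c=k'/k$), and that half of the hypothesis does not suffice. Take $E=\{a,b,c\}$ and $\mathcal{L}=\{\{a\},\{b,c\}\}$.
Let $N$ have $k=1$, $g(\{a\})=f(\{a\})=1$, $g(\{b,c\})=0$, $f(\{b,c\})=2$, so $\mathcal{B}(N)=\{\{a\}\}$.
Let $M$ have $k'=2$, $f'=\lfloor 2f\rfloor$, $g'\equiv 0$, so $M=U_{2,3}$.
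All these slack comparisons hold, yet $\rk_N(E)-\rk_N(\{b,c\})=1>0=\rk_M(E)-\rk_M(\{b,c\})$. The cardinality constraint blocks $a$ in $M$ at $I=\{b,c\}$. In $N$, however, the maximal $J\subseteq\{b,c\}$ is $\emptyset$, because the lower quota on $\{a\}$ consumes the whole budget, so the ``scaled'' constraint is not tight. Any valid transfer must also use the reverse inequalities $g'\ge\lfloor cg\rfloor$ and $f'\le\lfloor cf\rfloor$ to control what limits $J$ elsewhere, and your plan does not say how.

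The paper avoids rank formulas altogether and verifies the exchange criterion (Q) of Theorem~\ref{thm:base_quotient}. This places an $N$-object $X\in\mathcal{B}(N)$ and an $M$-object $Y\in\mathcal{B}(M)$ side by side, and it compares $|Y\cap L|$ with $\lfloor c|X\cap L|\rfloor$ on the tree of $\mathcal{L}$, normalized so that $E\in\mathcal{L}$ and every non-leaf member is partitioned by its children.
\begin{itemize}
\item If $(Y\setminus X)\cap L^x\neq\emptyset$ for the leaf $L^x\ni x$, any $y$ there works. Otherwise $|Y\cap L^x|<\lfloor c|X\cap L^x|\rfloor$.
\item The proof climbs to the first ancestor $L^*$ with $|Y\cap L^*|\ge\lfloor c|X\cap L^*|\rfloor$; the root gives equality, so $L^*$ exists.
\item It then descends, via superadditivity of the floor, to a leaf $L^y$ along which the comparison is strictly ``$>$'', and exchanges $x$ with any $y\in(Y\setminus X)\cap L^y$.
\item Both equalities are used in both directions to show that the quotas the exchange could violate are slack along the two paths.
\end{itemize}

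Your fallback does not close the gap either. Lemma~\ref{lem:quotient-induction} needs both matroids to be induced through the same graph. Since each element is routed to a single class of the far-side partition matroid, nested quotas on internal members of $\mathcal{L}$ would have to be built into the graph as capacities, and these differ between $(g,f)$ and $(g',f')$. Consistently, the paper remarks that, unlike Proposition~\ref{prop:quotient-partition}, the equalities here cannot be relaxed to inequalities.
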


We remark that, in contrast to the partition case, the condition in Proposition~\ref{prop:quotient-laminar} cannot in general be weakened to inequalities analogous to those in Proposition~\ref{prop:quotient-partition}.

\begin{remark}
For all matroids in this subsection, the independence oracle can be implemented from the defining structure and parameters using network flow algorithms. In particular, given any ranking $\pi$, the $\preceq_\pi^{\mathrm G}$-minimum base can be computed efficiently; see, e.g., \cite[Appendix~C]{yoko17gpoly}.
\end{remark}

Other sources of flag matroids are also available. For partition matroids, i.e., partition constraints without lower quotas, the weaker condition $u_i\leq u'_i~(i\in [g])$ suffices. Similar relaxations are possible in the transversal and laminar settings when lower quotas are absent. Another source comes from matroid union: if $N_1, N_2, \dots, N_s$ are matroids on the same ground set and $M_i$ is the matroid union of $N_1, \dots, N_i$ for each $i=1,\dots,s$, then $(M_1, M_2,\dots, M_s)$ forms a flag matroid (after removing repetitions of the same matroid if any).

\section{Closest Ranking with Flag Matroid Constraints}\label{sec:matroid-CFR}

We propose the following two problems as generalizations of Closest Fair Ranking (CFR) studied in \cite{chak22fair,chak25improve} and reviewed in Section~\ref{sec:prelimi}.
The first generalizes CFR with $k$-fairness.
\begin{tcolorbox}[colback=white,colframe=black,boxrule=0.2pt,arc=0pt]
\textsc{Matroid Closest Feasible Ranking (Matroid-CFR)}\\[2pt]
\textbf{Input:} A ranking $\pi$ over $E$ and a matroid $M=(E,\B)$ of rank $k$.\\
\textbf{Output:} A ranking $\sigma$ over $E$ minimizing $d_{\mathrm{KT}}(\sigma,\pi)$ subject to $\Top_k(\sigma)\in \B$.
\end{tcolorbox}
The second problem below generalizes CFR with block-$k$-fairness.

\begin{tcolorbox}[colback=white,colframe=black,boxrule=0.2pt,arc=0pt]
\textsc{Flag Matroid Closest Feasible Ranking (Flag-Matroid-CFR)}\\[2pt]
\textbf{Input:} A ranking $\pi$ over $E$ and a flag matroid $\bM=(M_1, \dots, M_s)$ of rank $\boldsymbol{k}=(k_1, \dots, k_s)$, where the base family of $M_i$ is denoted by $\B_i$.\\
\textbf{Output:} A ranking $\sigma$ over $E$ minimizing $d_{\mathrm{KT}}(\sigma,\pi)$ subject to $\Top_{k_i}(\sigma)\in \B_i~(i\in[s])$.
\end{tcolorbox}
Note that Matroid-CFR is a special case of Flag-Matroid-CFR with $s=1$.
However, we will treat this special case separately, since it is much easier than the case $s\geq 2$.

We first provide a basic observation that can be used in both problems.
Given an instance of Flag-Matroid-CFR as above, any ranking $\sigma$ satisfying $\Top_{k_i}(\sigma)\in \B_i~(i\in[s])$ is determined by the following two pieces of data: (1) a choice of a flag base $(B_1, B_2,\dots, B_s)\in \mathcal{F}(\bM)$ such that $B_i=\Top_{k_i}(\sigma)~(i\in[s])$, and (2) an ordering of each of the parts $B_i \setminus B_{i-1}$ for $i\in[s+1]$, where for convenience we set $B_0=\emptyset$ and $B_{s+1}=E$.

For any flag base $\bB=(B_1, B_2, \dots, B_s)$ of $\bM$, which forms a chain by definition, we define a ranking $\pi_{\bB}$ on $E$ as follows:
the elements are ordered by concatenating $B_1$, $B_2 \setminus B_1$, $B_3 \setminus B_2$, $\dots$, $B_s \setminus B_{s-1}$, and $E \setminus B_s$;
within each part, the elements are ordered according to the input ranking $\pi$. See Figure~\ref{fig:block}.

\begin{figure}[h]
\centering
\begin{tikzpicture}[every node/.style={draw, inner sep=0pt, node distance = 5pt, text centered}]
\node [rectangle, minimum height=20pt, minimum width = 260pt, xshift = 5pt, fill=gray!15](A){};
\draw (node cs:name=A,anchor=south west) to[out=-11, in=-169, edge node={node[draw = none, fill = white]{\(n\)}}] (node cs:name=A,anchor=south east);

\node [rectangle, minimum height=20pt, minimum width = 200pt, xshift = -25pt, fill=gray!30](B){};
\draw (node cs:name=B,anchor=south west) to[out=-7, in=-173, edge node={node[draw = none, fill = white]{\(k_3\)}}] (node cs:name=B,anchor=south east);

\node [rectangle, minimum height=20pt, minimum width = 150pt, xshift = -50pt, fill=gray!50](C){};
\draw (node cs:name=C,anchor=north west) to[out=17, in=163, edge node={node[draw = none, fill = white]{\(k_2\)}}] (node cs:name=C,anchor=north east);
\node [rectangle, draw = none, right =11pt of B]{\(\cdots \cdots\)};

\node [rectangle, minimum height=20pt, minimum width = 90pt, xshift = -80pt, fill=gray!70](D){};
\draw (node cs:name=D,anchor=north west) to[out=10, in=170, edge node={node[draw = none, fill = white]{\(k_1\)}}] (node cs:name=D,anchor=north east);

\node[draw=none] at (D.center) {\(B_1\)};
\node[draw=none] at ($(D.east)!0.5!(C.east)$) {\(B_2 \setminus B_1\)};
\node[draw=none] at ($(C.east)!0.5!(B.east)$) {\(B_3 \setminus B_2\)};
\end{tikzpicture}
\caption{A construction of $\pi_{\bB}$ for a ranking $\pi$ and a flag base $\bB=(B_1,B_2,\dots, B_s)$. In each block, elements are arranged according to $\pi$.}
\label{fig:block}
\end{figure}

\begin{restatable}[$\star$]{observation}{ObsOptRanking}\label{obs:fixedbase}
For a ranking $\pi:[n]\to E$ and any flag base $\bB=(B_1, B_2,\dots, B_s)$, among rankings $\sigma$ satisfying $\Top_{k_i}(\sigma)=B_i~(i\in[s])$, the one minimizing $d_{\mathrm{KT}}(\sigma, \pi)$ is $\pi_{\bB}$.
\end{restatable}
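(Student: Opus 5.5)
The plan is to decompose the Kendall tau distance according to the block structure fixed by the flag. Set $B_0=\emptyset$ and $B_{s+1}=E$, and call $P_j=B_j\setminus B_{j-1}$ ($j\in[s+1]$) the blocks. Any ranking $\sigma$ with $\Top_{k_i}(\sigma)=B_i$ for all $i\in[s]$ places every element of $P_j$ at a position in $(k_{j-1},k_j]$, where $k_0=0$ and $k_{s+1}=n$. Consequently, for $e\in P_j$ and $f\in P_{j'}$ with $j<j'$, we always have $e\prec_\sigma f$. This does not depend on which feasible $\sigma$ we pick.

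Next, split the unordered pairs $\{e,f\}$ into two classes: pairs lying in different blocks, and pairs lying inside a common block. We write $\dKT(\sigma,\pi)=C_{\mathrm{cross}}(\sigma)+\sum_{j}C_j(\sigma)$ accordingly. By the first step, $\sigma$ orders every cross-block pair in the same way, namely by block index. So whether such a pair is a disagreement with $\pi$ is determined by $\bB$ and $\pi$ alone, and $C_{\mathrm{cross}}$ is the same constant for all feasible $\sigma$. Each within-block term $C_j(\sigma)$ counts the pairs in $P_j$ that $\sigma$ and $\pi$ order differently. It depends only on the restriction of $\sigma$ to $P_j$, and it is nonnegative.

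The restrictions of $\sigma$ to the different blocks can be chosen independently, and every choice of orderings of the blocks gives a feasible $\sigma$. Hence $\dKT(\sigma,\pi)$ is minimized by minimizing each $C_j$ separately. Ordering $P_j$ according to $\pi$ makes $C_j=0$, which is the least possible value. The resulting ranking is exactly $\pi_{\bB}$, so it attains the minimum. It is in fact the unique minimizer, since any other feasible $\sigma$ has some block ordered differently from $\pi$ and hence some $C_j>0$.

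The main care point is the first step: one has to check that prescribing the chain of prefixes $B_1\subsetneq\cdots\subsetneq B_s$ forces each block into its own interval of positions. This follows from $|B_i|=k_i$ and the nesting of the prefixes. Beyond that, the argument is routine bookkeeping.
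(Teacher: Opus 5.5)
Your proposal is correct and follows essentially the same argument as the paper. Both fix the block order $P_1,\dots,P_{s+1}$, observe that cross-block pairs contribute a constant, and minimize each within-block contribution to $0$ by ordering the block according to $\pi$. Your explicit uniqueness remark (any other feasible $\sigma$ misorders some block relative to $\pi$) is a valid small addition that justifies the phrase ``the one'' in the statement.
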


We postpone the proof to Appendix~\ref{app:CFR} but it is summarized as follows.
The constraints $\Top_{k_i}(\sigma)=B_i~(i\in[s])$ fix the order of the blocks
$B_1$, $B_2 \setminus B_1$, $B_3 \setminus B_2$, $\dots$, $B_s \setminus B_{s-1}$, and $E \setminus B_s$ in $\sigma$.
Hence, among the pairs counted in $d_{\mathrm{KT}}(\sigma,\pi)$, those from different blocks contribute a constant independent of the internal ordering of each block. Therefore, it suffices to minimize the contribution of pairs within each block, which is achieved by ordering the elements of every block according to $\pi$.

Therefore, once a flag base $\bB=(B_1,\dots,B_s)\in \mathcal{F}(\bM)$ is fixed, the ranking $\sigma$ minimizing $d_{\mathrm{KT}}(\sigma,\pi)$ is uniquely determined as $\pi_{\bB}$. Hence, Flag-Matroid-CFR reduces to the problem of finding a flag base $\bB\in \mathcal{F}(\bM)$ that minimizes $d_{\mathrm{KT}}(\pi_{\bB},\pi)$.

\subsection*{Single Matroid Case}
We first apply Observation~\ref{obs:fixedbase} to the case $s=1$, i.e., Matroid-CFR.
For this special case, the objective function can be written explicitly.

\begin{restatable}[$\star$]{lemma}{LemOptRanking}\label{lem:MatroidCFR}
Given an instance $(\pi, (E, \B))$ of Matroid-CFR, where the rank of $(E, \B)$ is $k$, for any base $B\in \B$, we have $\dKT(\pi_{B}, \pi)= \sum_{e \in B} \pi^{-1}(e)-\frac{k(k+1)}{2}$.
\end{restatable}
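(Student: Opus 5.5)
The plan is to count directly the pairs on which $\pi_B$ and $\pi$ disagree. By construction, $\pi_B$ lists $B$ first in $\pi$-order and then $E\setminus B$ in $\pi$-order. Two elements in the same block keep their $\pi$-order, so they never contribute. Consider a pair $\{e,f\}$ with $e\in B$ and $f\notin B$. In $\pi_B$ we always have $e\prec_{\pi_B} f$, so the pair is discordant exactly when $f\prec_\pi e$. Hence
\[
\dKT(\pi_B,\pi)=\sum_{e\in B}\bigl|\{f\in E\setminus B \mid \pi^{-1}(f)<\pi^{-1}(e)\}\bigr|.
\]

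Next I evaluate each summand. Write $B=\{b_1,\dots,b_k\}$ with $b_1\prec_\pi\cdots\prec_\pi b_k$. The elements placed before $b_j$ in $\pi$ number $\pi^{-1}(b_j)-1$. Of these, exactly $j-1$ lie in $B$, namely $b_1,\dots,b_{j-1}$. So the number of non-$B$ elements before $b_j$ is $\pi^{-1}(b_j)-j$.

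Summing over $j$ gives $\sum_{e\in B}\pi^{-1}(e)-\sum_{j=1}^k j=\sum_{e\in B}\pi^{-1}(e)-\frac{k(k+1)}{2}$, which is the claimed formula. Nothing here uses the matroid structure beyond $|B|=k$.

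The only point needing care is the cross-block case analysis in the first step: fixing $e\in B$ and $f\notin B$ and confirming that $\pi_B$ orders them as $e$ before $f$ regardless of $\pi$. Once that is settled, the rest is a routine counting identity.
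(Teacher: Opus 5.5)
Your proposal is correct and follows the paper's proof essentially verbatim. Both reduce $\dKT(\pi_B,\pi)$ to counting pairs $(e,f)\in B\times(E\setminus B)$ with $f\prec_\pi e$, and both count, for the $j$th element of $B$, its $\pi^{-1}(b_j)-1$ predecessors minus the $j-1$ that lie in $B$; the paper merely sums these as $k+\sum_{j=1}^k(j-1)$ where you combine them into $\sum_{j=1}^k j$.
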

The proof is given in Appendix~\ref{app:CFR}, but the idea is simple.
Starting from $\pi$, consider obtaining $\pi_B$ by moving the elements of $B$ to the top $k$
positions while preserving their relative order in $\pi$.
If $e\in B$ is the $h$th highest element of $B$ in $\pi$, then it moves left by
exactly $\pi^{-1}(e)-h$ positions, creating one inversion at each step. Summing over all $e\in B$ gives the formula.

Since $-\frac{k(k+1)}{2}$ is a constant independent of the choice of $B$, Lemma~\ref{lem:MatroidCFR} shows that Matroid-CFR reduces to the minimum-weight base problem with weights $w(e)=\pi^{-1}(e)$. Then Proposition~\ref{prop:basic} immediately implies the following.

\begin{theorem}\label{thm:CFR}
Given an instance $(\pi, (E, \B))$ of Matroid-CFR, let $B^*$ denote the unique $\preceq_\pi^{\mathrm G}$-minimum base in $\B$. Then $\pi_{B^*}$ is the unique optimal solution to Matroid-CFR. Thus, Matroid-CFR can be solved efficiently.
\end{theorem}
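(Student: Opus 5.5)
The plan is to combine Observation~\ref{obs:fixedbase}, Lemma~\ref{lem:MatroidCFR} and Proposition~\ref{prop:basic}, and to pay attention to \emph{uniqueness}, which is the only point needing care.

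First, let $\sigma$ be any feasible ranking, i.e., $\Top_k(\sigma)=B\in\B$. By Observation~\ref{obs:fixedbase} with $s=1$, we have $\dKT(\sigma,\pi)\ge \dKT(\pi_B,\pi)$. I would also record that equality holds only when $\sigma=\pi_B$. To see this, reread the argument behind Observation~\ref{obs:fixedbase}. The cross-block pairs contribute a constant. Within each block, the number of disagreeing pairs is zero exactly when the block is ordered according to $\pi$. Any other internal order has at least one inversion relative to $\pi$ inside that block. Hence $\pi_B$ is the unique minimizer among rankings with top-$k$ set $B$.

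Second, by Lemma~\ref{lem:MatroidCFR}, $\dKT(\pi_B,\pi)=w(B)-k(k+1)/2$ with $w(e)=\pi^{-1}(e)$. This $w$ is compatible with $\pi$, in fact strictly so. Proposition~\ref{prop:basic} then gives that $B^*$ minimizes $w(B)$ over $\B$. Thus $\pi_{B^*}$ is optimal.

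For uniqueness, suppose $B\in\B$ and $w(B)=w(B^*)$. Write $B^*=\{a_1\prec_\pi\cdots\prec_\pi a_k\}$ and $B=\{b_1\prec_\pi\cdots\prec_\pi b_k\}$. Since $B^*\preceq^{\mathrm G}_\pi B$, we have $\pi^{-1}(a_i)\le\pi^{-1}(b_i)$ for every $i$. Summing these and using equality of the sums forces $\pi^{-1}(a_i)=\pi^{-1}(b_i)$ for all $i$, so $B=B^*$. Combined with the first step, any optimal $\sigma$ must have top-$k$ set $B^*$ and must equal $\pi_{B^*}$.

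Efficiency follows because $B^*$ is produced by the greedy algorithm with $O(n)$ oracle calls, and $\pi_{B^*}$ is then built in linear time. The main (mild) obstacle is the strictness in the first step: Observation~\ref{obs:fixedbase} as stated only asserts minimality. So I would briefly justify strictness through the block decomposition described above rather than cite the observation alone.
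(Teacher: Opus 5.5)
Your proof is correct and follows the paper's route. Observation~\ref{obs:fixedbase} and Lemma~\ref{lem:MatroidCFR} reduce Matroid-CFR to a minimum-weight base problem with $w(e)=\pi^{-1}(e)$, and Proposition~\ref{prop:basic} then identifies $B^*$ as optimal. Where the paper just says this ``immediately implies'' the theorem, you also justify uniqueness explicitly: $\pi_B$ is strictly best among rankings with top-$k$ set $B$, and $B^*\preceq_\pi^{\mathrm G} B$ together with $w(B)=w(B^*)$ forces $B=B^*$.
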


\subsection*{Difficulty with Multiple Matroids}
By Proposition~\ref{prop:optimalflag}, for a flag matroid $\bM=(M_1, M_2,\dots, M_s)$, the unique $\preceq_\pi^{\mathrm G}$-minimum bases $B_{i}^*$ of the matroids $M_i$ form a chain $B_1^*\subsetneq B_2^*\subsetneq \cdots \subsetneq B_s^*$, and hence give a flag base $\bB^*=(B^*_1,\dots, B^*_s)$. As a generalization of Theorem~\ref{thm:CFR}, we expect that $\pi_{\bB^*}$ gives the unique optimal solution to Flag-Matroid-CFR. This is indeed true, as we will show. However, the single-matroid proof does not extend directly.

We demonstrate that analyzing Flag-Matroid-CFR is not straightforward when $s\geq 2$. Take any flag base $\bB=(B_1, B_2, \dots, B_s)$ and consider the corresponding ranking $\pi_{\bB}$, which arranges the blocks $B_1, B_2\setminus B_1, \dots, B_s\setminus B_{s-1}$ and $E\setminus B_s$ in this order, where within each block the elements are arranged in the order of $\pi$. Let us write $\dKT(\pi_{\bB}, \pi)$ in a way mimicking Lemma~\ref{lem:MatroidCFR}. One possible expression is then
\[
\dKT(\pi_{\bB}, \pi)=\sum_{e\in B_1}\pi^{-1}(e)~+\!\!\!\sum_{e\in B_2\setminus B_1}(\pi|_{E\setminus B_1})^{-1}(e)~+~\cdots ~+\!\!\!\sum_{e\in B_s\setminus B_{s-1}}(\pi|_{E\setminus B_{s-1}})^{-1}(e)~+ \text{constant}
\]
where the notation $(\pi|_X)^{-1}(e)$ represents the rank of $e$ among $X\subseteq E$. Here, the first term is simply the weight of $B_1$ with respect to the weights $w(e)=\pi^{-1}(e)$, but the weights for the $i$th term $(i\geq 2)$ are affected by the choice of $B_{i-1}$. It is then possible that taking a non-optimal earlier base reduces the latter terms; for example, consider an instance where $s=2$, $\pi$ is the identity on $E=\{1,2,3\}$, and $\B_1=\{\{1\}, \{3\}\}$ and $\B_2=\{\{1,3\}\}$. Then $B_2$ must be $\{1,3\}$. Choosing $B_1=\{1\}$ minimizes the first term but then the second term is $2$ as $(\pi|_{\{2,3\}})^{-1}(3)=2$, whereas choosing $B_1=\{3\}$ makes the second term smaller (i.e., $1$) since $(\pi|_{\{1,2\}})^{-1}(1)=1$. Thus, we cannot divide the objective value into $s$ separate linear optimization problems, and the optimality of each $B_i$ with respect to weights $\pi^{-1}$ does not simply imply the optimality of the whole objective value.

Indeed, if we represent the objective function using binary variables $x_{ei}$ such that $x_{ei}=1\Leftrightarrow e\in B_i\setminus B_{i-1}$ for each  $e\in E$ and $i\in[s+1]$ (with $B_0=\emptyset$ and $B_{s+1}=E$), then this natural formulation yields a quadratic objective function. For a pair of elements $e,f$ with $e\prec_\pi f$, its contribution to the objective function can be written as $\sum_{i,j\in [s+1]: i<j} x_{fi}x_{ej}$. Summing over all pairs $e\prec_\pi f$, we obtain a quadratic objective function.

\smallskip
Another possible strategy for proving the optimality of $\pi_{\bB^*}$ is to suppose, for contradiction, that an optimal solution $\pi_{\bB'}$ differs from $\pi_{\bB^*}$, and then use the matroid exchange property to modify $\pi_{\bB'}$ so as to obtain a solution with a strictly better objective value.\footnote{
A related exchange-based argument appears in the correctness proof of the algorithm for CFR with block-fairness~\cite[Claims~3.8 and~3.9]{chak22fair}. To the best of our understanding, some additional justification is needed to ensure that the proposed swap preserves all block-prefix constraints; see Appendix~\ref{app:exchange} for details.}
However, the usual matroid exchange argument does not seem to extend directly, because to modify a flag base while preserving feasibility, we have to guarantee exchangeability with respect to all bases related to the exchanged elements, and such a property does not simply hold for a general flag base. Indeed, \cite[Section~1.9.2]{Borovik03} shows that flag matroids do not satisfy the strong exchange property in general.

\subsection*{Analysis via Bruhat Order}
As explained above, except in the case $s=1$, it seems difficult to reduce Flag-Matroid-CFR to a known combinatorial optimization problem. Here, using the notion of the Bruhat order from the theory of Coxeter groups (in particular, the symmetric group), we show that we can establish the optimality of the greedy solution in a simple way.

Let us identify $E$ as the set $[n]=\{1,2,\dots, n\}$. Then, the set of all rankings is nothing but the set $S_n$ of permutations on $[n]$, i.e., the symmetric group. Assume without loss of generality that the input ranking $\pi$ is the identity permutation $\id$, i.e., $\pi(i)=\id(i)=i$ for every $i\in[n]$. For any permutation (ranking) $\sigma\in S_n$, the Kendall tau distance $\dKT(\sigma, \id)$ coincides with the {\em inversion number} of $\sigma$, which is defined as
\[\inv(\sigma)=|\set{(i,j)\in [n]\times[n]| i<j,~\sigma(i)>\sigma(j)}|.\]
Thus, Flag-Matroid-CFR can be seen as the problem of finding $\sigma\in S_n$ minimizing $\inv(\sigma)$ subject to $(\Top_{k_1}(\sigma), \dots, \Top_{k_s}(\sigma))\in \F(\bM)$.

\medskip

We recall the Bruhat order (also called the strong Bruhat order) on $S_n$ following \cite[Sections 2.1 and 2.6]{BjornerBrenti}. For the Bruhat order for general Coxeter groups, see \cite[Chapter 2]{BjornerBrenti}.

Bruhat order is a partial order (i.e., a reflexive, antisymmetric, and transitive relation) on the elements of a Coxeter group.
Here is a definition of Bruhat order specialized to $S_n$, where we adopt a characterization given in \cite[Lemma~2.1.4]{BjornerBrenti} as a definition.
\begin{definition}[Bruhat Order on $S_n$]
Define a binary relation $\vartriangleleft$ on $S_n$ so that $\sigma \vartriangleleft \tau$ if and only if $\tau$ is obtained from $\sigma$ by swapping the positions of some $i,j\in [n]$ satisfying $i<j$, $\sigma(i)<\sigma(j)$, and there is no $h$ such that $i<h<j$ and $\sigma(i)<\sigma(h)<\sigma(j)$.

The Bruhat order $\leq^{\rm B}$ on $S_n$ is the transitive and reflexive closure of $\vartriangleleft$.
\end{definition}
We write $\sigma<^{\rm B} \tau$ if $\sigma \leq^{\rm B} \tau$ and $\sigma\neq \tau$. Observe that, by definition, when $\sigma \vartriangleleft \tau$, we have $\inv(\tau)=\inv(\sigma)+1$. More precisely, when $i,j$ satisfy the stated properties, by swapping the positions of $i$ and $j$, the inversion status of pairs not involving $i$ or $j$ does not change. Among pairs involving $i$ or $j$, the only possible changes besides the pair $(i,j)$ occur for indices $h$ with $\sigma(i)<\sigma(h)<\sigma(j)$. However, by assumption there is no such $h$ with $i<h<j$; hence every such $h$ satisfies $h<i<j$ or $i<j<h$, and its total contribution with $i$ and $j$ to the inversion number remains unchanged after the swap. Therefore, only the pair $(i,j)$ changes, from a noninversion to an inversion. As this observation implies, Bruhat order on $S_n$ satisfies
\[\sigma<^{\rm B} \tau~~~\Longrightarrow~~~ \inv(\sigma)<\inv(\tau).\]

Therefore, if we can compute $\sigma^*\in S_n$ that is minimum with respect to the Bruhat order among all $\sigma\in S_n$ satisfying $(\Top_{k_1}(\sigma), \dots, \Top_{k_s}(\sigma))\in \F(\bM)$, then $\sigma^*$ is the optimal solution to Flag-Matroid-CFR. In fact, we can show this by using the following characterization of Bruhat order by Björner--Brenti~\cite{BjornerBrenti1996} (also presented in \cite[Theorem~2.6.3]{BjornerBrenti}).

\begin{theorem}[Tableau Criterion \cite{BjornerBrenti1996}]\label{thm:tableau}
For $\sigma,\tau\in S_n$, let $\sigma_{i,h}$ be the $i$th element in the increasing rearrangement of $\sigma(1), \sigma(2),\dots, \sigma(h)$ and similarly define $\tau_{i,h}$. Then, $\sigma\leq^{\rm B} \tau$ if and only if $\sigma_{i,h}\leq \tau_{i,h}$ for all $h\in D_R(\sigma)$ and $i\in[h]$, where $D_R(\sigma)=\set{h'\in[n-1]| \sigma(h')>\sigma(h'+1)}$.
\end{theorem}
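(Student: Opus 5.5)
The plan is to prove the criterion in two stages. First I establish the \emph{full} tableau criterion, where the inequalities $\sigma_{i,h}\le\tau_{i,h}$ are required for every $h\in[n]$. Then I show that the inequalities at the descent positions $h\in D_R(\sigma)$ already imply those at all other positions.

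Throughout I use the counting reformulation of Gale order on $\binom{[n]}{h}$. For $A,B\in\binom{[n]}{h}$ we have $A\preceq^{\mathrm G}_{\id}B$ if and only if $|A\cap[x]|\ge|B\cap[x]|$ for all $x\in[n]$. Write $S_h=\Top_h(\sigma)$ and $T_h=\Top_h(\tau)$. The tableau condition at $h$ is then exactly $S_h\preceq^{\mathrm G}_{\id}T_h$.

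\textbf{Stage 1, ``only if''.} By transitivity it suffices to treat a single covering step $\sigma\vartriangleleft\tau$. Here $\tau$ swaps the values $\sigma(i)<\sigma(j)$ at positions $i<j$. For $h<i$ or $h\ge j$ the prefix sets are unchanged. For $i\le h<j$, the set $T_h$ is obtained from $S_h$ by replacing $\sigma(i)$ with the larger value $\sigma(j)$, which can only decrease the counts $|\cdot\cap[x]|$. Hence $S_h\preceq^{\mathrm G}T_h$ for every $h$.

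\textbf{Stage 1, ``if''.} Assume $S_h\preceq^{\mathrm G}T_h$ for all $h$ and $\sigma\neq\tau$. I induct on $\inv(\tau)$, stepping down from $\tau$.
\begin{enumerate}
\item Let $a$ be the first position where $\sigma$ and $\tau$ differ. Then $S_{a-1}=T_{a-1}$, and the condition at $h=a$ forces $\sigma(a)<\tau(a)$.
\item Among positions $b>a$ with $\sigma(a)\le\tau(b)<\tau(a)$, choose the one with $\tau(b)$ maximal. Such a $b$ exists because $\sigma(a)$ sits in $\tau$ at some position after $a$.
\item Swapping positions $a$ and $b$ in $\tau$ gives $\tau'$ with $\tau'\vartriangleleft\tau$, by the maximality of $\tau(b)$ and after possibly re-choosing $b$ to avoid intermediate values.
\item The prefix sets of $\tau'$ differ from those of $\tau$ only for $a\le h<b$, where $\tau(a)$ is replaced by $\tau(b)$. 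I check that $S_h\preceq^{\mathrm G}\Top_h(\tau')$ still holds for these $h$, using $\sigma(a)\le\tau(b)$ together with the common prefix up to position $a-1$.
\item Induction on $\tau'$ then gives $\sigma\le^{\rm B}\tau'<^{\rm B}\tau$.
\end{enumerate}
This verification in step 4 is the main obstacle. It is the delicate step of the classical Ehresmann / Björner--Brenti argument, and I expect to handle it by a direct count of $|\cdot\cap[x]|$ for $x$ in the window $[\tau(b),\tau(a))$.

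\textbf{Stage 2, reduction to descents.} Let $h\notin D_R(\sigma)$. Let $h_1<h<h_2$, where $h_1$ is the previous descent and $h_2$ the next descent, setting $h_1=0$ and $h_2=n$ if none exists. On positions $h_1+1,\dots,h_2$ the permutation $\sigma$ is increasing. So $S_h$ consists of $S_{h_1}$ together with the $h-h_1$ smallest elements of $S_{h_2}\setminus S_{h_1}$, which gives
\[
|S_h\cap[x]|=\min\bigl(|S_{h_2}\cap[x]|,\;|S_{h_1}\cap[x]|+h-h_1\bigr).
\]
For $\tau$ we trivially have $|T_h\cap[x]|\le|T_{h_2}\cap[x]|$ and $|T_h\cap[x]|\le|T_{h_1}\cap[x]|+h-h_1$, since $T_{h_1}\subseteq T_h\subseteq T_{h_2}$. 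The conditions at $h_1$ and $h_2$ are either assumed or trivial (both prefixes are $\emptyset$ or $[n]$). Combining them yields $|T_h\cap[x]|\le|S_h\cap[x]|$ for all $x$, i.e.\ the condition at $h$. Together with Stage 1 this proves the theorem.
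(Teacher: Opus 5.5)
\textbf{Gap in Stage~1, ``if'' direction.} With $b$ chosen to maximize $\tau(b)$, the verification you postpone to step~4 is false, not merely delicate. Take $\sigma=(1,4,2,3)$ and $\tau=(4,1,3,2)$. The full tableau condition holds: $\{1\}\preceq^{\mathrm G}\{4\}$, then $\{1,4\}=\{1,4\}$, then $\{1,2,4\}\preceq^{\mathrm G}\{1,3,4\}$. Here $a=1$. The values in $[\sigma(1),\tau(1))=[1,4)$ after position $1$ are $1,3,2$, at positions $2,3,4$. Your rule picks $b=3$ and gives $\tau'=(3,1,4,2)$, which does satisfy $\tau'\vartriangleleft\tau$. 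But $\Top_2(\sigma)=\{1,4\}\not\preceq^{\mathrm G}\{1,3\}=\Top_2(\tau')$, so the induction hypothesis cannot be applied to $(\sigma,\tau')$. The cause is that positions strictly between $a$ and $b$ may carry values below $\tau(b)$, here the value $1$ at position $2$. Such values use up the single unit of slack $|S_h\cap[x]|-|T_h\cap[x]|\ge 1$ that the replacement $\tau(a)\mapsto\tau(b)$ needs on the window $\tau(b)\le x<\tau(a)$.

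\textbf{Fix.} Choose $b$ as the \emph{smallest} position $b>a$ with $\sigma(a)\le\tau(b)<\tau(a)$; it exists by your step~2 argument. Then no position strictly between $a$ and $b$ carries a value in $[\sigma(a),\tau(a))$, and this gives two things. First, the swap is automatically a cover, since no value of $(\tau(b),\tau(a))$ sits between positions $a$ and $b$; no re-choosing is needed. Second, for $a\le h<b$ and $\sigma(a)\le x<\tau(a)$ we get $T_h\cap[\sigma(a),x]=T_{a-1}\cap[\sigma(a),x]=S_{a-1}\cap[\sigma(a),x]$. Meanwhile $S_h\cap[\sigma(a),x]$ additionally contains $\sigma(a)\notin S_{a-1}$. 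Adding the hypothesis $|S_h\cap[\sigma(a)-1]|\ge|T_h\cap[\sigma(a)-1]|$ yields $|S_h\cap[x]|\ge|T_h\cap[x]|+1$. Now $|\Top_h(\tau')\cap[x]|$ equals $|T_h\cap[x]|+1$ for $\tau(b)\le x<\tau(a)$ and $|T_h\cap[x]|$ otherwise, and $[\tau(b),\tau(a))\subseteq[\sigma(a),\tau(a))$. So the condition survives and the induction on $\inv(\tau)$ goes through.

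Your ``only if'' argument and Stage~2 (the $\min$ formula on an increasing run of $\sigma$) are correct as written. The paper does not prove this theorem itself but cites Björner--Brenti. Your two-stage route (Ehresmann's full criterion, then reduction to descents) is the standard one.
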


This can be used to identify the optimal solution of Flag-Matroid-CFR. Recall that, for a flag base $\bB=(B_1,\dots, B_s)\in \F(\bM)$, the ranking $\pi_{\bB}$ is defined by arranging $B_1$, $B_2 \setminus B_1$, $B_3 \setminus B_2$, $\dots$, $B_s \setminus B_{s-1}$, and $E \setminus B_s$ in this order, where the elements in each part are ordered according to $\pi$.
\begin{theorem}\label{thm:flag-CFR}
Given an instance $(\pi, \bM)$ of Flag-Matroid-CFR, suppose without loss of generality that $\pi=\id$. Let $\bB^*\in \F(\bM)$ be the unique $\preceq_{\pi}^{\mathrm G}$-minimum flag base. Then, for any other flag base $\bD\in \F(\bM)$, we have $\pi_{\bB^*}<^{\rm B} \pi_{\bD}$ (and hence $\inv(\pi_{\bB^*})<\inv(\pi_{\bD})$).
\end{theorem}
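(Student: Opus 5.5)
We will apply the Tableau Criterion (Theorem~\ref{thm:tableau}) with $\sigma=\pi_{\bB^*}$ and $\tau=\pi_{\bD}$, identifying $E=[n]$ and $\pi=\id$. The first step is to locate the descent set of $\sigma=\pi_{\bB^*}$. By construction, $\pi_{\bB^*}$ lists $B_1^*$, $B_2^*\setminus B_1^*$, $\dots$, $E\setminus B_s^*$, each block in increasing order. Hence a descent $\sigma(h')>\sigma(h'+1)$ can only occur at a block boundary, that is, $D_R(\sigma)\subseteq\{k_1,\dots,k_s\}$ (with $k_i<n$). It therefore suffices to verify the entrywise comparison $\sigma_{i,h}\le\tau_{i,h}$ for $h=k_j$, $j\in[s]$, $i\in[h]$. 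For each prefix length $h=k_j$ we have $\{\sigma(1),\dots,\sigma(k_j)\}=B_j^*$ and $\{\tau(1),\dots,\tau(k_j)\}=D_j$, because $\pi_{\bB}$ places exactly $B_j$ in its top $k_j$ positions. The increasing rearrangements are thus the sorted lists of $B_j^*$ and $D_j$. The required inequality $\sigma_{i,k_j}\le\tau_{i,k_j}$ for all $i$ is then precisely $B_j^*\preceq^{\mathrm G}_{\pi}D_j$, which holds by Proposition~\ref{prop:optimalflag} (equivalently Proposition~\ref{prop:basic} applied to each $M_j$). The Tableau Criterion therefore gives $\pi_{\bB^*}\le^{\rm B}\pi_{\bD}$.

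For strictness, note that if $\bD\neq\bB^*$ then some $D_j\neq B_j^*$, so $\pi_{\bD}\neq\pi_{\bB^*}$ since their top-$k_j$ sets differ. Thus $\pi_{\bB^*}<^{\rm B}\pi_{\bD}$. The inversion inequality then follows from the earlier observation that $\sigma<^{\rm B}\tau$ implies $\inv(\sigma)<\inv(\tau)$. Combined with Observation~\ref{obs:fixedbase}, this shows that $\pi_{\bB^*}$ is the unique optimum of Flag-Matroid-CFR.

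The only step needing care is the descent-set claim. The point is that the criterion only requires checking at descents of $\sigma$, the smaller permutation, and not of $\tau$. This is exactly what makes the argument work, since $\tau=\pi_{\bD}$ is also block-sorted, but only the fact that the descents of $\sigma$ lie at $k_1,\dots,k_s$ is used. I also need to handle the boundary case $k_s=n$ (and any $k_j=0$): there $h=n$ is not in $[n-1]$, or the block is empty, so that boundary is simply not a descent and no comparison is required. These edge cases should be checked but cause no difficulty.
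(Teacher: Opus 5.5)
Your proposal is correct and follows essentially the same route as the paper: you use the block-sorted structure to get $D_R(\pi_{\bB^*})\subseteq\{k_1,\dots,k_s\}$, reduce the tableau comparison at $h=k_j$ to the Gale inequality $B_j^*\preceq^{\mathrm G}_{\pi} D_j$, and obtain strictness from $\pi_{\bD}\neq\pi_{\bB^*}$. Your remarks on the edge cases $k_s=n$ and empty blocks are harmless and consistent with the paper's argument.
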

\begin{proof}
We write $\boldsymbol{k}=(k_1, k_2,\dots, k_s)$ for the rank of $\bM$. We let $\bB^*=(B_1^*,\dots,B_s^*)$ and $\bD=(D_1,\dots, D_s)$. Since $\pi=\id$, by the construction of $\pi_{\bB^*}$, when $\pi_{\bB^*}$ is regarded as a sequence, in each of the blocks $B_1^*,~ B_2^* \setminus B_1^*,~ \dots, B_s^* \setminus B_{s-1}^*,~ E \setminus B_s^*$, the elements are in increasing order, and hence $D_R(\pi_{\bB^*})\subseteq \{k_1, k_2,\dots, k_s\}$.

For each $h\in \{k_1, k_2,\dots, k_s\}$, we write $b_{i,h}$ (resp., $d_{i,h}$) for the $i$th element in the increasing rearrangement of the elements $\pi_{\bB^*}(1),\dots, \pi_{\bB^*}(h)$ (resp., $\pi_{\bD}(1),\dots, \pi_{\bD}(h)$). Note that, if $h=k_j$, then $\{\pi_{\bB^*}(1),\dots, \pi_{\bB^*}(h)\}=B_j^*$ and $\{\pi_{\bD}(1),\dots, \pi_{\bD}(h)\}=D_j$.
By Theorem~\ref{thm:tableau}, to show $\pi_{\bB^*}\leq^{\rm B} \pi_{\bD}$, it is sufficient to show $b_{i,h}\leq d_{i,h}$ for all $h\in D_R(\pi_{\bB^*})$ and $i\in [h]$. Since $D_R(\pi_{\bB^*})\subseteq \{k_1,k_2,\dots, k_s\}$, it is enough to show the stronger condition that $b_{i,h}\leq d_{i,h}$ for all $h\in\{k_1,k_2,\dots, k_s\}$ and $i\in [h]$. This condition is exactly equivalent to $\bB^* \preceq_{\pi}^{\mathrm G} \bD$, and thus it is satisfied by the definition of $\bB^*$. Therefore, $\pi_{\bB^*}\leq^{\rm B} \pi_{\bD}$. Since $\bD\neq \bB^*$, we have $\pi_{\bD}\neq \pi_{\bB^*}$, and hence $\pi_{\bB^*}<^{\rm B} \pi_{\bD}$.
\end{proof}

This implies the tractability of Flag-Matroid-CFR.
\begin{corollary}\label{cor:flag-CFR}
Flag-Matroid-CFR can be solved efficiently. The optimal solution is $\pi_{\bB^*}$, where $\bB^*$ is the unique $\preceq_{\pi}^{\mathrm G}$-minimum flag base.
\end{corollary}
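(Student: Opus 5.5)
The plan is to assemble the corollary from results already proved, in three steps: reduce to a choice of flag base, identify the optimal flag base, and bound the running time.

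\emph{Reduction.} After relabelling $E=[n]$ so that $\pi=\id$, Observation~\ref{obs:fixedbase} says the following. Any feasible ranking $\sigma$ has $\bD:=(\Top_{k_1}(\sigma),\dots,\Top_{k_s}(\sigma))\in\F(\bM)$, and it satisfies $\dKT(\sigma,\pi)\ge \dKT(\pi_{\bD},\pi)$. So it suffices to compare the rankings $\pi_{\bD}$ over all flag bases $\bD\in\F(\bM)$.

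\emph{Optimality.} Theorem~\ref{thm:flag-CFR} gives $\inv(\pi_{\bB^*})<\inv(\pi_{\bD})$ for every $\bD\neq\bB^*$. Since $\dKT(\cdot,\id)=\inv(\cdot)$, the ranking $\pi_{\bB^*}$ is optimal.

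Uniqueness is a little more delicate, and I expect it to be the only step needing care. Suppose $\sigma$ is optimal, with associated flag base $\bD$. Then
\[
\inv(\pi_{\bB^*})\le \inv(\sigma)\le \inv(\pi_{\bD}).
\]
If $\bD\neq\bB^*$, the outer inequality is strict, which contradicts the optimality of $\sigma$. So $\bD=\bB^*$. Among rankings with these prescribed prefixes, $\pi_{\bB^*}$ is the unique minimizer: any within-block pair placed out of $\pi$-order adds an inversion, as in the proof of Observation~\ref{obs:fixedbase}. Hence $\sigma=\pi_{\bB^*}$.

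\emph{Running time.} By Proposition~\ref{prop:optimalflag}, $\bB^*$ is obtained by running the greedy algorithm on each $M_i$ with respect to $\pi$. This takes $O(sn)=O(n^2)$ independence-oracle calls, because $s\le n+1$. Building $\pi_{\bB^*}$ from $\bB^*$ then amounts to concatenating the blocks $B_1^*, B_2^*\setminus B_1^*,\dots,E\setminus B_s^*$ and sorting each block by $\pi$, which takes polynomial time. Therefore Flag-Matroid-CFR is solvable in polynomial time, with optimal solution $\pi_{\bB^*}$.
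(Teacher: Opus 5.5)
Your proof is correct and follows the paper's route: Observation~\ref{obs:fixedbase} reduces the problem to flag bases, Theorem~\ref{thm:flag-CFR} gives strict optimality of $\pi_{\bB^*}$, and Proposition~\ref{prop:optimalflag} computes $\bB^*$ greedily. There is one small slip in your uniqueness step: the chain should read $\inv(\pi_{\bB^*})\le\inv(\pi_{\bD})\le\inv(\sigma)$ (Theorem~\ref{thm:flag-CFR}, then Observation~\ref{obs:fixedbase}), because with $\sigma$ and $\pi_{\bD}$ in the order you wrote, a strict outer inequality contradicts nothing unless you also invoke the Observation's $\inv(\sigma)\ge\inv(\pi_{\bD})$.
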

We describe the construction of the optimal solution in algorithmic form in Algorithm~\ref{alg:opt-ranking}.

For the correctness of the algorithm, a flag matroid structure, i.e., quotient relations between matroids, is critical. In Appendix~\ref{app:relaxed}, we demonstrate that if this condition is relaxed, even in a way that the greedy algorithm remains well defined, optimality is no longer guaranteed.

\begin{algorithm}[h]
\caption{Greedy Algorithm for Flag-Matroid-CFR}
\label{alg:opt-ranking}
\KwIn{A ranking $\pi$ and a flag matroid $\bM=(M_1,\dots,M_s)$ of rank $\boldsymbol{k}=(k_1,\dots,k_s)$.}
\KwOut{A ranking $\sigma$ minimizing $d_{\mathrm{KT}}(\sigma,\pi)$ subject to $\Top_{k_i}(\sigma)\in \B_i$ for all $i\in[s]$.}
\smallskip

\For{$i=1$ \KwTo $s$}{
    Compute the unique $\preceq_{\pi}^{\mathrm G}$-minimum base $B_i^*$ of $M_i$ by the greedy algorithm\;
}
Construct $\sigma$ by ordering the blocks $B_1^*$, $B_2^* \setminus B_1^*$, $\dots$, $B_s^* \setminus B_{s-1}^*$, and $E \setminus B_s^*$ in this order, with the elements in each block ordered according to $\pi$\;
\Return{$\sigma$}\;
\end{algorithm}

\section{Approximation Consequences for Rank Aggregation}\label{sec:approx}
Using the exact algorithms for Matroid-CFR and Flag-Matroid-CFR from Section~\ref{sec:matroid-CFR} as subroutines (in a generalized form for the former), we can extend the approximation frameworks in \cite{chak25improve} to the following settings while preserving the approximation ratios.

\begin{tcolorbox}[colback=white,colframe=black,boxrule=0.2pt,arc=0pt]
\textsc{Matroid Feasible Rank Aggregation (Matroid-FRA)}\\[2pt]
\textbf{Input:} $m$ rankings $\pi_1,\dots,\pi_m$ over $E$, and a matroid $M=(E,\B)$ of rank $k$.\\
\textbf{Output:} A ranking $\sigma$ over $E$ minimizing $\sum_{i=1}^m d_{\mathrm{KT}}(\sigma,\pi_i)$ subject to $\Top_k(\sigma)\in \B$.
\end{tcolorbox}

\begin{tcolorbox}[colback=white,colframe=black,boxrule=0.2pt,arc=0pt]
\textsc{Flag Matroid Feasible Rank Aggregation (Flag-Matroid-FRA)}\\[2pt]
\textbf{Input:} $m$ rankings $\pi_1,\dots,\pi_m$ over $E$, and a flag matroid $\bM=(M_1, \dots, M_s)$ of rank $\boldsymbol{k}=(k_1, \dots, k_s)$, where the base family of $M_i$ is denoted by $\B_i$.\\
\textbf{Output:} A ranking $\sigma$ over $E$ minimizing $\sum_{i=1}^m d_{\mathrm{KT}}(\sigma,\pi_i)$ subject to $\Top_{k_i}(\sigma)\in \B_i$ for each $i=1,2,\dots, s$.
\end{tcolorbox}

\begin{restatable}[$\star$]{theorem}{ThmApproxMat}\label{thm:approx-mat}
For any constant $\varepsilon>0$, there exists a $(2+\varepsilon)$-approximation algorithm for Matroid-FRA.
\end{restatable}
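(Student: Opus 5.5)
Following Chakraborty--Das--Dey--Yan~\cite{chak25improve}, I would combine two candidate solutions and output the cheaper one. Let $\sigma^{\mathrm{opt}}$ be an optimal Matroid-FRA solution and $\mathrm{OPT}=\sum_j \dKT(\sigma^{\mathrm{opt}},\pi_j)$.

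The first candidate is the simple \emph{best-of-inputs} procedure. For each $j$, compute the closest feasible ranking $\sigma_j$ to $\pi_j$ using Theorem~\ref{thm:CFR}, and output the $\sigma_j$ minimizing $\sum_i\dKT(\sigma_j,\pi_i)$. Since $\sigma^{\mathrm{opt}}$ is feasible, $\dKT(\sigma_j,\pi_j)\le \dKT(\sigma^{\mathrm{opt}},\pi_j)$. By the triangle inequality,
\[
\sum_i \dKT(\sigma_j,\pi_i)\le m\,\dKT(\pi_j,\sigma^{\mathrm{opt}})+\mathrm{OPT}+ m\,\dKT(\sigma_j,\pi_j)\le 2m\,\dKT(\pi_j,\sigma^{\mathrm{opt}})+\mathrm{OPT}.
\]
Averaging over $j$ gives a $3$-approximation. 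To reach $2+\varepsilon$, I would follow the refined framework of~\cite{chak25improve}. This framework first computes an (approximately) optimal unconstrained Kemeny ranking $\tau$ with cost at most $(1+\varepsilon)\mathrm{OPT}$, using the PTAS of Kenyon-Mathieu--Schudy~\cite{KenyonMathieuSchudy07}; this is valid because the unconstrained optimum is at most $\mathrm{OPT}$. It then considers the closest feasible ranking to $\tau$, or a weighted variant of it.

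The key inequality I would aim for is the following. For the closest feasible $\hat\sigma$ to $\tau$, one has $\dKT(\hat\sigma,\tau)\le \dKT(\sigma^{\mathrm{opt}},\tau)$, so
\[
\textstyle\sum_i\dKT(\hat\sigma,\pi_i)\le m\,\dKT(\sigma^{\mathrm{opt}},\tau)+\sum_i\dKT(\tau,\pi_i).
\]
This alone is only a $3+\varepsilon$ bound. The improvement in~\cite{chak25improve} balances it against the best-of-inputs bound. I expect it also uses a generalized CFR subroutine that is given a ranking together with additive per-element or positional costs, for example minimizing a weighted sum of $\dKT$ to several rankings restricted to choosing the top-$k$ set.

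The main step, and the one I expect to be the obstacle, is to generalize Theorem~\ref{thm:CFR} to the subroutine that~\cite{chak25improve} actually needs. I would do this via the reduction in Observation~\ref{obs:fixedbase} and Lemma~\ref{lem:MatroidCFR}. Once the top-$k$ set $B$ is fixed, the best ranking orders $B$ and $E\setminus B$ internally by the reference order. The objective then becomes $\sum_{e\in B}w(e)+\text{const}$, where the weight $w(e)$ is the number of elements of $E\setminus B$ preceding $e$, summed appropriately. Whenever the subroutine's cost decomposes in this linear way over the chosen set, a minimum-weight base computed by the greedy algorithm (Proposition~\ref{prop:basic}) solves it exactly. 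The group-count constraints of $k$-fairness are then replaced verbatim by membership in $\B$.

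Finally, I would check that every step of the analysis in~\cite{chak25improve} uses only the following facts: the metric property, exactness of the CFR-type subroutine, and feasibility of $\sigma^{\mathrm{opt}}$ in the subroutine's domain. None of these depends on the specific partition structure, so the $(2+\varepsilon)$ ratio carries over unchanged.
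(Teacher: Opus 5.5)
Your proposal never actually reaches $2+\varepsilon$. The two bounds you prove are $3\,\mathrm{OPT}$ for best-of-inputs and $\mathrm{OPT}+2(1+\varepsilon)\mathrm{OPT}$ for the closest feasible ranking to an approximate Kemeny ranking. Each is about $3$ in the worst case. The step that is supposed to bring them down (``balances it against'', ``I expect it also uses\dots'') is left unspecified. Nothing you write shows that the better of these candidates costs at most $(2+\varepsilon)\mathrm{OPT}$. In fact, the CFR-based black-box framework of \cite{chak25improve} only yields $2.881$, and that is what the paper invokes for Theorem~\ref{thm:approx-flag}. Likewise, your final paragraph's premise that the analysis needs only an exact CFR subroutine is off: the $(2+\varepsilon)$ result does not use Theorem~\ref{thm:CFR} at all.

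The missing idea is a cut decomposition that separates the matroid choice from the ordering. Let $c(e,f)$ be the fraction of voters with $e\prec_{\pi_j}f$, let $\cost(B)=\sum_{b\in B,\,e\in E\setminus B}c(e,b)$, and let $\mathrm{RA}(S,\tau)=\sum_{i}\dKT(\tau,\pi_i|_S)$. For every ranking $\sigma$ with $\Top_k(\sigma)=B$,
\[
\sum_{i=1}^m\dKT(\sigma,\pi_i)=m\cdot\cost(B)+\mathrm{RA}(B,\sigma|_B)+\mathrm{RA}(E\setminus B,\sigma|_{E\setminus B}).
\]
Since $c(e,f)+c(f,e)=1$, one has $\cost(B)=\sum_{b\in B}\delta^-(b)-\binom{k}{2}$, where $\delta^-(b)=\sum_{f\neq b}c(f,b)$ does not depend on $B$. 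So a minimizing base $B^*$ is found by the greedy algorithm. Your weight, the number of elements of $E\setminus B$ preceding $e$, depends on $B$ and on a single reference order, which does not exist with several voters.

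Then run the Kenyon-Mathieu--Schudy PTAS separately on $B^*$ and on $E\setminus B^*$ and concatenate. No triangle inequality is needed. With $\hat\sigma$ optimal, the decomposition gives $m\cdot\cost(B^*)\le m\cdot\cost(\Top_k(\hat\sigma))\le\mathrm{OPT}$. The restrictions of $\hat\sigma$ to $B^*$ and $E\setminus B^*$ show that the two unconstrained optima sum to at most $\mathrm{OPT}$, because within-part pairs are a subset of all pairs. Charging the cross term and the within-part terms to $\mathrm{OPT}$ once each gives $\mathrm{OPT}+(1+\varepsilon)\mathrm{OPT}=(2+\varepsilon)\mathrm{OPT}$.
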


\begin{restatable}[$\star$]{theorem}{ThmApproxFlag}\label{thm:approx-flag}
There exists a $2.881$-approximation algorithm for Flag-Matroid-FRA.
\end{restatable}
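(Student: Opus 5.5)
The plan is to transfer the framework of Chakraborty--Das--Dey--Yan~\cite{chak25improve} for block-fair rank aggregation to the flag-matroid setting. I would keep its algorithmic skeleton unchanged and replace only its closest-fair-ranking subroutine by Algorithm~\ref{alg:opt-ranking}. Their $2.881$ bound comes from combining two candidate solutions and returning the better one.

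\textbf{First candidate (pick-a-perm).} For each input $\pi_j$, run Algorithm~\ref{alg:opt-ranking} to obtain its closest feasible ranking $\sigma_j$, and keep the $\sigma_j$ with the smallest aggregate cost. Let $\sigma^{\mathrm{OPT}}$ be an optimal feasible consensus. By Corollary~\ref{cor:flag-CFR}, $\dKT(\sigma_j,\pi_j)\le \dKT(\sigma^{\mathrm{OPT}},\pi_j)$. The triangle inequality for $\dKT$ and averaging over $j$ then give the classical factor-$3$ bound. More importantly, this yields the refined parametrized bound used in~\cite{chak25improve}, expressed in terms of $\sum_j \dKT(\pi_j,\sigma^{\mathrm{OPT}})$ and the pairwise distances among the inputs.

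\textbf{Second candidate.} Compute an unconstrained (near-)optimal Kemeny ranking $\tau$, or the specific LP-based/pivot ranking used in~\cite{chak25improve}, and apply Algorithm~\ref{alg:opt-ranking} to $\tau$. By optimality, the resulting feasible ranking $\sigma_\tau$ satisfies $\dKT(\sigma_\tau,\tau)\le \dKT(\sigma^{\mathrm{OPT}},\tau)$. Summing the triangle inequality over the voters bounds the cost of $\sigma_\tau$ by $m\cdot\dKT(\sigma^{\mathrm{OPT}},\tau)$ plus the cost of $\tau$.

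\textbf{Main obstacle.} The hard step is checking that every place where~\cite{chak25improve} uses a property of block-fair CFR is covered by what we have proved. Their analysis only needs two facts about the CFR oracle. The first is exact optimality for an arbitrary input ranking, which Corollary~\ref{cor:flag-CFR} supplies. The second is that the feasible set depends only on the prefix sets and not on the particular input ranking, which holds for $\F(\bM)$. I would audit their proof line by line to confirm that no partition-specific structure is used, for example group counts or the explicit form of $\dKT$ to the closest fair ranking. If some step does rely on such structure, I would replace it with the Bruhat-minimality statement of Theorem~\ref{thm:flag-CFR}, which is strictly stronger than mere optimality. Once the audit succeeds, the case analysis balancing the two candidates is unchanged, and it gives exactly the constant $2.881$.
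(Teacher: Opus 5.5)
Your proposal is correct and takes the paper's route: the paper proves this theorem by invoking the generic $2.881$-approximation framework of Chakraborty et al.~\cite{chak25improve}, which needs only an exact polynomial-time closest-feasible-ranking oracle, and Corollary~\ref{cor:flag-CFR} (via Algorithm~\ref{alg:opt-ranking}) supplies that oracle. Your reconstruction of the framework's internals (which two candidates are compared, the line-by-line audit, the Bruhat-minimality fallback) is speculative and not needed, since the paper treats the result of \cite{chak25improve} as a black box that applies to any constraint admitting such an oracle, and uses nothing more.
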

The $(2+\varepsilon)$-approximation for Matroid-FRA is obtained by replacing the colorful bi-partition subproblem in their framework with a matroidal analogue and then applying the PTAS by Kenyon-Mathieu and Schudy~\cite{KenyonMathieuSchudy07} to the two resulting parts. The $2.881$-approximation for Flag-Matroid-FRA follows from their generic black-box framework once Flag-Matroid-CFR is polynomial-time solvable. We defer the details to Appendix~\ref{app:approx}.

\section{Hardness of Rank Aggregation with Few Voters}\label{sec:hardness}
In the classical Kemeny consensus (i.e., rank aggregation without constraints), the complexity of the problem with a fixed number $m$ of voters has been investigated, and the cases $m=3$ and $m=5$ remain open \cite{dwor01web,bach19kmajor,krai23popular}.

In this section, we show that Matroid-FRA with $m$ voters is NP-hard for any constant $m\geq 2$.
Note that the case $m=1$ coincides with Matroid-CFR and is thus already solved in Theorem~\ref{thm:CFR}.
Indeed, to obtain the hardness result for all $m\geq 2$, it is sufficient to show the hardness for the cases $m=2$ and $m=3$, since for any $m'>3$, the problem with $m'$ voters can be reduced to the problem with $m'-2$ voters (see the proof of Corollary~\ref{cor:all_m}).

\begin{restatable}[$\star$]{theorem}{ThmHardTwo}\label{thm:two}
Matroid-FRA with $2$ voters is NP-hard.
\end{restatable}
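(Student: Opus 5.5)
The plan is to first reduce the objective for $m=2$ to a purely combinatorial selection problem, and then give a polynomial reduction from an NP-hard problem to that selection problem. I have not verified the hardness step below; it is the main risk in this plan.

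\textbf{Step 1: what the objective becomes for two voters.} For a pair $\{e,f\}$ on which $\pi_1$ and $\pi_2$ disagree, every $\sigma$ pays exactly $1$. For a pair on which they agree, $\sigma$ pays $0$ or $2$. So, up to an additive constant, $\sum_{j=1}^2\dKT(\sigma,\pi_j)$ equals twice the number of pairs of the poset $P=\prec_{\pi_1}\cap\prec_{\pi_2}$ that $\sigma$ reverses. This $P$ is a $2$-dimensional poset.

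Now fix $B=\Top_k(\sigma)$. The restrictions of $P$ to $B$ and to $E\setminus B$ each admit a linear extension, so, as in Observation~\ref{obs:fixedbase}, the pairs inside each block cost nothing. The only unavoidable violations are the pairs $e<_P f$ with $e\notin B$ and $f\in B$. Matroid-FRA with $m=2$ is therefore equivalent to the following problem: choose a base $B$ minimizing
\[
c(B)=\bigl|\{(e,f) \mid e<_P f,\ e\notin B,\ f\in B\}\bigr| .
\]
Equivalently, $c(B)=\sum_{f\in B}|\{e \mid e<_P f\}|-|\{(e,f)\mid e<_P f,\ e,f\in B\}|$.

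\textbf{Step 2: the target problem.} I would use a generalized partition matroid with $\ell_i=u_i=1$ for groups $G_1,\dots,G_g$ and $k=g$, so that $B$ picks exactly one element per group. To control the linear term, I would pad every candidate with dummy elements lying below it in $P$ (dummies placed early in both rankings, inside a separate group whose quota forces them out of $B$). This makes $|\{e \mid e<_P f\}|$ the same constant for every candidate $f$. The problem then becomes: pick one element per colour so as to maximize the number of comparable pairs in $P$ among the chosen elements. The question "is there a colourful chain?" is the special case where the optimum equals $\binom{g}{2}$. This is the colourful (rainbow) increasing subsequence problem in a permutation.

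\textbf{Step 3: hardness of the colourful chain.} I expect this to be the main obstacle, since ordinary maximum chains in $2$-dimensional posets are easy to find. I would reduce from 3-SAT. For each variable $x_r$, I would create a colour class with two elements $x_r^{\mathrm T}$ and $x_r^{\mathrm F}$, placed in consecutive positions of $\pi_1$ but in reversed relative order in $\pi_2$. For each clause, I would create a colour class with one element per literal. Each clause element is placed in both rankings so that it is comparable to the variable elements consistent with its literal and incomparable to the element of the opposite truth value. The delicate part is realising all required comparabilities with only two linear orders. To handle this, I would order the variable blocks along a "diagonal" of the two rankings, and insert each clause element just after the block of its own variable in both orders. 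Any other block with which it must be comparable would be arranged to be placed consistently, possibly by repeating variable blocks with equality-enforcing colour classes.

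\textbf{Step 4: concluding.} I would verify that a satisfying assignment yields a colourful chain, and conversely. Finally, I would check that all padding and parameters are polynomial, and that the resulting set system is indeed a generalized partition matroid (Example~\ref{ex:g-partition}).
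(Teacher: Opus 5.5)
Your Step~1 is correct and is the right lens. For $m=2$, the objective of the best ranking with top set $B$ is a constant plus $2c(B)$, where $c(B)$ counts the pairs $e<_P f$ with $e\notin B$ and $f\in B$.

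The proposal breaks at Step~2, because the padding you describe cannot exist. Suppose $a<_P b$ are two candidates. Transitivity gives $\{e\mid e<_P a\}\cup\{a\}\subseteq\{e\mid e<_P b\}$, so every dummy placed below $a$ is automatically below $b$. Adding elements does not change $P$ on the candidates, so the gap $|\{e\mid e<_P b\}|-|\{e\mid e<_P a\}|\ge 1$ can only grow as dummies are added. Hence the linear term $\sum_{f\in B}|\{e\mid e<_P f\}|$ is constant over all transversals only when the candidates form an antichain, and then there is nothing to optimize. In particular, down-set sizes strictly increase along any colourful chain. So $c(B)$ is not an affine function of the number of comparable pairs inside $B$, and ``there is a colourful chain'' does not correspond to any threshold on the objective. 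Even a correct NP-hardness proof for colourful chains in $2$-dimensional posets would therefore not yield hardness of Matroid-FRA. Such a proof is plausible (for instance via repetition-free common subsequences), but your Step~3 is only a sketch that leaves open whether the required comparabilities are realisable by two linear orders.

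What you are missing is that Step~1 already gives a clean target at threshold zero: $c(B)=0$ if and only if $B$ is a down-set of $P$. It therefore suffices to make $P$ a disjoint union of short chains, which two rankings realise trivially. This is exactly the paper's proof, by reduction from perfect 3-dimensional matching:
\begin{itemize}
\item For each triple $t=(x,y,z)\in T$, create elements $e_{tx},e_{ty},e_{tz}$.
\item Let $\pi_1$ concatenate the blocks $L_t=(e_{tx},e_{ty},e_{tz})$ in some order, and let $\pi_2$ concatenate them in the reverse block order with the same internal order.
\item Use the partition matroid with classes $P_v=\{e_{tv}\mid v\in t\in T\}$, each with quota $1$, so the rank is $3q$.
\end{itemize}
Every ranking then costs at least $\theta=\binom{3|T|}{2}-3|T|$, with equality if and only if each $L_t$ keeps its internal order. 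For a feasible $\sigma$ attaining $\theta$, this forces $e_{tx},e_{ty}\in\Top_{3q}(\sigma)$ whenever $e_{tz}\in\Top_{3q}(\sigma)$. Thus the $q$ triples selected through the $Z$-classes fill the top $3q$ positions, and since each class $P_v$ is hit exactly once, they form a perfect matching. Conversely, placing the matched triples first, each in its internal order, attains $\theta$.
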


\begin{restatable}[$\star$]{theorem}{ThmHardThree}\label{thm:three}
Matroid-FRA with $3$ voters is NP-hard.
\end{restatable}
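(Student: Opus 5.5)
We reduce from the NP-complete problem \textsc{Clique} restricted to $r$-regular graphs, using a generalized partition matroid as in Example~\ref{ex:g-partition}. The direct construction seems safer than reducing from Theorem~\ref{thm:two}. The standard padding trick (adding a ranking together with its reverse, which contributes the constant $\binom{n}{2}$ to the objective) only lowers the voter count by two. Also, adding a single third voter to a $2$-voter instance changes the objective in an uncontrolled way.

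\textbf{Construction.} Given a graph $G=(V,A)$ and an integer $t$, take the ground set $E=V\sqcup A\sqcup Z$, where $Z$ is a set of padding elements. Use two groups, $V$ and $A$ (with $Z$ as a third, unconstrained group if needed). Set the quotas so that every base of rank $k=t+\binom{t}{2}$ contains exactly $t$ vertex elements and exactly $\binom{t}{2}$ edge elements.

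\textbf{Structure of the objective.} Fix a base $B=\Top_k(\sigma)$. As in Observation~\ref{obs:fixedbase}, $\sum_{i}\dKT(\sigma,\pi_i)$ splits into a \emph{cross term} and two \emph{internal terms}. The cross term counts, over pairs $c\notin B$, $b\in B$, the number of voters ranking $c$ before $b$. The internal terms are the Kemeny costs of $B$ and of $E\setminus B$. Since Kemeny with three voters is hard in general, the three rankings must agree on the relative order inside $V$ and inside $A$ (say, a fixed order in each). Then the only discordant pairs are $(v,a)$ pairs. The aim is to choose the voters so that the internal cost, although it depends on how many vertex--edge pairs land in the same block, depends only on the cardinalities $t$ and $\binom{t}{2}$ and not on which elements are chosen.

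\textbf{Choice of voters.} Take $\pi_1=(V,A,Z)$ and $\pi_2=(A,V,Z)$, which between them count each vertex--edge pair exactly once. Take $\pi_3$ to list the vertices in order and, for each edge $a=uv$, place $a$ immediately after its later endpoint. In $\pi_3$ every edge element is then preceded by both its endpoints, and by a number of non-incident vertices determined by its position. Compare with a reference ordering in which all edges come after all vertices. The cross term then equals a quantity depending only on $t$ and $\binom{t}{2}$, plus the number of incidences (edge $a\in B$, endpoint $v\notin B$), plus a position-dependent correction. For that correction, I plan to cancel it by adding a fourth-voter-like reversed copy restricted to the $A$-part. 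Concretely, I would attach to each edge a padding element of $Z$ whose placement in $\pi_1,\pi_2$ offsets the position dependence, using the regularity of $G$ so that every vertex has the same number $r$ of incidences.

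\textbf{Correctness and main obstacle.} With the correction in place, the objective is a constant plus the number of chosen edges with an unchosen endpoint. This is zero if and only if the $t$ chosen vertices span the $\binom{t}{2}$ chosen edges, i.e.\ if and only if $G$ has a $t$-clique. Hence the decision threshold is a computable constant, which gives NP-hardness. The step I expect to be hardest is the exact bookkeeping that makes all terms other than the incidence count independent of $B$: the position-dependent part of $\pi_3$ and the internal-block costs. My first attempt will be to compute the cost of a single vertex--edge pair as a function of (incident or not, which block each element lies in) across the three voters, and then tune the placement of padding so that, for non-incident pairs, the count of voters ranking $v$ before $a$ is the same for every such pair.
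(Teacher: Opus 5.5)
Your construction cannot give hardness: as specified, the objective is a constant plus a \emph{linear} function of the chosen base, so the greedy algorithm solves it.

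Restricted to $V\cup A$, your $\pi_1$ and $\pi_2$ disagree on every vertex--edge pair and agree on every other pair. You also make $\pi_3$ agree with them inside $V$ and inside $A$. Hence the majority relation on $V\cup A$ is $\pi_3$ itself, which is transitive. The extra cost of disagreeing with it on a pair $\{b,c\}$ is $w(b,c)=1$ for vertex--edge pairs and $w(b,c)=3$ for vertex--vertex and edge--edge pairs, so it depends only on the types. With $Z$ placed last, as in $\pi_1,\pi_2$, the best ranking with top set $B$ lists $B$ and then $E\setminus B$, each in $\pi_3$-order. Its cost is
\[
\mathrm{const}+\sum_{b\in B,\;c\notin B,\;c\prec_{\pi_3}b} w(b,c)
=\mathrm{const}+\sum_{b\in B}W(b)-\sum_{\{b,c\}\subseteq B}w(b,c),
\qquad W(b)=\sum_{c\prec_{\pi_3}b}w(b,c).
\]
The last sum depends only on $|B\cap V|=t$ and $|B\cap A|=\binom{t}{2}$, which your quotas fix. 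So your \emph{incidence count plus position-dependent correction} splits a linear function into two non-linear pieces. Cancelling the correction is therefore not bookkeeping: the padding would have to produce the entire quadratic incidence term by itself. For example, take the path $u_1u_2u_3$ with $t=2$. The non-clique base $\{u_1,u_3,a_{12}\}$ costs $4$ above the baseline, while the clique $\{u_2,u_3,a_{23}\}$ costs $11$.

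Padding as you describe it cannot supply that term. Your quotas force $B\cap Z=\emptyset$, so every pair $\{b,z\}$ with $b\in B$ adds a term depending on $b$ alone, which is just more linear weight. Moreover, the number of voters ranking $v$ before $a$ is fixed by the restrictions of $\pi_1,\pi_2,\pi_3$ to $\{v,a\}$, and padding does not change these restrictions. So your final tuning step does nothing. In your framework, the only possible source of non-linearity is a \emph{cycle} in the majority relation (necessarily through $Z$). Such a cycle makes the internal Kemeny costs of $B$ and $E\setminus B$ depend on $B$, which is exactly what you set out to avoid. Incidentally, unconstrained Kemeny with three voters is not known to be NP-hard; the paper notes that the case $m=3$ is open.

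The paper's proof is built precisely on such cycles. It reduces from X3C on the ground set $\{0,1,2\}^d$, using three lexicographic orders induced by the cyclic shifts $\rho_1,\rho_2,\rho_3$. At every prefix $\alpha$, the blocks $\llbracket\alpha0\rrbracket,\llbracket\alpha1\rrbracket,\llbracket\alpha2\rrbracket$ then form a Condorcet cycle. Each triple drawn across these blocks costs at least $4$, with equality iff it is cyclically ordered. At the resulting lower bound $\theta$, this forces every prefix $\Top_\ell(\sigma)$ to entirely contain or entirely avoid one of the three blocks. Mandatory and forbidden elements in $\llbracket\beta_j1\rrbracket$ and $\llbracket\beta_j2\rrbracket$ then force the block $\llbracket\beta_j0\rrbracket$, which encodes $C_j$, to be all in or all out of the top-$k$ set. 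The capacity-one classes $P_u$ turn this into an exact cover.
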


As suggested by the current complexity status of the unconstrained setting, odd numbers of voters are harder to handle. Indeed, proving Theorem~\ref{thm:three} requires an involved construction of three input rankings.
Both proofs use only simple partition matroids. See Appendix~\ref{app:hardness} for the proofs and Figures~\ref{fig:two} and \ref{fig:three} for illustrations of the reductions.

\begin{restatable}[$\star$]{corollary}{CorAll}\label{cor:all_m}
For any $m\geq 2$, Matroid-FRA with $m$ voters is NP-hard. The statement holds even when matroids are restricted to partition matroids.
\end{restatable}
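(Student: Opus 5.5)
The statement follows by combining Theorems~\ref{thm:two} and~\ref{thm:three} with a padding reduction that lowers the number of voters by two. The reduction is induction on $m$: given an instance with $m'-2$ voters ($m'\ge 4$), build an instance with $m'$ voters by appending one arbitrary ranking $\rho$ together with its reverse $\rho^{\mathrm{rev}}$. The ground set and matroid are unchanged.

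The key identity is that for every ranking $\sigma$ over $E$,
\[
\dKT(\sigma,\rho)+\dKT(\sigma,\rho^{\mathrm{rev}})=\binom{n}{2}.
\]
To see this, look at any unordered pair $\{e,f\}$. The rankings $\rho$ and $\rho^{\mathrm{rev}}$ order $e,f$ oppositely, so $\sigma$ disagrees with exactly one of them on that pair. Summing over all pairs gives the identity.

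Hence the new objective equals the old objective plus the constant $\binom{n}{2}$, and the feasibility constraint $\Top_k(\sigma)\in\B$ is unchanged. So optimal solutions, and optimal values up to the known additive constant, coincide. This is also true for the decision version with threshold shifted by $\binom{n}{2}$. The reduction is clearly polynomial and leaves the matroid untouched, so partition matroids stay partition matroids.

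Starting from the base cases $m=2$ (Theorem~\ref{thm:two}) and $m=3$ (Theorem~\ref{thm:three}), both established with partition matroids, repeated padding covers every even and every odd $m\ge 2$. This yields NP-hardness for all $m\ge 2$, even under partition matroid constraints.

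The only real content lies in the base cases, which are assumed here. The remaining check is routine: the padding must preserve the input format, including how the matroid oracle or partition is presented, and this holds trivially since nothing about $M$ changes.
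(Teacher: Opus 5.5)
Your proposal is correct and matches the paper's proof. Both start from the NP-hard base cases $m=2$ and $m=3$, which use partition matroids. Both then repeatedly add an arbitrary ranking together with its reverse. This shifts every objective value by the constant $\binom{|E|}{2}$ and leaves the feasible set and the matroid unchanged.
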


\section*{Acknowledgments}
We thank Yuya Ryuzaki for helpful discussions at an early stage of this work.
The second author was supported by JST ERATO Grant Number JPMJER2301, JST CRONOS Grant Number JPMJCS24K2, and JSPS KAKENHI Grant Number JP26K14706.

\bibliography{references.bib}

@book{garey1979,
  author    = {Michael R. Garey and David S. Johnson},
  title     = {Computers and Intractability: A Guide to the Theory of NP-Completeness},
  publisher = {W. H. Freeman and Company},
  address   = {San Francisco},
  year      = {1979},
  doi       = {10.5555/578533},
  url       = {https://dl.acm.org/doi/10.5555/578533},
}

@incollection{karp1972,
  author    = {Richard M. Karp},
  title     = {Reducibility Among Combinatorial Problems},
  booktitle = {Complexity of Computer Computations},
  pages     = {85--103},
  publisher = {Plenum Press},
  address   = {New York},
  year      = {1972},
  doi       = {10.1007/978-1-4684-2001-2_9},
  url       = {https://doi.org/10.1007/978-1-4684-2001-2_9},
}

@article{hema05kemeny,
  author  = {Edith Hemaspaandra and Holger Spakowski and Jörg Vogel},
  title   = {The complexity of {Kemeny} elections},
  journal = {Theoretical Computer Science},
  volume  = {349},
  pages   = {382--391},
  year    = {2005},
  doi     = {10.1016/j.tcs.2005.08.031},
  url     = {https://doi.org/10.1016/j.tcs.2005.08.031},
}

@article{ailo08agg,
  author    = {Nir Ailon and Moses Charikar and Alantha Newman},
  title     = {Aggregating inconsistent information: Ranking and clustering},
  journal   = {Journal of the ACM},
  volume    = {55},
  number    = {5},
  pages     = {23:1--23:27},
  articleno = {23},
  year      = {2008},
  doi       = {10.1145/1411509.1411513},
  url       = {https://doi.org/10.1145/1411509.1411513},
}

@inproceedings{wang24survey,
  author    = {Siyi Wang and Qi Deng and Shiwei Feng and Hong Zhang and Chao Liang},
  title     = {A Survey on Rank Aggregation},
  booktitle = {Proceedings of the 33rd International Joint Conference on Artificial Intelligence (IJCAI 2024)},
  pages     = {8281--8289},
  publisher = {International Joint Conferences on Artificial Intelligence Organization},
  year      = {2024},
  doi       = {10.24963/ijcai.2024/915},
  url       = {https://doi.org/10.24963/ijcai.2024/915},
  note      = {Survey Track},
}

@inproceedings{chak25improve,
  author    = {Diptarka Chakraborty and Himika Das and Sanjana Dey and Alvin Hong Yao Yan},
  title     = {Improved Rank Aggregation Under Fairness Constraint},
  booktitle = {Proceedings of the 34th International Joint Conference on Artificial Intelligence (IJCAI 2025)},
  pages     = {330--338},
  publisher = {International Joint Conferences on Artificial Intelligence Organization},
  year      = {2025},
  doi       = {10.24963/ijcai.2025/38},
  url       = {https://doi.org/10.24963/ijcai.2025/38},
  note      = {Main Track},
}

@inproceedings{chak22fair,
  author        = {Diptarka Chakraborty and Syamantak Das and Arindam Khan and Aditya Subramanian},
  title         = {Fair Rank Aggregation},
  booktitle     = {Advances in Neural Information Processing Systems (NeurIPS 2022)},
  volume        = {35},
  pages         = {23965--23978},
  year          = {2022},
  eprint        = {2308.10499},
  archivePrefix = {arXiv},
  primaryClass  = {cs.DS},
  url           = {https://arxiv.org/abs/2308.10499},
  note          = {Conference version in NeurIPS 2022; arXiv version cited in the main text},
}

@inproceedings{fitz21kemeny,
  author    = {Zack Fitzsimmons and Edith Hemaspaandra},
  title     = {{Kemeny} Consensus Complexity},
  booktitle = {Proceedings of the 30th International Joint Conference on Artificial Intelligence (IJCAI 2021)},
  pages     = {196--202},
  publisher = {International Joint Conferences on Artificial Intelligence Organization},
  year      = {2021},
  doi       = {10.24963/ijcai.2021/28},
  url       = {https://doi.org/10.24963/ijcai.2021/28},
  note      = {Main Track},
}

@article{krai23popular,
  author  = {Sonja Kraiczy and Ágnes Cseh and David Manlove},
  title   = {On weakly and strongly popular rankings},
  journal = {Discrete Applied Mathematics},
  volume  = {340},
  pages   = {134--152},
  year    = {2023},
  doi     = {10.1016/j.dam.2023.06.041},
  url     = {https://doi.org/10.1016/j.dam.2023.06.041},
}

@inproceedings{dwor01web,
  author    = {Cynthia Dwork and Ravi Kumar and Moni Naor and Dandapani Sivakumar},
  title     = {Rank aggregation methods for the {Web}},
  booktitle = {Proceedings of the 10th International Conference on World Wide Web (WWW 2001)},
  pages     = {613--622},
  publisher = {ACM},
  year      = {2001},
  doi       = {10.1145/371920.372165},
  url       = {https://doi.org/10.1145/371920.372165},
}

@article{bach19kmajor,
  author  = {Georg Bachmeier and Felix Brandt and Christian Geist and Paul Harrenstein and Keyvan Kardel and Dominik Peters and Hans Georg Seedig},
  title   = {$k$-Majority digraphs and the hardness of voting with a constant number of voters},
  journal = {Journal of Computer and System Sciences},
  volume  = {105},
  pages   = {130--157},
  year    = {2019},
  doi     = {10.1016/j.jcss.2019.04.005},
  url     = {https://doi.org/10.1016/j.jcss.2019.04.005},
}

@article{betz09fpt,
  author  = {Nadja Betzler and Michael R. Fellows and Jiong Guo and Rolf Niedermeier and Frances A. Rosamond},
  title   = {Fixed-parameter algorithms for {Kemeny} rankings},
  journal = {Theoretical Computer Science},
  volume  = {410},
  pages   = {4554--4570},
  year    = {2009},
  doi     = {10.1016/j.tcs.2009.08.033},
  url     = {https://doi.org/10.1016/j.tcs.2009.08.033},
}

@article{Kem59,
  author  = {John G. Kemeny},
  title   = {Mathematics without Numbers},
  journal = {Daedalus},
  volume  = {88},
  number  = {4},
  pages   = {577--591},
  year    = {1959},
}

@article{You88,
  author  = {Hobart Peyton Young},
  title   = {Condorcet's Theory of Voting},
  journal = {American Political Science Review},
  volume  = {82},
  number  = {4},
  pages   = {1231--1244},
  year    = {1988},
  doi     = {10.2307/1961757},
  url     = {https://doi.org/10.2307/1961757},
}

@inproceedings{ZehlikeBCM17,
  author    = {Meike Zehlike and Francesco Bonchi and Carlos Castillo and Sara Hajian and Mohamed Megahed and Ricardo Baeza-Yates},
  title     = {FA*IR: A Fair Top-k Ranking Algorithm},
  booktitle = {Proceedings of the 26th ACM International Conference on Information and Knowledge Management (CIKM 2017)},
  pages     = {1569--1578},
  publisher = {ACM},
  year      = {2017},
  doi       = {10.1145/3132847.3132938},
  url       = {https://doi.org/10.1145/3132847.3132938},
}

@inproceedings{CelisSV18,
  author    = {L. Elisa Celis and Damian Straszak and Nisheeth K. Vishnoi},
  title     = {Ranking with Fairness Constraints},
  booktitle = {45th International Colloquium on Automata, Languages, and Programming (ICALP 2018)},
  series    = {Leibniz International Proceedings in Informatics (LIPIcs)},
  volume    = {107},
  pages     = {{28:1--28:15}},
  publisher = {Schloss Dagstuhl -- Leibniz-Zentrum für Informatik},
  year      = {2018},
  doi       = {10.4230/LIPIcs.ICALP.2018.28},
  url       = {https://drops.dagstuhl.de/entities/document/10.4230/LIPIcs.ICALP.2018.28},
}

@book{Oxley11,
  author    = {James Oxley},
  title     = {Matroid Theory},
  publisher = {Oxford University Press},
  year      = {2011},
  doi       = {10.1093/acprof:oso/9780198566946.001.0001},
  url       = {https://doi.org/10.1093/acprof:oso/9780198566946.001.0001},
  edition   = {2nd},
}

@book{Schrijver03,
  author    = {Alexander Schrijver},
  title     = {Combinatorial Optimization: Polyhedra and Efficiency},
  publisher = {Springer},
  year      = {2003},
}

@article{Gale68,
  author  = {David Gale},
  title   = {Optimal Assignments in an Ordered Set: An Application of Matroid Theory},
  journal = {Journal of Combinatorial Theory},
  volume  = {4},
  number  = {2},
  pages   = {176--180},
  year    = {1968},
  doi     = {10.1016/S0021-9800(68)80039-0},
  url     = {https://doi.org/10.1016/S0021-9800(68)80039-0},
}

@article{Vince02,
  author  = {Andrew Vince},
  title   = {A Framework for the Greedy Algorithm},
  journal = {Discrete Applied Mathematics},
  volume  = {121},
  number  = {1--3},
  pages   = {247--260},
  year    = {2002},
  doi     = {10.1016/S0166-218X(01)00362-6},
  url     = {https://doi.org/10.1016/S0166-218X(01)00362-6},
}

@article{JarraLorscheid24,
  author  = {Manoel Jarra and Oliver Lorscheid},
  title   = {Flag matroids with coefficients},
  journal = {Advances in Mathematics},
  volume  = {436},
  pages   = {109396},
  year    = {2024},
  doi     = {10.1016/j.aim.2023.109396},
  url     = {https://doi.org/10.1016/j.aim.2023.109396},
}

@article{HorschSzigeti21,
  author  = {Florian Hörsch and Zoltán Szigeti},
  title   = {Packing of mixed hyperarborescences with flexible roots via matroid intersection},
  journal = {The Electronic Journal of Combinatorics},
  volume  = {28},
  number  = {3},
  pages   = {P3.29},
  year    = {2021},
  doi     = {10.37236/10105},
  url     = {https://doi.org/10.37236/10105},
}

@article{yoko17gpoly,
  author  = {Yu Yokoi},
  title   = {A Generalized Polymatroid Approach to Stable Matchings with Lower Quotas},
  journal = {Mathematics of Operations Research},
  volume  = {42},
  number  = {1},
  pages   = {238--255},
  year    = {2017},
  doi     = {10.1287/moor.2016.0802},
  url     = {https://doi.org/10.1287/moor.2016.0802},
}

@incollection{CameronDMS22,
  author    = {Amanda Cameron and Rodica Dinu and Mateusz Michałek and Tim Seynnaeve},
  title     = {Flag matroids: algebra and geometry},
  booktitle = {Interactions with Lattice Polytopes},
  pages     = {73--114},
  publisher = {Springer International Publishing},
  address   = {Cham},
  year      = {2022},
  doi       = {10.1007/978-3-030-98327-7_4},
  url       = {https://doi.org/10.1007/978-3-030-98327-7_4},
}

@book{Borovik03,
  author    = {Alexandre V. Borovik and Israel M. Gelfand and Neil White},
  title     = {Coxeter Matroids},
  series    = {Progress in Mathematics},
  volume    = {216},
  publisher = {Birkh\"auser Boston},
  address   = {Boston, MA},
  year      = {2003},
  doi       = {10.1007/978-1-4612-2066-4},
  url       = {https://doi.org/10.1007/978-1-4612-2066-4},
}

@article{brandenburg2024quotients,
  author  = {Marie-Charlotte Brandenburg and Georg Loho and Ben Smith},
  title   = {Quotients of {M}-convex Sets and {M}-convex Functions},
  journal = {Combinatorial Theory},
  volume  = {6},
  number  = {1},
  year    = {2026},
  doi     = {10.5070/C66165701},
  url     = {https://doi.org/10.5070/C66165701},
}

@book{Frankbook,
  author    = {Andr{\'a}s Frank},
  title     = {Connections in Combinatorial Optimization},
  series    = {Oxford Lecture Series in Mathematics and its Applications},
  volume    = {38},
  publisher = {Oxford University Press},
  address   = {Oxford},
  year      = {2011},
  isbn      = {978-0-19-920527-1},
}

@book{BjornerBrenti,
  author    = {Anders Björner and Francesco Brenti},
  title     = {Combinatorics of Coxeter Groups},
  series    = {Graduate Texts in Mathematics},
  volume    = {231},
  publisher = {Springer},
  year      = {2005},
  doi       = {10.1007/3-540-27596-7},
  url       = {https://doi.org/10.1007/3-540-27596-7},
}

@article{BjornerBrenti1996,
  author  = {Anders Björner and Francesco Brenti},
  title   = {An improved tableau criterion for {B}ruhat order},
  journal = {The Electronic Journal of Combinatorics},
  volume  = {3},
  number  = {1},
  pages   = {R22},
  year    = {1996},
  doi     = {10.37236/1246},
  url     = {https://doi.org/10.37236/1246},
}

@article{BorovikGelfandStone02,
  author  = {Alexandre V. Borovik and Israel M. Gelfand and David A. Stone},
  title   = {On the Topology of the Combinatorial Flag Varieties},
  journal = {Discrete \& Computational Geometry},
  volume  = {27},
  number  = {2},
  pages   = {195--214},
  year    = {2002},
  doi     = {10.1007/s00454-001-0061-8},
  url     = {https://doi.org/10.1007/s00454-001-0061-8},
}

@incollection{Bry86,
  author    = {Thomas Brylawski},
  editor    = {White, Neil},
  title     = {Constructions},
  booktitle = {Theory of Matroids},
  series    = {Encyclopedia of Mathematics and its Applications},
  volume    = {26},
  pages     = {127--223},
  publisher = {Cambridge University Press},
  address   = {Cambridge},
  year      = {1986},
  doi       = {10.1017/CBO9780511629563.010},
  url       = {https://doi.org/10.1017/CBO9780511629563.010},
}

@article{BartholdiTT89,
  author  = {Bartholdi, III, John J. and Craig A. Tovey and Michael A. Trick},
  title   = {Voting schemes for which it can be difficult to tell who won the election},
  journal = {Social Choice and Welfare},
  volume  = {6},
  number  = {2},
  pages   = {157--165},
  year    = {1989},
  doi     = {10.1007/BF00303169},
  url     = {https://doi.org/10.1007/BF00303169},
}

@inproceedings{KenyonMathieuSchudy07,
  author    = {Claire Kenyon{-}Mathieu and Warren Schudy},
  title     = {How to rank with few errors},
  booktitle = {Proceedings of the 39th Annual ACM Symposium on Theory of Computing (STOC 2007)},
  pages     = {95--103},
  publisher = {ACM},
  year      = {2007},
  doi       = {10.1145/1250790.1250806},
  url       = {https://doi.org/10.1145/1250790.1250806},
}

@article{NaturalFlag,
  author  = {Anna de Mier},
  title   = {A natural family of flag matroids},
  journal = {SIAM Journal on Discrete Mathematics},
  volume  = {21},
  number  = {1},
  pages   = {130--140},
  year    = {2007},
  doi     = {10.1137/050627873},
  url     = {https://doi.org/10.1137/050627873},
}

@article{YoungLevenglick78,
  author  = {H. Peyton Young and Arthur Levenglick},
  title   = {A consistent extension of {Condorcet}'s election principle},
  journal = {SIAM Journal on Applied Mathematics},
  volume  = {35},
  number  = {2},
  pages   = {285--300},
  year    = {1978},
  doi     = {10.1137/0135023},
  url     = {https://doi.org/10.1137/0135023},
}

@article{brualdi71,
  author  = {Richard A. Brualdi},
  title   = {Induced matroids},
  journal = {Proceedings of the American Mathematical Society},
  volume  = {29},
  number  = {2},
  pages   = {213--221},
  year    = {1971},
  doi     = {10.2307/2038115},
  url     = {https://doi.org/10.2307/2038115},
}

@article{perfect69,
  author  = {Hazel Perfect},
  title   = {Independence spaces and combinatorial problems},
  journal = {Proceedings of the London Mathematical Society},
  volume  = {s3-19},
  number  = {1},
  pages   = {17--30},
  year    = {1969},
  doi     = {10.1112/plms/s3-19.1.17},
  url     = {https://doi.org/10.1112/plms/s3-19.1.17},
}

@inproceedings{wei22sigmod,
  author    = {Dong Wei and Md Mouinul Islam and Baruch Schieber and Senjuti {Basu Roy}},
  title     = {Rank aggregation with proportionate fairness},
  booktitle = {Proceedings of the 2022 International Conference on Management of Data (SIGMOD 2022)},
  pages     = {262--275},
  publisher = {ACM},
  year      = {2022},
  doi       = {10.1145/3514221.3517865},
  url       = {https://doi.org/10.1145/3514221.3517865},
}

@inproceedings{BredereckFILS18,
  author    = {Robert Bredereck and Piotr Faliszewski and Ayumi Igarashi and Martin Lackner and Piotr Skowron},
  title     = {Multiwinner Elections with Diversity Constraints},
  booktitle = {Proceedings of the 32nd AAAI Conference on Artificial Intelligence (AAAI 2018)},
  pages     = {933--940},
  year      = {2018},
  publisher = {AAAI Press},
  url       = {https://ojs.aaai.org/index.php/AAAI/article/view/11457}
}

@inproceedings{GorantlaMDL23,
  author    = {Sruthi Gorantla and Anay Mehrotra and Amit Deshpande and Anand Louis},
  title     = {Sampling Individually-Fair Rankings that are Always Group Fair},
  booktitle = {Proceedings of the 2023 AAAI/ACM Conference on AI, Ethics, and Society (AIES '23)},
  pages     = {205--216},
  year      = {2023},
  publisher = {Association for Computing Machinery},
  doi       = {10.1145/3600211.3604671},
  url       = {https://doi.org/10.1145/3600211.3604671}
}

@article{DoganImamuraYenmez25,
  author  = {Battal Do{\u{g}}an and Kenzo Imamura and M. Bumin Yenmez},
  title   = {Market Design with Deferred Acceptance: A Recipe for Characterizations},
  journal = {Journal of Economic Theory},
  volume  = {228},
  pages   = {106057},
  year    = {2025},
  doi     = {10.1016/j.jet.2025.106057},
  url     = {https://doi.org/10.1016/j.jet.2025.106057},
}
\clearpage
\appendix

\section{A Note on Exchange-Based Proofs for Block-Fairness}\label{app:exchange}

In \cite{chak22fair}, the authors give an exact algorithm for Closest Fair Ranking under block-$k$-fairness constraints. When specialized to this setting, our algorithm for Flag-Matroid-CFR produces the same ranking as theirs.

They establish optimality by an exchange argument, aiming to show that a ranking different from the algorithm's output admits a feasible exchange of two elements that improves the objective value. However, to the best of our understanding, the exchange step as stated requires an additional justification: the proposed swap need not preserve all block-prefix constraints simultaneously. We explain the issue in \cite[Claims~3.8 and~3.9]{chak22fair}, using the notation of the present paper.

The key point is that block-$k$-fairness is a \emph{simultaneous} feasibility condition:
By its definition, a block-$k$-fair ranking must satisfy the fairness constraint for every block-prefix of length at least~$k$.
By contrast, the swap argument proved in their Theorem~3.2 concerns fairness for one fixed prefix length only. Thus, when Theorem~3.2 is invoked
inside their Claim~3.8 to conclude that swapping two elements produces another
block-$k$-fair ranking, this conclusion does not follow from Theorem~3.2 alone.

This issue can already be seen in a small example. Consider the groups
\[
G_1=\{a_1,a_2\},\qquad G_2=\{b_1,b_2\},\qquad G_3=\{c\}
\]
with block length $b=2$ and $k=2$, and let
\[
\alpha_1=\alpha_2=\alpha_3=0,\qquad \beta_1=\tfrac12,\qquad \beta_2=\beta_3=1.
\]
Then block-$k$-fairness requires that the top-$2$ prefix may contain at most one element of $G_1$, and the top-$4$ prefix may contain at most two elements of $G_1$. There are no other constraints, since $\alpha_1=\alpha_2=\alpha_3=0$ and $\beta_2=\beta_3=1$ impose nothing.

Now consider the input ranking
\[
\pi=(a_1,a_2,b_1,b_2,c).
\]
Then, the output of the algorithm in \cite{chak22fair} (which coincides with our algorithm's output) is
\[
\sigma=(a_1,b_1,a_2,b_2,c).
\]
Consider a block-$k$-fair ranking
\[
\pi^*=(a_1,c,b_1,b_2,a_2).
\]
This satisfies block-$k$-fairness and also preserves the intra-group orderings (i.e., the input ordering in each group). Since $\Top_{4}(\sigma)\neq \Top_{4}(\pi^*)$, following the argument in \cite[Claim~3.8]{chak22fair}, take the elements $a_2\in \Top_{4}(\sigma)\setminus \Top_{4}(\pi^*)$ and $c\in \Top_{4}(\pi^*)\setminus \Top_{4}(\sigma)$. The intended argument is that, by swapping $a_2$ and $c$ in $\pi^*$, one obtains a block-$k$-fair ranking $\pi^{**}$ that is closer to $\pi$ than $\pi^*$. As we have $a_2\prec_{\pi} c$, swapping $a_2$ and $c$ indeed reduces the objective value. However, the resulting ranking
\[
\pi^{**}=(a_1,a_2,b_1,b_2,c)
\]
contains more than one element of $G_1$ in its top-$2$ prefix, violating block-$k$-fairness.

Therefore, the step in Claim~3.8 asserting that the exchanged ranking is still
block-$k$-fair does not seem to follow as stated without an additional argument. The same concern also applies to Claim~3.9. Thus, the exchange-based proof of Theorem~3.4 appears to require an additional justification at this step.

Importantly, Claims~3.8 and~3.9 and Theorem~3.4 themselves are true; our proof based on Bruhat order establishes their correctness.

\section{Failure of Greedy Approach without Quotient Relations}\label{app:relaxed}
As a generalization of block-$k$-fairness, we consider flag matroid constraints on rankings. However, for a sequence of matroids, one can consider weaker conditions, and it is natural to ask to what extent the tractability of Flag-Matroid-CFR extends. We do not know a clear answer, but here we provide some observations.

Suppose that we have a sequence $(M_1, M_2,\dots, M_s)$ of matroids on the same ground set $E$ with ranks $k_1<k_2<\cdots <k_s$.
Each matroid is given as $M_i=(E, \B_i)$.

Note that, if we have no additional assumption at all, then even finding a feasible ranking is NP-hard when $s\geq 3$. This follows easily from the fact that 3-matroid intersection is NP-hard.

We then consider imposing some assumptions on $(M_1,M_2,\dots, M_s)$ to obtain a reasonable class. As gradual restrictions, we can consider the following three settings:

\begin{enumerate}
\item The independent set families satisfy $\I_1\subseteq \I_2\subseteq \cdots\subseteq \I_s$. This is equivalent to the condition that for every $i\in[s-1]$ and $B\in\B_i$, there exists $B'\in \B_{i+1}$ with $B\subseteq B'$.
\item The above condition holds, and in addition, for every $i\in[s]\setminus \{1\}$ and $B\in\B_i$, there exists $B'\in \B_{i-1}$ with $B'\subseteq B$.
\item $(M_1,M_2,\dots, M_s)$ forms a flag matroid.
\end{enumerate}

Clearly, 2 implies 1. It is also not hard to see that 3 implies 2; a quotient relation implies $\rk_{M_{i}}(X)-\rk_{M_{i}}(\emptyset)\leq \rk_{M_{i+1}}(X)-\rk_{M_{i+1}}(\emptyset)$ for every $X\subseteq E$, which implies $\I_i\subseteq \I_{i+1}$. In addition, for any $B\in \B_i$, we have $\rk_{M_{i-1}}(E)-\rk_{M_{i-1}}(B)\leq \rk_{M_{i}}(E)-\rk_{M_{i}}(B)=0$, which implies $\rk_{M_{i-1}}(B)=\rk_{M_{i-1}}(E)$ and hence the existence of $B'\in \B_{i-1}$ with $B'\subseteq B$.

There are indeed gaps between any two of these three classes.
An example satisfying 1 but violating 2 is given by $\B_1=\{\{1,2\}\}$ and $\B_2=\{\{1,2,3\}, \{1,2,4\},\{1,3,4\}\}$.
An example of a matroid pair satisfying 2 but violating 3 is given in Figure~\ref{fig:greedy}.

Given an input ranking $\pi$, consider the following two greedy algorithms.

\medskip
\noindent{\bf Forward Greedy Algorithm}
\begin{enumerate}
\item Let $B_0=\emptyset$ and for $i=1,2,\dots, s$, do the following.
\begin{itemize}
\item Let $X\gets B_{i-1}$.
\item For each $e\in E\setminus B_{i-1}$, in the ascending order of $\pi$, if $X\cup \{e\}\in \I_i$, then $X\gets X\cup \{e\}$.
\item Set $B_i\gets X$.
\end{itemize}
\item Let $\bB=(B_1, \dots, B_s)$ and output $\pi_{\bB}$ as defined in Section~\ref{sec:matroid-CFR}.
\end{enumerate}

\smallskip
\noindent{\bf Backward Greedy Algorithm}
\begin{enumerate}
\item Let $B_{s+1}= E$ and for $i=s, s-1,\dots, 1$, do the following.
\begin{itemize}
\item Let $X\gets \emptyset$.
\item For each $e\in B_{i+1}$, in the ascending order of $\pi$, if $X\cup \{e\}\in \I_i$, then $X\gets X\cup \{e\}$.
\item Set $B_i\gets X$.
\end{itemize}
\item Let $\bB=(B_1, \dots, B_s)$ and output $\pi_{\bB}$ as defined in Section~\ref{sec:matroid-CFR}.
\end{enumerate}

Then the following hold.

\begin{itemize}
\item In setting 3, i.e., when the matroids form a flag matroid, both algorithms output the same ranking, and it is the unique optimal solution to Flag-Matroid-CFR.

\item In setting 2, both algorithms always output feasible rankings, but it is possible that neither output is optimal. We demonstrate this below.

\item In setting 1, the Forward Greedy Algorithm always outputs a feasible ranking, while the backward one may not.
\end{itemize}

We provide an example that satisfies the conditions in setting 2 and on which neither of the two greedy algorithms finds an optimal solution.

Consider a pair of matroids $M_1$ and $M_2$ on $E=\{1,2,\dots, 12\}$, which are both partition matroids defined as in Figure~\ref{fig:greedy}.
The smaller matroid $M_1$ has rank $6$ and is defined by the partition drawn in blue. The numbers written in blue show the upper quotas of those classes. The larger matroid $M_2$ has rank $8$, and its partition classes are given by the red dashed regions, whose upper quotas are also shown.
This pair belongs to setting 2, while it does not fit setting 3, since $\rk_{M_1}(\{1,2,5\})-\rk_{M_1}(\{1,2\})=1>0=\rk_{M_2}(\{1,2,5\})-\rk_{M_2}(\{1,2\})$.

\smallskip
Suppose that the input ranking $\pi$ is the identity permutation.

We can observe that the output $\sigma_{\rm f}$ of the forward greedy algorithm and the output $\sigma_{\rm b}$ of the backward greedy algorithm are, respectively,
\begin{align*}
\sigma_{\rm f}&= (1,4,5,7,9,10,6,12,2,3,8,11),\\
\sigma_{\rm b}&= (1,4,6,7,10,12,2,8,3,5,9,11).
\end{align*}
The Kendall tau distances to $\pi$ are $\dKT(\sigma_{\rm f},\pi)=21$ and $\dKT(\sigma_{\rm b},\pi)=21$. However, the feasible ranking \[\sigma^*\coloneqq (1,4,6,7,9,10,2,12,3,5,8,11)\]
has smaller Kendall tau distance $\dKT(\sigma^*,\pi)=20$.

Thus, neither of the two greedy algorithms finds an optimal solution on this example.

\begin{figure}[h]
    \centering
        \begin{tikzpicture}[x=1.6cm, y=1.2cm, element/.style={circle, draw=black, thick, fill=white, minimum size=18pt, inner sep=0pt, font=\bfseries}]

        \node[element] (n1) at (0, 1) {\(1\)};
        \node[element] (n5) at (1, 1) {\(5\)};
        \node[element] (n4) at (2, 1) {\(4\)};
        \node[element] (n2) at (0, 0) {\(2\)};
        \node[element] (n3) at (1, 0) {\(3\)};
        \node[element] (n6) at (2, 0) {\(6\)};
        \node[element] (n7) at (3, 1) {\(7\)};
        \node[element] (n9) at (4, 1) {\(9\)};
        \node[element] (n10) at (5, 1) {\(10\)};
        \node[element] (n8) at (3, 0) {\(8\)};
        \node[element] (n11) at (4, 0) {\(11\)};
        \node[element] (n12) at (5, 0) {\(12\)};

        \begin{pgfonlayer}{background}
        \draw[fill, blue!20] \convexpath{n1, n3, n2}{12pt};
        \draw[fill, blue!20] \convexpath{n4, n6, n5}{12pt};
        \draw[fill, blue!20] \convexpath{n7, n11, n8}{12pt};
        \draw[fill, blue!20] \convexpath{n10, n12, n9}{12pt};
        \node[draw=red!70!gray, fill=red!30, fill opacity = 0.4, thick, dashed, rounded corners=6pt, fit=(n1)(n5)(n2)(n3), inner sep=3pt] (m1_1) {};
        \node[draw=red!70!gray, fill=red!30, fill opacity = 0.4, thick, dashed, rounded corners=6pt, fit=(n4)(n6), inner sep=3pt] (m1_2) {};
        \node[draw=red!70!gray, fill=red!30, fill opacity = 0.4, thick, dashed, rounded corners=6pt, fit=(n7)(n9)(n11)(n8), inner sep=3pt] (m1_3) {};
        \node[draw=red!70!gray, fill=red!30, fill opacity = 0.4, thick, dashed, rounded corners=6pt, fit=(n10)(n12), inner sep=3pt] (m1_4) {};
        \end{pgfonlayer}

        \node[text=red!70!gray, font=\small, align=center] at (0.5, 1.6){\(\leq 2\)};
        \node[text=red!70!gray, font=\small, align=center] at (2, 1.6) {\(\leq 2\)};
        \node[text=red!70!gray, font=\small, align=center] at (3.5, 1.6) {\(\leq 2\)};
        \node[text=red!70!gray, font=\small, align=center] at (5, 1.6) {\(\leq 2\)};
        \node[text=blue!70!gray, font=\small, align=center] at (0.5, -0.6) {\(\leq 1\)};
        \node[text=blue!70!gray, font=\small, align=center] at (2, -0.6) {\(\leq 2\)};
        \node[text=blue!70!gray, font=\small, align=center] at (3.5, -0.6) {\(\leq 1\)};
        \node[text=blue!70!gray, font=\small, align=center] at (5, -0.6) {\(\leq 2\)};

        \end{tikzpicture}
\caption{An example on which the greedy approach does not work. Blue regions indicate the partition classes of $M_1$, red dashed regions indicate the partition classes of $M_2$, and the numbers indicate the upper quotas.}
\label{fig:greedy}
\end{figure}

\section{Omitted Proofs from Section~\ref{sec:matroid_flag_matroid} (Quotient Relations)}\label{app:quotient}

In this section, we provide proofs omitted in Section~\ref{sec:matroid_flag_matroid}.

\subsection{Tools Used in the Proofs}
In the proofs of Propositions~\ref{prop:quotient-partition} and \ref{prop:quotient-laminar} (sufficient condition for generalized partition matroids or generalized laminar matroids to have a quotient relation), we use the following characterization of a matroid pair with quotient relation.

\begin{theorem}[e.g.,\protect{\cite[Definition~1.1]{brandenburg2024quotients}}]\label{thm:base_quotient}
Let $M$ and $N$ be matroids on the same ground set, and
let $\B(M)$ and $\B(N)$ be their base families, respectively.
Then $N$ is a quotient of $M$ if and only if the following condition holds:

\begin{description}
    \item[{\em (Q)}] For any $X\in \B(N)$, $Y\in \B(M)$, and $x\in X\setminus Y$, there exists $y\in Y\setminus X$ such that
    \[X\setminus\{x\}\cup\{y\}\in \B(N), \quad Y\cup\{x\}\setminus \{y\}\in \B(M).\]
\end{description}
\end{theorem}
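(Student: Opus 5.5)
Before proving the characterization, I would rephrase both conditions in terms of closures and fundamental circuits. The rank inequality defining a quotient telescopes along a maximal chain from $X$ to $Y$. So $N$ is a quotient of $M$ if and only if
\[
\rk_N(A\cup\{e\})-\rk_N(A)\le \rk_M(A\cup\{e\})-\rk_M(A)\quad\text{for all } A\subseteq E,\ e\in E.
\]
Equivalently, $e\in \mathrm{cl}_M(A)$ implies $e\in \mathrm{cl}_N(A)$, where $\mathrm{cl}$ denotes matroid closure. I would use this single-element form in both directions.

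\emph{Quotient $\Rightarrow$ (Q).} Fix $X\in\B(N)$, $Y\in\B(M)$ and $x\in X\setminus Y$.
\begin{enumerate}
\item Let $C$ be the unique $M$-circuit in $Y\cup\{x\}$. Then $Y\cup\{x\}\setminus\{y\}\in\B(M)$ holds exactly for $y\in C\setminus\{x\}$.
\item Let $F=\mathrm{cl}_N(X\setminus\{x\})$, which is a hyperplane of $N$. Then $X\setminus\{x\}\cup\{y\}\in\B(N)$ holds exactly for $y\notin F$. Any such $y$ automatically lies outside $X$, because $X\setminus\{x\}\subseteq F$ and $y\neq x$.
\item It therefore suffices to find $y\in C\setminus\{x\}$ with $y\notin F$; such a $y$ lies in $Y\setminus X$ because $C\setminus\{x\}\subseteq Y$.
\item Suppose instead that $C\setminus\{x\}\subseteq F$. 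Then $x\in\mathrm{cl}_M(C\setminus\{x\})\subseteq \mathrm{cl}_M(F)$.
\item On the other hand, $x\notin F$, so $\rk_N(F\cup\{x\})-\rk_N(F)=1$. The quotient inequality then forces $\rk_M(F\cup\{x\})-\rk_M(F)=1$, i.e.\ $x\notin\mathrm{cl}_M(F)$, which is a contradiction.
\end{enumerate}

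\emph{(Q) $\Rightarrow$ quotient.} By the reduction above, suppose for contradiction that $e\notin\mathrm{cl}_N(A)$ but $e\in\mathrm{cl}_M(A)$; in particular $e\notin A$.
\begin{enumerate}
\item Choose a maximal $N$-independent set $I\subseteq A$. Since $e\notin\mathrm{cl}_N(A)$, the set $I\cup\{e\}$ is $N$-independent; extend it to a base $X\in\B(N)$.
\item Choose a maximal $M$-independent set $J\subseteq A$ and extend it to a base $Y\in\B(M)$. Since $e\in\mathrm{cl}_M(J)$, we have $e\notin Y$, so $e\in X\setminus Y$.
\item Apply (Q) with $x=e$ to obtain $y\in Y\setminus X$ with $Y\cup\{e\}\setminus\{y\}\in\B(M)$ and $X\setminus\{e\}\cup\{y\}\in\B(N)$.
\item The first exchange puts $y$ in the $M$-circuit of $Y\cup\{e\}$. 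That circuit lies in $J\cup\{e\}$ because $e\in\mathrm{cl}_M(J)$, so $y\in J\subseteq A\subseteq\mathrm{cl}_N(I)$.
\item Since $I\subseteq X\setminus\{e\}$, the set $X\setminus\{e\}\cup\{y\}$ is $N$-dependent, contradicting the second exchange.
\end{enumerate}

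The main obstacle I expect is in the forward direction: recognizing that the two exchange conditions amount to the $M$-circuit $C$ meeting the $N$-cocircuit $E\setminus F$ in an element other than $x$. After that, the quotient inequality applied at the hyperplane $F$ settles it in one line. The remaining steps are routine facts about fundamental circuits and closures.
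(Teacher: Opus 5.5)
Your argument is correct in both directions. The paper does not prove this statement: it imports it from Brandenburg--Loho--Smith, where (Q) is the \emph{definition} of a quotient, and treats the equivalence with the rank-inequality definition as classical (see the references in Section~\ref{sec:flag}).

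That equivalence is exactly the glue the paper needs. Proposition~\ref{prop:optimalflag} uses the rank form, while Propositions~\ref{prop:quotient-partition} and~\ref{prop:quotient-laminar} verify (Q). Your proof supplies this link in a self-contained way.

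Your route has three parts:
\begin{itemize}
\item reduce the rank inequality to the single-element closure condition $e\in\mathrm{cl}_M(A)\Rightarrow e\in\mathrm{cl}_N(A)$, i.e., every flat of $N$ is a flat of $M$;
\item for the forward direction, play the fundamental $M$-circuit of $x$ in $Y\cup\{x\}$ against the $N$-hyperplane $\mathrm{cl}_N(X\setminus\{x\})$;
\item for the converse, build $X$ and $Y$ from maximal independent subsets of a violating set $A$.
\end{itemize}
This is the standard circuit--cocircuit argument and needs only elementary closure facts. The paper's citation instead keeps the notion consistent with the M-convex framework in which Lemma~\ref{lem:quotient-induction} is stated. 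The direction the paper actually uses in its examples is (Q) $\Rightarrow$ quotient, which is your second half.

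Two harmless wording points:
\begin{itemize}
\item In Step~1 of the forward direction, ``holds exactly for $y\in C\setminus\{x\}$'' should be read for $y\in Y$, since taking $y=x$ trivially returns $Y$.
\item Step~5 of the converse implicitly uses $y\notin I$, which follows from $y\notin X\supseteq I$.
\end{itemize}
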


In the proof of Proposition~\ref{prop:quotient-transversal} (sufficient condition for generalized transversal matroids to have a quotient relation), we use an operation of {\em matroid induction}, which is defined as follows.
Given a bipartite graph $G=(U,V; F)$ and a matroid $M_V=(V, \B_V)$, define
\[
\B_U\coloneqq \set{\partial X\cap U \mid X\subseteq F \text{ is a matching and }
\partial X\cap V\in \B_V},
\]
where $\partial X\subseteq U\cup V$ is the set of vertices covered by $X$. If this $\B_U$ is nonempty, then $M_U=(U, \B_U)$ forms a matroid on $U$ (see, e.g., \cite{perfect69, brualdi71}). This operation is called an {\em induction} by the bipartite graph $G$; here $M_U$ is induced from $M_V$.

It was shown by Brandenburg--Loho--Smith~\cite[Lemma~3.12]{brandenburg2024quotients} that the induction operation preserves quotient relations. Although the original statement proves a more general result for M-convex sets and linking systems, we state only the special case needed for our present purpose.
\begin{lemma}[Induction Preserves Quotients \cite{brandenburg2024quotients}]\label{lem:quotient-induction}
Let $G=(U,V;F)$ be a bipartite graph and let $M_V$ and $N_V$ be matroids on $V$ such that $N_V$ is a quotient of $M_V$. Suppose that $M_U$ and $N_U$ are induced via $G$ from $M_V$ and $N_V$, respectively (i.e., the corresponding families are nonempty). Then $N_U$ is a quotient of $M_U$.
\end{lemma}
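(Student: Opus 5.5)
We plan to avoid a direct exchange argument and instead use the classical factorization of quotients (Higgs; see \cite[Proposition~7.4.7]{Bry86}). It says that $N_V$ is a quotient of $M_V$ if and only if there exist a finite set $Z$ disjoint from $V$ (and from $U$) and a matroid $L$ on $V\cup Z$ with $L\setminus Z=M_V$ and $L/Z=N_V$. Moreover, $Z$ may be chosen independent in $L$; if not, we first replace $Z$ by a basis $Z_0$ of $Z$ in $L$ and replace $L$ by $L\setminus(Z\setminus Z_0)$. The idea is to transport this lift through the bipartite graph. Conversely, every pair of the form $(L'\setminus Z, L'/Z)$ is a quotient pair, because $\rk_{L'/Z}(Y)-\rk_{L'/Z}(X)=\rk_{L'}(Y\cup Z)-\rk_{L'}(X\cup Z)\le \rk_{L'}(Y)-\rk_{L'}(X)$ by submodularity. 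So it suffices to exhibit $L'$ on $U\cup Z$ with $L'\setminus Z=M_U$ and $L'/Z=N_U$.

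\emph{Construction.} Extend $G$ to $G'=(U\cup Z, V\cup Z; F\cup\{(z,z)\mid z\in Z\})$, where each new vertex $z$ on the left is joined only to its copy on the right. Let $L'$ be the matroid on $U\cup Z$ induced from $L$ via $G'$, described through independent sets: $A\subseteq U\cup Z$ is independent iff some matching of $G'$ covers $A$ and its $V\cup Z$-side is independent in $L$. This is the standard induced matroid of Perfect and Brualdi \cite{perfect69,brualdi71}. We will work with independent sets throughout and pass to bases only at the end. The hypothesis that the base families of $M_U$ and $N_U$ are nonempty is what ensures these are the matroids intended in the statement.

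\emph{Deletion.} For $A\subseteq U$, the neighbours of $A$ in $G'$ lie in $V$. Hence $A$ is independent in $L'$ iff it is matchable onto a set independent in $L$ and contained in $V$, i.e.\ independent in $L\setminus Z=M_V$. Thus $L'\setminus Z$ coincides with the matroid induced from $M_V$, namely $M_U$.

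\emph{Contraction.} Since each $z\in Z$ has the unique neighbour $z$, the set $Z$ is independent in $L'$ (because $Z$ is independent in $L$). So $A\subseteq U$ is independent in $L'/Z$ iff $A\cup Z$ is independent in $L'$. In any matching covering $A\cup Z$, the elements of $Z$ must be matched to themselves, so this holds iff $A$ is matched onto some $J\subseteq V$ with $J\cup Z$ independent in $L$. Since $Z$ is independent in $L$, the latter is exactly the condition that $J$ is independent in $L/Z=N_V$. Therefore $L'/Z$ is the matroid induced from $N_V$, i.e.\ $N_U$, and the result follows from the converse direction above.

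\emph{Main obstacle.} The delicate step is the bookkeeping in the contraction: making sure $Z$ may be taken independent, so that contraction is described by independent sets, and checking that the paper's base-family definition of induction (bases coming from matchings onto bases of $M_V$) agrees with the independent-set version used here. If the factorization theorem is considered unavailable, I would fall back on proving condition (Q) of Theorem~\ref{thm:base_quotient} directly. That fallback takes matchings $P$ and $Q$ certifying $X\in\B(N_U)$ and $Y\in\B(M_U)$ and follows the alternating path in $P\triangle Q$ from $x$. If the path ends in $U$ we are done; if it ends in $V$, we invoke (Q) for $N_V,M_V$ and continue. The real difficulty there is proving termination, which I would attempt by choosing $P,Q$ to maximize $|\partial P\cap\partial Q\cap V|$.
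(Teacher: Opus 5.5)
Your proof is correct, and it takes a genuinely different route from the paper. The paper does not prove this lemma; it specializes Lemma~3.12 of Brandenburg--Loho--Smith, which is established in the broader setting of M-convex sets and induction by linking systems.

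You stay inside classical matroid theory. You lift the quotient pair to a single matroid $L$ on $V\cup Z$ with $Z$ independent (Higgs factorization). You then push $L$ through the augmented graph $G'$, in which each new $z$ is adjacent only to its own copy, and check that induction commutes both with deleting $Z$ and with contracting $Z$. The submodularity inequality for $(L'\setminus Z, L'/Z)$ finishes the argument.

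The point you flag as the main obstacle is a one-line check. If the paper's family $\B_U$ is nonempty, the Perfect--Brualdi induced matroid has rank exactly $\rk(M_V)$: every independent set is matched onto an $M_V$-independent set, and members of $\B_U$ reach size $\rk(M_V)$. Hence its bases are exactly the left ends of matchings onto bases of $M_V$, i.e.\ $\B_U$, and the same holds for $N_U$. So the fallback alternating-path argument for (Q), with its termination worry, is not needed.

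Your approach gives a short, self-contained proof that uses only Higgs factorization and the fact that induction yields a matroid. The cited approach gives generality beyond matroids (M-convex sets, linking systems), at the price of heavier machinery than this paper actually needs.
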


\subsection{Omitted Proofs}
We provide omitted proofs from Section~\ref{sec:matroid_flag_matroid} in order.

\PropPartitionQuotient*
\begin{proof}
Recall that the base family $\B(N)=\B_{k,\boldsymbol{\ell},\boldsymbol{u}}$ of $N$ is defined as
\[
\B(N)=
\set{B\subseteq E |~|B|=k,\ \ \ell_i \le |B\cap G_i|\le u_i
\ \text{for all } i\in[g]}
\]
and the base family $\B(M)=\B_{k',\boldsymbol{\ell}',\boldsymbol{u}'}$ of $M$ is defined similarly using $(k', \boldsymbol{\ell}',\boldsymbol{u}')$.

We show that they satisfy the axiom (Q) in Theorem~\ref{thm:base_quotient}. Take any $X\in \B(N)$, $Y\in \B(M)$, and $x\in X\setminus Y$. For each $i\in[g]$, we write $X_i=X\cap G_i$ and $Y_i=Y\cap G_i$. Let $i^*\in [g]$ be the index such that $x\in X_{i^*}$. If $Y_{i^*}\setminus X_{i^*}\neq \emptyset$, then any $y\in Y_{i^*}\setminus X_{i^*}$ is the desired element. Indeed, the sets $X'\coloneqq X\setminus \{x\}\cup \{y\}$ and $Y'\coloneqq Y\cup\{x\}\setminus \{y\}$ satisfy $|X'\cap G_i|=|X_i|$ and $|Y'\cap G_i|=|Y_i|$ for every $i\in [g]$, and thus $X'\in \B(N)$ and $Y'\in \B(M)$.

For the rest of the proof, we assume $Y_{i^*}\setminus X_{i^*}=\emptyset$. Since we have $x\in X_{i^*}\setminus Y_{i^*}$, we then have $Y_{i^*}\subsetneq X_{i^*}$, and hence $\ell_{i^*}\leq \ell'_{i^*}\leq |Y_{i^*}|<|X_{i^*}|\leq u_{i^*}\leq u'_{i^*}$, where inequalities other than the center one are due to $X\in \B(N)$, $Y\in\B(M)$ and the conditions on $\boldsymbol{\ell}', \boldsymbol{u}'$.
This sequence of inequalities in particular implies
\begin{equation}\label{eq:i-star}
|Y_{i^*}|<|X_{i^*}|,\quad
\ell_{i^*}<|X_{i^*}|, \quad
|Y_{i^*}|<u'_{i^*}.
\end{equation}
We show that there exists $j\in [g]$ such that
\begin{equation}\label{eq:j}
|X_{j}|<|Y_{j}|,\quad
\ell'_{j}<|Y_{j}|, \quad
|X_{j}|<u_{j}.
\end{equation}
Indeed, if there is $j\in [g]$ satisfying \eqref{eq:j}, then $|X_{j}|<|Y_{j}|$ implies that $Y_j\setminus X_j \neq \emptyset$ and for any $y\in Y_j\setminus X_j$, the second and third inequalities in \eqref{eq:i-star} and \eqref{eq:j} imply that $X'\coloneqq X\setminus \{x\}\cup \{y\}$ and $Y'\coloneqq Y\cup\{x\}\setminus \{y\}$ satisfy the lower and upper quotas of all $i\in[g]$.
Thus, showing the existence of $j\in [g]$ satisfying \eqref{eq:j} completes the proof.

Suppose, to the contrary, that there is no such $j\in [g]$. Then, for any $i\in [g]$, we have at least one of (a) $|X_{i}|\geq |Y_{i}|$, (b)
$\ell'_{i}=|Y_{i}|$, or (c) $|X_{i}|=u_{i}$.

\begin{itemize}
\item In case (a), since $\frac{k'}{k}>1$, we have
$|Y_i|\leq |X_i|\leq \lfloor\tfrac{k'}{k}\cdot |X_i|\rfloor$.

\item In case (b), we have
$|Y_i|=\ell'_i\leq \lfloor\tfrac{k'}{k}\,\ell_i\rfloor\leq \lfloor\tfrac{k'}{k}\cdot |X_i|\rfloor$.

\item In case (c), we have
$|Y_i|\leq u'_i\leq \lfloor\tfrac{k'}{k}\,u_i\rfloor= \lfloor\tfrac{k'}{k}\cdot |X_i|\rfloor$.
\end{itemize}

\noindent
Thus, $|Y_i|\leq  \lfloor\tfrac{k'}{k}\cdot |X_i|\rfloor$ for all $i\in[g]$.
Note that we have $|Y_{i^*}|<|X_{i^*}|$, which implies
$|Y_{i^*}|<  \lfloor\tfrac{k'}{k}\cdot |X_{i^*}|\rfloor$.
Thus, by summing these inequalities over all $i\in[g]$, we obtain
\[k'=|Y|=\sum_{i\in [g]}|Y_i|<\sum_{i\in[g]}\lfloor\tfrac{k'}{k}\cdot |X_{i}|\rfloor\leq \tfrac{k'}{k}\sum_{i\in [g]}|X_i|=\tfrac{k'}{k}\cdot |X|=k',
\]
a contradiction.
\end{proof}

\begin{corollary}\label{cor:block-flag}
Let $E=G_1\sqcup \cdots \sqcup G_g$, $\boldsymbol{\alpha}, \boldsymbol{\beta}\in [0,1]^g$, and $k,b\in[n]$ be given as in Definition~\ref{def:block-fair}.
Then there exists a flag matroid $\bM=(M_1, \dots, M_s)$ of rank $(h_1, \dots, h_s)$ such that, for any ranking $\pi:[n]\to E$, we have that $\pi$ is $(\boldsymbol{\alpha},\boldsymbol{\beta})$-block-$k$-fair if and only if $(\Top_{h_1}(\pi), \dots, \Top_{h_s}(\pi))\in \F(\bM)$.
\end{corollary}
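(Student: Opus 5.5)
The plan is to take the prefix lengths $h_1<h_2<\cdots<h_s$ to be all integers $h\in[n]$ with $h\ge k$ and $h\equiv 0 \pmod b$, so that $h_j=h_1+(j-1)b$. For each $j$ we would let $M_j$ be the generalized partition matroid of Example~\ref{ex:g-partition} with parameters $k=h_j$, $\ell^{(j)}_i=\alpha_i h_j$ and $u^{(j)}_i=\min\{\beta_i h_j,|G_i|\}$. These are integers because $b\alpha_i,b\beta_i\in\mathbb{Z}$ and $b\mid h_j$.

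First, feasibility. If no block-$k$-fair ranking exists, the statement would be vacuous or would need a trivial adjustment, so we assume one exists. Then every $\Top_{h_j}(\pi)$ of a fair ranking lies in $\B_{h_j,\boldsymbol{\ell}^{(j)},\boldsymbol{u}^{(j)}}$, so each family is nonempty. The inequalities $0\le\ell^{(j)}_i\le u^{(j)}_i\le|G_i|$ and $\sum_i\ell^{(j)}_i\le h_j\le\sum_i u^{(j)}_i$ then follow from nonemptiness. With this choice, $\pi$ is block-$k$-fair if and only if $\Top_{h_j}(\pi)\in\B(M_j)$ for all $j$. Since the sets $\Top_{h_j}(\pi)$ automatically form a chain of the right sizes, this is exactly the condition $(\Top_{h_1}(\pi),\dots,\Top_{h_s}(\pi))\in\F(\bM)$. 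Capping $u$ at $|G_i|$ does not change the family, because $|B\cap G_i|\le|G_i|$ holds anyway.

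It remains to show that $M_j$ is a quotient of $M_{j+1}$, using Proposition~\ref{prop:quotient-partition} with $k=h_j$ and $k'=h_{j+1}$. For the lower quotas we need $\alpha_i h_j\le\alpha_i h_{j+1}\le\lfloor\frac{h_{j+1}}{h_j}\alpha_i h_j\rfloor$. The right-hand floor equals $\alpha_i h_{j+1}$, which is an integer, so this holds with equality. The upper quotas without capping work the same way.

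The main obstacle is the cap by $|G_i|$, since Proposition~\ref{prop:quotient-partition} requires $u\le|G_i|$. We need $\min\{\beta_i h_j,|G_i|\}\le\min\{\beta_i h_{j+1},|G_i|\}\le\lfloor\frac{h_{j+1}}{h_j}\min\{\beta_i h_j,|G_i|\}\rfloor$. The first inequality is monotonicity. For the second there are two cases. If $\beta_i h_j\le|G_i|$, the bound is $\beta_i h_{j+1}$ and it holds. Otherwise the left-hand side is $|G_i|\le\lfloor\frac{h_{j+1}}{h_j}|G_i|\rfloor$, which holds since $h_{j+1}>h_j$. Finally, we would take the sequence $(M_1,\dots,M_s)$ with these strict ranks; the quotient relation between consecutive members then makes $\bM$ a flag matroid.
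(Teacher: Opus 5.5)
Your proof is correct and follows the paper's argument. You use the same prefix lengths $h_j$ and the same partition matroids with quotas $\alpha_i h_j$ and $\beta_i h_j$. You then apply Proposition~\ref{prop:quotient-partition} via the identity $\alpha_i h_{j+1}=\frac{h_{j+1}}{h_j}\alpha_i h_j\in\mathbb{Z}$ (and likewise for $\beta_i$), exactly as the paper does.

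Your treatment of the cap $u_i\le|G_i|$ and of nonemptiness just makes explicit technicalities that the paper leaves implicit. One correction: if no block-$k$-fair ranking exists, the statement is not vacuous but false, because every flag matroid has a flag base and so some ranking satisfies its prefix condition. The feasibility assumption is therefore genuinely needed.
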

\begin{proof}
Let $\{h_1, \dots, h_s\}=\set{h\in[n]\mid h\ge k,\ h\equiv 0 \pmod b}$, where $h_1<h_2<\cdots <h_s$, and for each $j=1,\dots,s$ define a matroid $M_j=(E, \B_j)$ by letting
$\B_j$ be the family of sets $B\subseteq E$ with $|B|=h_j$ and $\alpha_i h_j\leq |B\cap G_i|\leq \beta_i h_j$ for all $i\in[g]$.

For any $j\in[s-1]$, we have $h_j=bd$ and $h_{j+1}=b(d+1)$ for some integer $d$.
Since $\alpha_i b$ is an integer (as assumed in Definition~\ref{def:block-fair}), both $\alpha_i h_j$ and $\alpha_i h_{j+1}$ are integers, and they satisfy
$\alpha_i h_{j+1}=\frac{h_{j+1}}{h_j}\,\alpha_i h_j$.
The same holds for $\beta_i$.
Hence, the sufficient condition in Proposition~\ref{prop:quotient-partition} is satisfied.
Thus, $M_j$ is a quotient of $M_{j+1}$ for all $j$, and hence $(M_1,\dots,M_s)$ forms a flag matroid.
Then, the claim is clear from the definition of block-$k$-fairness.
\end{proof}

\PropTransversalQuotient*
\begin{proof}
Let $\widehat{E}\coloneqq E\times D$ and
$\widehat{E}_d\coloneqq \set{(e,d)\in \widehat{E} \mid e\in E}$ for each $d\in D$. Then, $\{\widehat{E}_d\}_{d\in D}$ forms a partition of $\widehat{E}$. Using this partition and the given parameters, define two generalized partition matroids $N_{\rm part}=(\widehat{E},\B_{k,\boldsymbol{p},\boldsymbol{q}})$ and $M_{\rm part}=(\widehat{E},\B_{k',\boldsymbol{p}',\boldsymbol{q}'})$ as in Example~\ref{ex:g-partition}.
By Proposition~\ref{prop:quotient-partition}, $N_{\rm part}$ is a quotient of $M_{\rm part}$.

Next, define a bipartite graph $G=(E,\widehat{E};F)$ by
$F\coloneqq \set{(e,(e,d))\in E\times \widehat{E} \mid e\in \Gamma(d)}$.
Let $N=(E, \B)$ and $M=(E, \B')$ be the matroids induced via $G$ from $N_{\rm part}$ and $M_{\rm part}$, respectively. Then, by Lemma~\ref{lem:quotient-induction}, $N$ is a quotient of $M$.

It remains to show that $\B=\B^{\rm tv}_{k,\boldsymbol{p},\boldsymbol{q}}$ and $\B'=\B^{\rm tv}_{k',\boldsymbol{p}',\boldsymbol{q}'}$. We prove only the former, since the latter is identical. If $B\in \B^{\rm tv}_{k,\boldsymbol{p},\boldsymbol{q}}$, then there is a map $\varphi:B\to D$ such that $\varphi^{-1}(d)\subseteq \Gamma(d)$ and $p_d\le |\varphi^{-1}(d)|\le q_d$ for all $d\in D$. Then $X_\varphi\coloneqq \{(e,(e,\varphi(e))) \mid e\in B\}$ is a matching in $G$, and we have $\partial X_\varphi\cap E=B$ and $\partial X_\varphi\cap \widehat{E}\in \B_{k,\boldsymbol{p},\boldsymbol{q}}$. Hence $B\in \B$.

Conversely, if $B\in \B$, then there is a matching $X_B\subseteq F$ such that $\partial X_B\cap E=B$ and $\partial X_B\cap \widehat{E}\in \B_{k,\boldsymbol{p},\boldsymbol{q}}$. For each $e\in B$, since $X_B$ is a matching, there is a unique $d\in D$ with $(e,(e,d))\in X_B$; define $\varphi_B(e)\coloneqq d$. Then $(e,(e,d))\in F$ implies $e\in \Gamma(d)$. Moreover, since $\partial X_B\cap \widehat{E}\in \B_{k,\boldsymbol{p},\boldsymbol{q}}$, we have $|B|=k$ and $p_d\le |\varphi_B^{-1}(d)|\le q_d$ for all $d\in D$. Hence $B\in \B^{\rm tv}_{k,\boldsymbol{p},\boldsymbol{q}}$.

Therefore $\B=\B^{\rm tv}_{k,\boldsymbol{p},\boldsymbol{q}}$. The proof of $\B'=\B^{\rm tv}_{k',\boldsymbol{p}',\boldsymbol{q}'}$ is identical.
\end{proof}

\PropLaminarQuotient*
\begin{proof}
We first prepare some terminology and assumptions on the laminar family $\L\subseteq 2^E$.
For two members $L, L'\in \L$, we say that $L'$ is a {\em child} of $L$ (and also $L$ is a {\em parent} of $L'$) if $L'\subsetneq L$ and there is no $L''\in\L$ with $L'\subsetneq L''\subsetneq L$. For any $L\in \L$, we write $\ch(L)$ for the set of children of $L$. A member $L\in \L$ with $\ch(L)=\emptyset$ is called a {\em leaf}.
We assume the following on the laminar family $\L$.
\begin{enumerate}
    \item $E\in \L$.
    \item If $L\in \L$ is not a leaf, then $L$ is partitioned into its children.
\end{enumerate}

Indeed, we can impose these assumptions without loss of generality. If the original $\L$ does not satisfy the first condition, then we can add $E$ to $\L$ and set $g(E)=f(E)=k$ and $g'(E)=f'(E)=k'$. If the second condition is not satisfied for some $L\in \L$, then add $L^*\coloneqq L\setminus \cup_{L'\in \ch(L)} L'$ to $\L$ and set $g(L^*)=0$, $f(L^*)=|L^*|$, $g'(L^*)=0$, and $f'(L^*)=\lfloor \tfrac{k'}{k}|L^*|\rfloor$. Each of these operations does not change the base families $\B^{\rm lam}_{k,g,f}$ and $\B^{\rm lam}_{k',g',f'}$ and preserves the condition in Proposition~\ref{prop:quotient-laminar}.

As we have $E\in \L$, the laminar family $\L$ can be seen as a tree rooted at $E$. For each $L\in \L$, its {\em ancestor path} is a path from $L$ to $E$ on this tree.

We show that $\B(N)=\B^{\rm lam}_{k,g,f}$ and $\B(M)=\B^{\rm lam}_{k',g',f'}$ satisfy (Q) in Theorem~\ref{thm:base_quotient}.
Take any $X\in \B(N)$, $Y\in \B(M)$, and $x\in X\setminus Y$. Let $L^x\in \L$ be the unique leaf containing $x$. If $(Y\setminus X)\cap L^x\neq \emptyset$, then we see that any $y\in (Y\setminus X)\cap L^x$ satisfies the desired conditions in (Q) and we are done. Thus, in the rest, assume that $(Y\setminus X)\cap L^x=\emptyset$.

As we have $x\in (X\setminus Y)\cap L^x$, we obtain $Y\cap L^x\subsetneq X\cap L^x$. In particular, we have $|Y\cap L^x|<|X\cap L^x|\le \lfloor\tfrac{k'}{k}|X\cap L^x|\rfloor$.
Along the ancestor path of $L^x$, take the ancestor $L^*$ closest to $L^x$ subject to
\[|Y\cap L^*|\geq \lfloor\tfrac{k'}{k}|X\cap L^*|\rfloor.\]
Such an ancestor must exist because the root ancestor $E$ satisfies $|Y\cap E|=k'=\lfloor\tfrac{k'}{k}k\rfloor=\lfloor\tfrac{k'}{k}|X\cap E|\rfloor$. In addition, $L^*$ must not be a leaf by definition. From this $L^*$ we go down the tree toward a leaf using the following claims.

\begin{claim}\label{claim:first-step}
There exists a child $L$ of $L^*$ that satisfies $|Y\cap L|> \lfloor\tfrac{k'}{k}|X\cap L|\rfloor$.
\end{claim}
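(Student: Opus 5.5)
The plan is to argue by contradiction, summing over the children of $L^*$ and comparing against the defining property of $L^*$. By the second assumption on $\L$, $L^*$ is partitioned into its children $\ch(L^*)$. Hence
\[
|Y\cap L^*|=\sum_{L\in\ch(L^*)}|Y\cap L|,
\qquad
|X\cap L^*|=\sum_{L\in\ch(L^*)}|X\cap L|.
\]

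Suppose, contrary to the claim, that $|Y\cap L|\le \lfloor\tfrac{k'}{k}|X\cap L|\rfloor$ for every child $L$ of $L^*$. Let $L_0$ be the child of $L^*$ lying on the ancestor path of $L^x$; possibly $L_0=L^x$. By the choice of $L^*$ as the closest ancestor satisfying the inequality, $L_0$ violates it:
\[
|Y\cap L_0|<\lfloor\tfrac{k'}{k}|X\cap L_0|\rfloor.
\]
When $L_0=L^x$, this was shown directly just before the claim.

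Summing over the children, with one strict inequality, gives
\[
|Y\cap L^*|<\sum_{L\in\ch(L^*)}\lfloor\tfrac{k'}{k}|X\cap L|\rfloor\le \lfloor\tfrac{k'}{k}|X\cap L^*|\rfloor.
\]
The last step is superadditivity of the floor: $\sum_i\lfloor a_i\rfloor\le\lfloor\sum_i a_i\rfloor$. The resulting strict inequality contradicts $|Y\cap L^*|\ge\lfloor\tfrac{k'}{k}|X\cap L^*|\rfloor$, which is how $L^*$ was defined. Hence some child $L$ satisfies the claimed strict inequality.

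The step needing the most care is confirming that the child $L_0$ on the path really gives a strict inequality. This depends on two facts: $L^*$ is the first ancestor, going upward from $L^x$, to satisfy the $\ge$ condition, and $L^x$ itself fails it. No quota bounds $g,f,g',f'$ are needed for this claim; they will only enter in the subsequent descent claims.
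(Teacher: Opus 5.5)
Your proposal is correct and is essentially the paper's proof. You assume every child of $L^*$ satisfies the non-strict bound, and the choice of $L^*$ gives a strict inequality for the child on the ancestor path of $L^x$. Summing over the partition $\ch(L^*)$ with superadditivity of the floor then contradicts $|Y\cap L^*|\ge\lfloor\tfrac{k'}{k}|X\cap L^*|\rfloor$.
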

\begin{proof}
Suppose, to the contrary, there is no child satisfying the required condition. Then, every $L\in \ch(L^*)$ satisfies $|Y\cap L|\leq \lfloor\tfrac{k'}{k}|X\cap L|\rfloor$. By the choice of $L^*$, the child of $L^*$ on the ancestor path from $L^x$ to $L^*$, denoted by $L'$, satisfies $|Y\cap L'|< \lfloor\tfrac{k'}{k}|X\cap L'|\rfloor$. Since $L^*$ is partitioned into $\ch(L^*)$, we have
\[|Y\cap L^*|~=\sum_{L\in \ch(L^*)} |Y\cap L|~~<\sum_{L\in \ch(L^*)} \lfloor\tfrac{k'}{k}|X\cap L|\rfloor~\leq~ \lfloor\tfrac{k'}{k}|X\cap L^*|\rfloor,\]
while we have $|Y\cap L^*|\geq \lfloor\tfrac{k'}{k}|X\cap L^*|\rfloor$, a contradiction.
\end{proof}

\begin{claim}\label{claim:next-step}
If $L\in \L$ satisfies $|Y\cap L|> \lfloor\tfrac{k'}{k}|X\cap L|\rfloor$ and is not a leaf, then at least one of its children $L'\in \ch(L)$ also satisfies $|Y\cap L'|> \lfloor\tfrac{k'}{k}|X\cap L'|\rfloor$.
\end{claim}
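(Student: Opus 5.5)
The plan is to argue by contradiction, mirroring the proof of Claim~\ref{claim:first-step}. Let $L\in\L$ be a non-leaf member with $|Y\cap L|>\lfloor\tfrac{k'}{k}|X\cap L|\rfloor$. Suppose that every child $L'\in\ch(L)$ satisfies
\[
|Y\cap L'|\le\lfloor\tfrac{k'}{k}|X\cap L'|\rfloor .
\]
We will contradict the hypothesis on $L$ by summing these inequalities over the children.

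By the standing assumption that every non-leaf member of $\L$ is partitioned into its children, we have $|Y\cap L|=\sum_{L'\in\ch(L)}|Y\cap L'|$. The same identity holds for $X$.

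Summing the assumed inequalities therefore gives
\[
|Y\cap L|=\sum_{L'\in\ch(L)}|Y\cap L'|\le\sum_{L'\in\ch(L)}\lfloor\tfrac{k'}{k}|X\cap L'|\rfloor .
\]
The right-hand side is a sum of integers, each at most $\tfrac{k'}{k}|X\cap L'|$. Hence it is at most $\tfrac{k'}{k}\sum_{L'}|X\cap L'|=\tfrac{k'}{k}|X\cap L|$. Being an integer, it is then at most $\lfloor\tfrac{k'}{k}|X\cap L|\rfloor$. This is the superadditivity of the floor, used in exactly the same way as in Claim~\ref{claim:first-step}.

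We conclude $|Y\cap L|\le\lfloor\tfrac{k'}{k}|X\cap L|\rfloor$, contradicting the hypothesis. Hence some child $L'$ satisfies the strict inequality $|Y\cap L'|>\lfloor\tfrac{k'}{k}|X\cap L'|\rfloor$.

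There is no real obstacle here. The only points needing care are that the partition assumption (item 2 of the normalization) is what makes the sums decompose, and that the floor inequality $\sum\lfloor a_i\rfloor\le\lfloor\sum a_i\rfloor$ is applied in the correct direction.
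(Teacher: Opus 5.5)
Your proposal is correct and matches the paper's proof: assume no child satisfies the strict inequality, sum over $\ch(L)$ using the partition normalization, and apply $\sum_{L'}\lfloor\tfrac{k'}{k}|X\cap L'|\rfloor\le\lfloor\tfrac{k'}{k}|X\cap L|\rfloor$ to contradict the hypothesis on $L$. Your explicit justification of that floor inequality (an integer bounded by $\tfrac{k'}{k}|X\cap L|$ is bounded by its floor) is exactly what the paper's one-line chain uses implicitly.
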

\begin{proof}
Suppose, to the contrary, no child of $L$ satisfies the required condition. Then, every $L'\in \ch(L)$ satisfies $|Y\cap L'|\leq \lfloor\tfrac{k'}{k}|X\cap L'|\rfloor$, but then summing this inequality implies
\[|Y\cap L|=\sum_{L'\in \ch(L)} |Y\cap L'|\leq \sum_{L'\in \ch(L)} \lfloor\tfrac{k'}{k}|X\cap L'|\rfloor \leq \lfloor\tfrac{k'}{k}|X\cap L|\rfloor,\]
contradicting the assumption of the claim.
\end{proof}

By combining Claims~\ref{claim:first-step} and \ref{claim:next-step}, we can find a path from $L^*$ to a leaf such that every member $L$ on this path except $L^*$ satisfies $|Y\cap L|>\lfloor\tfrac{k'}{k}|X\cap L|\rfloor$. Let $L^y$ be this terminal leaf. To summarize, we have found $L^*$ and $L^y$ such that

\begin{enumerate}
\item[~~~\sf (a)] $L^*$ is a common ancestor of $L^x$ and $L^y$,
\item[~~~\sf (b)] $|Y\cap L|<\lfloor\tfrac{k'}{k}|X\cap L|\rfloor$ for every $L$ on the $L^x$--$L^*$ ancestor path excluding $L^*$, and
\item[~~~\sf (c)] $|Y\cap L|>\lfloor\tfrac{k'}{k}|X\cap L|\rfloor$ for every $L$ on the $L^y$--$L^*$ ancestor path excluding $L^*$.
\end{enumerate}

The third condition in particular implies $|Y\cap L^y|>\lfloor\tfrac{k'}{k}|X\cap L^y|\rfloor\geq |X\cap L^y|$, and hence $(Y\setminus X)\cap L^y\neq \emptyset$. Take any $y\in (Y\setminus X)\cap L^y$. We show that this is our desired $y$, i.e.,
\[X'\coloneqq X\setminus\{x\}\cup\{y\}\in \B(N), \quad Y'\coloneqq Y\cup\{x\}\setminus \{y\}\in \B(M).\]
To this end, we need to show that $g(L)\leq |X'\cap L|\leq f(L)$ and $g'(L) \leq |Y'\cap L|\leq f'(L)$ for every $L\in \L$.
By the definitions of $X'$ and $Y'$, the following hold.

\begin{itemize}
\item[~~~\sf (d)] $|X'\cap L|=|X\cap L|-1$ and $|Y'\cap L|=|Y\cap L|+1$ for every $L$ on the $L^x$--$L^*$ ancestor path excluding $L^*$,
\item[~~~\sf (e)] $|X'\cap L|=|X\cap L|+1$ and $|Y'\cap L|=|Y\cap L|-1$ for every $L$ on the $L^y$--$L^*$ ancestor path excluding $L^*$, and
\item[~~~\sf (f)] $|X'\cap L|=|X\cap L|$ and $|Y'\cap L|=|Y\cap L|$ for all other $L\in\L$.
\end{itemize}

As we have conditions (b) and (c) listed above, to prove $g(L)\leq |X'\cap L|\leq f(L)$ and $g'(L) \leq |Y'\cap L|\leq f'(L)$ for every $L\in \L$, it suffices to show the following two claims hold for any $L\in \L$.
\begin{claim}\label{claim:laminar3}
If $|Y\cap L|<\lfloor\tfrac{k'}{k}|X\cap L|\rfloor$, then $g(L)< |X\cap L|$ and $|Y\cap L|< f'(L)$.
\end{claim}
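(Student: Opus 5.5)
The plan is to prove the two conclusions separately. In each case we use that the relevant quota is an integer, together with the relation $g'(L)=\lfloor\tfrac{k'}{k}g(L)\rfloor$ or $f'(L)=\lfloor\tfrac{k'}{k}f(L)\rfloor$, and argue by contraposition. Throughout, write $r=\tfrac{k'}{k}>1$.

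\textbf{First conclusion, $g(L)<|X\cap L|$.} Suppose instead that $|X\cap L|\le g(L)$. Since $X\in\B(N)$ we also have $g(L)\le |X\cap L|$, so $|X\cap L|=g(L)$. Then
\[
\lfloor r|X\cap L|\rfloor=\lfloor r\,g(L)\rfloor=g'(L)\le |Y\cap L|,
\]
where the inequality holds because $Y\in\B(M)$. This contradicts the hypothesis $|Y\cap L|<\lfloor r|X\cap L|\rfloor$.

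\textbf{Second conclusion, $|Y\cap L|<f'(L)$.} Use $|X\cap L|\le f(L)$ and monotonicity of $t\mapsto\lfloor rt\rfloor$:
\[
|Y\cap L|<\lfloor r|X\cap L|\rfloor\le\lfloor r f(L)\rfloor=f'(L).
\]
This step needs no case analysis; it only uses the floor relation defining $f'$.

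The only point needing care is that both conclusions are strict. For the lower quota, strictness comes from the contradiction argument. For the upper quota, it is inherited directly from the strict hypothesis. I expect no real obstacle here. The symmetric claim (if $|Y\cap L|>\lfloor r|X\cap L|\rfloor$, then $|X\cap L|<f(L)$ and $g'(L)<|Y\cap L|$) should follow the same pattern. For $g'(L)<|Y\cap L|$, use $g'(L)=\lfloor r g(L)\rfloor\le\lfloor r|X\cap L|\rfloor<|Y\cap L|$. For $|X\cap L|<f(L)$, suppose $|X\cap L|=f(L)$; then $|Y\cap L|\le f'(L)=\lfloor r|X\cap L|\rfloor$, a contradiction.
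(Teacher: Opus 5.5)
Your proof is correct and is essentially the paper's argument. The paper also argues by contraposition, using $|X\cap L|=g(L)\Rightarrow|Y\cap L|\ge g'(L)=\lfloor\tfrac{k'}{k}|X\cap L|\rfloor$ and $|Y\cap L|=f'(L)=\lfloor\tfrac{k'}{k}f(L)\rfloor\ge\lfloor\tfrac{k'}{k}|X\cap L|\rfloor$; your direct chain for the upper quota is the same monotonicity step written forward rather than contrapositively.
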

\begin{proof}
We show the contraposition. Suppose that we have $g(L)=|X\cap L|$ or $|Y\cap L|= f'(L)$. In the former case, we have $|Y\cap L|\geq g'(L)=\lfloor\tfrac{k'}{k} g(L)\rfloor=\lfloor\tfrac{k'}{k} |X\cap L|\rfloor$. In the latter case, we have $|Y\cap L|=f'(L)=\lfloor\tfrac{k'}{k} f(L)\rfloor\geq \lfloor\tfrac{k'}{k} |X\cap L|\rfloor$. Thus, in both cases, we obtain $|Y\cap L|\geq \lfloor\tfrac{k'}{k}|X\cap L|\rfloor$.
\end{proof}

\begin{claim}\label{claim:laminar4}
If $|Y\cap L|>\lfloor\tfrac{k'}{k}|X\cap L|\rfloor$, then $|X\cap L|<f(L)$ and $g'(L)<|Y\cap L|$.
\end{claim}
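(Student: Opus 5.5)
The plan is to prove the contrapositive, exactly mirroring Claim~\ref{claim:laminar3}. Suppose $|X\cap L|\ge f(L)$ or $|Y\cap L|\le g'(L)$. Since $X\in\B(N)$ and $Y\in\B(M)$, these inequalities are really equalities: either $|X\cap L|=f(L)$ or $|Y\cap L|=g'(L)$. In each case I will derive $|Y\cap L|\le \lfloor\tfrac{k'}{k}|X\cap L|\rfloor$, which is what the contrapositive requires.

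In the first case, $|X\cap L|=f(L)$. Using $Y\in\B(M)$ and the hypothesis $f'(L)=\lfloor\tfrac{k'}{k}f(L)\rfloor$, we get
\[|Y\cap L|\le f'(L)=\lfloor\tfrac{k'}{k}f(L)\rfloor=\lfloor\tfrac{k'}{k}|X\cap L|\rfloor.\]
In the second case, $|Y\cap L|=g'(L)$. Here we use $g'(L)=\lfloor\tfrac{k'}{k}g(L)\rfloor$, the bound $g(L)\le|X\cap L|$ coming from $X\in\B(N)$, and monotonicity of $t\mapsto\lfloor\tfrac{k'}{k}t\rfloor$. This gives
\[|Y\cap L|=\lfloor\tfrac{k'}{k}g(L)\rfloor\le\lfloor\tfrac{k'}{k}|X\cap L|\rfloor.\]

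Both cases contradict the hypothesis $|Y\cap L|>\lfloor\tfrac{k'}{k}|X\cap L|\rfloor$, so we conclude $|X\cap L|<f(L)$ and $g'(L)<|Y\cap L|$. The argument has no real obstacle. The only point needing care is that both the lower-quota case and the upper-quota case use the exact floor relations assumed in Proposition~\ref{prop:quotient-laminar}, together with monotonicity of the floor.
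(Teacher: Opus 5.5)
Your proposal is correct and is essentially the paper's proof: you argue by contraposition, split into the cases $|X\cap L|=f(L)$ and $|Y\cap L|=g'(L)$, and bound $|Y\cap L|$ by $\lfloor\tfrac{k'}{k}|X\cap L|\rfloor$ using the exact floor relations for $f'$ and $g'$. Your extra remark that the non-strict inequalities collapse to equalities because $X\in\B(N)$ and $Y\in\B(M)$ is a harmless clarification that the paper leaves implicit.
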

\begin{proof}
The proof is similar to the above claim. Suppose that $|X\cap L|=f(L)$ or $g'(L)=|Y\cap L|$. In the former case, we have $|Y\cap L|\leq f'(L)=\lfloor\tfrac{k'}{k} f(L)\rfloor=\lfloor\tfrac{k'}{k} |X\cap L|\rfloor$. In the latter case, we have $|Y\cap L|=g'(L)=\lfloor\tfrac{k'}{k} g(L)\rfloor\leq \lfloor\tfrac{k'}{k} |X\cap L|\rfloor$. Thus, in both cases, $|Y\cap L|\leq \lfloor\tfrac{k'}{k}|X\cap L|\rfloor$.
\end{proof}

By combining (b), (d), and Claim~\ref{claim:laminar3}, we see that $g(L)\leq |X'\cap L|\leq f(L)$ and $g'(L) \leq |Y'\cap L|\leq f'(L)$ for every $L\in \L$ on the $L^x$--$L^*$ ancestor path excluding $L^*$.
By combining (c), (e), and Claim~\ref{claim:laminar4}, the same inequalities hold for members on the $L^y$--$L^*$ ancestor path excluding $L^*$. For all other members of $\L$, (f) guarantees that the required inequalities are preserved. Thus, we have shown that $X'\in \B(N)$ and $Y'\in \B(M)$.
\end{proof}

\section{Omitted Proofs from Section~\ref{sec:matroid-CFR} (Matroid-CFR)}\label{app:CFR}

Here we provide a few proofs omitted from Section~\ref{sec:matroid-CFR}.

\ObsOptRanking*
\begin{proof}
Let $P_i=B_i\setminus B_{i-1}$ for $i\in[s+1]$, where we set for convenience $B_0=\emptyset$ and $B_{s+1}=E$. Then any ranking $\sigma$ satisfying $\Top_{k_i}(\sigma)=B_i~(i\in[s])$ must place the blocks $P_1,P_2,\dots,P_{s+1}$ in this order.
Now, the pairs of elements counted in $d_{\mathrm{KT}}(\sigma,\pi)$ are partitioned into two types: pairs belonging to different blocks, and pairs belonging to the same block. For pairs of elements belonging to different blocks, their relative order in $\sigma$ is completely determined by the fixed block order $P_1,P_2,\dots,P_{s+1}$. Hence the total contribution of such pairs to $d_{\mathrm{KT}}(\sigma,\pi)$ is the same for all rankings $\sigma$ satisfying $\Top_{k_i}(\sigma)=B_i~(i\in[s])$.

Therefore, to minimize $d_{\mathrm{KT}}(\sigma,\pi)$, it suffices to minimize the contribution of pairs belonging to the same block. For each block $P_i$, this contribution is exactly the number of inversions of the order induced by $\sigma$ on $P_i$ relative to the order induced by $\pi$ on $P_i$. This number is minimized, and equal to $0$, when the elements of $P_i$ are ordered according to $\pi$. Thus, the ranking $\pi_{\bB}$ is optimal.
\end{proof}

\LemOptRanking*
\begin{proof}
By the construction of $\pi_B$, the value $\dKT(\pi_{B}, \pi)$ coincides with the number of pairs $(e,f)\in B\times (E\setminus B)$ with $f \prec_{\pi} e$. Therefore,
\begin{align*}
    \dKT(\pi_B, \pi) &= \sum_{e \in B} ~\big|\set{ f \in E\setminus B | f \prec_{\pi} e }\big| \\
    &= \sum_{e \in B} \big( \pi^{-1}(e) - 1 - |\set{ f \in B | f \prec_{\pi} e }| \,\big) \\
    &= \sum_{e \in B} \pi^{-1}(e) - k - \sum_{e \in B} \big|\set{f \in B |f \prec_{\pi} e}\big|\\
    &= \sum_{e \in B} \pi^{-1}(e) -k -\sum_{j=1}^{k}(j-1)
    = \sum_{e \in B} \pi^{-1}(e) -\frac{k(k+1)}{2}.
\end{align*}
Thus, the required equality is obtained.
\end{proof}

\section{Omitted Proofs from Section~\ref{sec:approx} (Approximation for FRA)}\label{app:approx}

In Section~\ref{sec:approx}, we provide only the statements of our results on approximation algorithms for Matroid-FRA and Flag-Matroid-FRA. Here we provide their details.

\subsection{Matroid-FRA Approximation}
This part is devoted to the proof of the following theorem on Matroid-FRA (Matroid Feasible Rank Aggregation). See Section~\ref{sec:approx} for the definition of the problem.
\ThmApproxMat*

For FRA with $k$-fairness criteria (i.e., a special case of Matroid-FRA with a generalized partition matroid), Chakraborty et al.~\cite{chak25improve} show that there exists a $(2+\varepsilon)$-approximation algorithm for any constant $\varepsilon>0$. Their algorithm consists of the following steps:

\begin{enumerate}
    \item Find an appropriate bi-partition $\{B, E\setminus B\}$ such that $B$ has size $k$ and satisfies $k$-fairness by solving the {\em colorful bi-partition problem} (described below in a generalized form).
    \item Apply the known PTAS for rank aggregation (without constraints) by Kenyon-Mathieu and Schudy~\cite{KenyonMathieuSchudy07} to each of $(\pi_i|_{B})_{i\in [m]}$ and $(\pi_i|_{E\setminus B})_{i\in [m]}$ to obtain $\sigma_B$ and $\sigma_{E\setminus B}$.
    \item Output a concatenation of $\sigma_B$ and $\sigma_{E\setminus B}$ as $\sigma$.
\end{enumerate}

To generalize their algorithm to Matroid-FRA, we consider solving the following generalized version of the colorful bi-partition subproblem. As in the original version, we describe the instance by a \emph{weighted tournament}. 
In this paper, a weighted tournament is a directed graph $T=(E,A)$ in which, for every pair of distinct vertices $e,f\in E$, both arcs $(e,f)$ and $(f,e)$ belong to $A$, together with a weight function $c:A\to\mathbb{R}$.\footnote{This terminology is used in some works on weighted rank aggregation. In graph theory, a tournament usually means an orientation of a complete graph.}
A weighted tournament arises from a collection of rankings as follows. Consider a weighted tournament with $E$ being the set of candidates and, for each $e,f\in E$, set the weight $c(e,f)$ to be the fraction of voters preferring $e$ to $f$ among all voters. Then we see that the weights satisfy the following properties:

\begin{itemize}
    \item Probability Constraints: $c(e,f)+c(f,e)=1$ for any distinct $e,f\in E$, and
    \item Triangle Inequality: $c(e,g)\leq c(e,f)+c(f,g)$ for any distinct $e,f,g\in E$.
\end{itemize}

The following problem is a matroidal generalization of the colorful bi-partition problem in \cite{chak25improve}. For a weighted tournament $T=(E,A)$ and any vertex subset $S\subseteq E$, we define $\cost(S)=\sum_{(e,s)\in A:\, s\in S,\, e\in E\setminus S}c(e,s)$. Also, for any vertex $e\in E$, we define $\delta^{-}(e)=\sum_{f\in E\setminus\{e\}} c(f,e)$.

\begin{tcolorbox}[
  colback=white,
  colframe=black!75,
  boxrule=0.25pt,
  arc=8pt
]
\textsc{Matroid Bi-partition}\\[2pt]
\textbf{Input:} A weighted tournament $(T=(E,A),c)$ and a matroid $M=(E,\B)$ of rank $k$.\\
\textbf{Output:} A base $B\in\B$ minimizing $\cost(B)=\sum_{(e,b)\in A:\, b\in B,\, e\in E\setminus B} c(e,b)$.
\end{tcolorbox}

\begin{lemma}
When the weights $c$ satisfy the probability constraints, for any base $B\in \B$, we have
$\cost(B)=\sum_{b\in B}\delta^{-}(b)-\frac{k(k-1)}{2}$.
\end{lemma}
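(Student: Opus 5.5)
The plan is to expand $\sum_{b\in B}\delta^{-}(b)$ and split it according to where the tail of each arc lies. By definition, $\delta^{-}(b)=\sum_{f\in E\setminus\{b\}} c(f,b)$. Summing over $b\in B$ and separating $f\in E\setminus B$ from $f\in B\setminus\{b\}$ gives
\[
\sum_{b\in B}\delta^{-}(b)=\sum_{b\in B}\sum_{e\in E\setminus B} c(e,b)\;+\;\sum_{b\in B}\sum_{f\in B\setminus\{b\}} c(f,b).
\]
The first double sum is exactly $\cost(B)$, since the arcs $(e,b)\in A$ with $b\in B$ and $e\in E\setminus B$ are precisely the arcs counted in $\cost(B)$.

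Next I evaluate the second double sum, which runs over all ordered pairs of distinct elements of $B$. Grouping the two orientations of each unordered pair $\{f,b\}\subseteq B$, it becomes $\sum_{\{f,b\}\subseteq B}\bigl(c(f,b)+c(b,f)\bigr)$. By the probability constraints, each bracket equals $1$. Because $B$ is a base of a rank-$k$ matroid, $|B|=k$, so there are $\binom{k}{2}=\frac{k(k-1)}{2}$ unordered pairs.

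Rearranging then yields $\cost(B)=\sum_{b\in B}\delta^{-}(b)-\frac{k(k-1)}{2}$. There is no real obstacle here; the only care needed is to confirm that the internal pairs are counted once per orientation, so that the probability constraints apply pairwise. The triangle inequality is not used. As in Lemma~\ref{lem:MatroidCFR}, the consequence is that Matroid Bi-partition becomes a minimum-weight base problem with weights $\delta^{-}$, which Proposition~\ref{prop:basic} solves.
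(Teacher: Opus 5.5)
Your proposal is correct and follows essentially the same route as the paper: you split $\sum_{b\in B}\delta^{-}(b)$ into arcs entering $B$ from $E\setminus B$, which give $\cost(B)$, and arcs internal to $B$. The probability constraints then make the internal part contribute $1$ per unordered pair, hence $\binom{k}{2}=\frac{k(k-1)}{2}$ in total.
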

\begin{proof}
For any base $B\in \B$, we have
\[
\sum_{b \in B}\delta^{-}(b)
=
\sum_{b \in B} \sum_{e \in E\setminus\{b\}} c(e,b)
=
\sum_{b\in B}\sum_{b' \in B\setminus\{b\}} c(b',b)
+
\sum_{b \in B,\, e \in E\setminus B} c(e,b).
\]
The second term is $\cost(B)$. By the probability constraints, the first term is $\frac{k(k-1)}{2}$, since for each unordered pair $\{b,b'\}\subseteq B$ we have $c(b,b')+c(b',b)=1$. Thus, the lemma follows.
\end{proof}

By this lemma, we see that Matroid Bi-partition reduces to the minimum-weight base problem for matroids, which can be solved by the greedy algorithm with $O(|E|)$ calls to the independence oracle.

\begin{lemma}
Matroid Bi-partition can be solved in polynomial time when the weights satisfy the probability constraints.
\end{lemma}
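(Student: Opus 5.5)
The plan is to reduce Matroid Bi-partition to a minimum-weight base problem and then invoke Proposition~\ref{prop:basic}.

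First, use the preceding lemma. Since the weights satisfy the probability constraints, every base $B\in\B$ has
\[
\cost(B)=\sum_{b\in B}\delta^{-}(b)-\tfrac{k(k-1)}{2}.
\]
The subtracted term $\frac{k(k-1)}{2}$ depends only on the rank $k$, because all bases have the same size $k$. So minimizing $\cost(B)$ over $\B$ is the same as minimizing the linear weight $w(B)=\sum_{b\in B}w(b)$ with $w(b)\coloneqq\delta^{-}(b)$.

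Second, compute each $w(e)=\delta^{-}(e)=\sum_{f\neq e}c(f,e)$ directly from the tournament. This takes $O(|E|^2)$ arithmetic operations. Then sort $E$ by nondecreasing $w$, breaking ties arbitrarily, to get a ranking $\rho:[n]\to E$. By construction, $w$ is compatible with $\rho$: $e\prec_\rho f$ implies $w(e)\le w(f)$.

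Third, run the greedy algorithm on $M$ with respect to $\rho$. By Proposition~\ref{prop:basic}, it returns the $\preceq_\rho^{\mathrm G}$-minimum base $B^*$, and $B^*$ is a minimum-weight base for every weight function compatible with $\rho$, in particular for $w$. By the first step, $B^*$ then minimizes $\cost$ over $\B$. The algorithm uses $O(|E|)$ independence-oracle calls on top of the polynomial-time preprocessing, so the whole procedure runs in polynomial time.

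No step here is a real obstacle. The only point to check is that the additive constant in the lemma is the same for all bases, which holds because every base has exactly $k$ elements. Everything else follows directly from Proposition~\ref{prop:basic}.
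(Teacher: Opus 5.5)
Your proposal is correct and follows the paper's proof. Both set $w(e)=\delta^{-}(e)$, use the preceding lemma to see that $\cost(B)$ differs from $w(B)$ by the constant $\frac{k(k-1)}{2}$, and solve the resulting minimum-weight base problem with the greedy algorithm; you additionally spell out the sorting step that makes $w$ compatible with a ranking so that Proposition~\ref{prop:basic} applies, which the paper leaves implicit.
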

\begin{proof}
Define a weight function $w:E\to\mathbb{R}$ by $w(e)=\delta^{-}(e)$.
By the previous lemma, for every base $B\in\B$,
\[
\cost(B)=\sum_{b\in B}w(b)-\frac{k(k-1)}{2}.
\]
Hence minimizing $\cost(B)$ over $B\in\B$ is equivalent to finding a minimum-weight base of the matroid $M$, which can be done by the greedy algorithm in polynomial time.
\end{proof}

We modify the algorithm by Chakraborty et al.~\cite{chak25improve} by replacing the subroutine solving colorful bi-partition with the one solving Matroid Bi-partition.
Specifically, we construct a ranking $\sigma^*$ in the following manner:

\begin{itemize}
\item Given an instance $I=(\pi_1,\dots, \pi_m, M=(E,\B))$ of Matroid-FRA, define a weighted tournament $(T=(E, A), c)$ by setting $c(e,f)=\frac{1}{m}|\set{j\in [m]\mid e\prec_{\pi_j} f}|$.
\item Let $B^*$ be the optimal solution to Matroid Bi-partition for the input $(T, c)$ and $M$.
\item Apply the known PTAS for rank aggregation (without constraints) by Kenyon-Mathieu and Schudy~\cite{KenyonMathieuSchudy07} to each of $(\pi_i|_{B^*})_{i\in [m]}$ and $(\pi_i|_{E\setminus B^*})_{i\in [m]}$ to obtain $\sigma_{B^*}$ and $\sigma_{E\setminus B^*}$.
\item Output a concatenation of $\sigma_{B^*}$ and $\sigma_{E\setminus B^*}$ as $\sigma^*$.
\end{itemize}

We reproduce the analysis of approximation ratio in~\cite{chak25improve} for completeness.

\begin{lemma}
The ranking $\sigma^*$ constructed as above is a $(2+\varepsilon)$-approximate solution to Matroid-FRA.
\end{lemma}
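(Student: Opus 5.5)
Let $\sigma^{\rm opt}$ be an optimal solution to the Matroid-FRA instance, and write $\mathrm{OPT}=\sum_{j}\dKT(\sigma^{\rm opt},\pi_j)$. Let $B^{\rm opt}=\Top_k(\sigma^{\rm opt})\in\B$. The plan is to decompose every objective value into a cross term between the two parts of the bi-partition and two internal terms, and then bound each piece separately.

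\emph{Decomposition.} For any ranking $\sigma$ whose top-$k$ set is $B$, every pair of elements is either inside $B$, inside $E\setminus B$, or crosses the two sets. A crossing pair $(e,b)$ with $e\notin B$, $b\in B$ is placed with $b$ above $e$, so it contributes exactly $m\,c(e,b)$ disagreements. Summing over all voters gives
\[
\sum_{j}\dKT(\sigma,\pi_j)=m\cdot\cost(B)+K(\sigma|_B)+K(\sigma|_{E\setminus B}),
\]
where $K(\tau)=\sum_j\dKT(\tau,\pi_j|_X)$ is the Kemeny cost of $\tau$ on the subset $X$ it ranks.

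\emph{Bounding the cross term.} Since $B^*$ minimizes $\cost$ over $\B$ by the previous lemma, we have $m\,\cost(B^*)\le m\,\cost(B^{\rm opt})\le \mathrm{OPT}$.

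\emph{Bounding the internal terms.} The PTAS gives $K(\sigma_{B^*})\le(1+\varepsilon)K^{\rm opt}(B^*)$, where $K^{\rm opt}(X)$ denotes the optimal unconstrained Kemeny cost on $X$; the same holds for $E\setminus B^*$. The main obstacle is to show
\[
K^{\rm opt}(B^*)+K^{\rm opt}(E\setminus B^*)\le \mathrm{OPT},
\]
because $B^*$ need not equal $B^{\rm opt}$. I will handle this by restriction: the ranking $\sigma^{\rm opt}|_{B^*}$ is a feasible ranking of $B^*$, and similarly for $E\setminus B^*$. Moreover, the two restrictions count disjoint sets of pairs, namely those inside $B^*$ and those inside $E\setminus B^*$, and each such disagreement is also counted in $\mathrm{OPT}$. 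Hence $K^{\rm opt}(B^*)+K^{\rm opt}(E\setminus B^*)\le K(\sigma^{\rm opt}|_{B^*})+K(\sigma^{\rm opt}|_{E\setminus B^*})\le\mathrm{OPT}$.

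\emph{Conclusion.} Feasibility holds because $\Top_k(\sigma^*)=B^*\in\B$. Combining the bounds,
\[
\mathrm{cost}(\sigma^*)\le \mathrm{OPT}+(1+\varepsilon)\mathrm{OPT}=(2+\varepsilon)\mathrm{OPT}.
\]
Running the PTAS with a constant $\varepsilon$ keeps the running time polynomial. The triangle inequality on $c$ is not needed here; only the probability constraints are used, and only through the previous lemma.
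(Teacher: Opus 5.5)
Your proposal is correct and follows essentially the same route as the paper's proof: decompose the objective into $m\cdot\cost(B)$ plus the two internal Kemeny costs, bound the cross term via the optimality of $B^*$ against $B^{\rm opt}=\Top_k(\sigma^{\rm opt})$, and bound the internal terms by the PTAS guarantee together with restricting $\sigma^{\rm opt}$ to $B^*$ and $E\setminus B^*$. The only step you leave implicit is $m\cdot\cost(B^{\rm opt})\le\mathrm{OPT}$, which follows directly from your decomposition applied to $\sigma^{\rm opt}$ since the internal terms are nonnegative.
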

\begin{proof}
For any subset $S\subseteq E$ and any ranking $\tau$ over $S$, define
\[
\mathrm{RA}(S,\tau)\coloneqq \sum_{i=1}^m d_{\mathrm{KT}}(\tau,\pi_i|_{S}),
\]
and let
$\mathrm{OPT}(S)\coloneqq \min\set{\mathrm{RA}(S,\tau)| \tau \text{ is a ranking over } S}$.
Also, let $\mathrm{OPT}$ denote the optimal objective value of the given instance of Matroid-FRA, and let $\hat{\sigma}$ be an optimal feasible ranking. Put $\hat{B}\coloneqq \Top_k(\hat{\sigma})$.

Fix any base $B\in \B$, and let $\sigma$ be any ranking with $\Top_k(\sigma)=B$. Then the unordered pairs of elements of $E$ are partitioned into three types: pairs contained in $B$, pairs contained in $E\setminus B$, and pairs with one endpoint in each of $B$ and $E\setminus B$. Hence
\begin{equation}\label{eq:matroid-fra-decomp}
\sum_{i=1}^m d_{\mathrm{KT}}(\sigma,\pi_i)
=
m\cdot \cost(B)
+
\mathrm{RA}(B,\sigma|_B)
+
\mathrm{RA}(E\setminus B,\sigma|_{E\setminus B}).
\end{equation}
Indeed, for every pair $(b,e)\in B\times (E\setminus B)$, the ranking $\sigma$ places $b$ before $e$, so the contribution of this pair to the total Kendall tau distance is exactly the number of voters $j\in[m]$ with $e\prec_{\pi_j} b$, i.e., $m\cdot c(e,b)$. Summing over all such pairs gives the term $m\cdot\cost(B)$.

Applying \eqref{eq:matroid-fra-decomp} to the optimal ranking $\hat{\sigma}$ and its top-$k$ set $\hat{B}$, we obtain $m\cdot \cost(\hat{B})\leq \mathrm{OPT}$.
Since $B^*$ is an optimal solution to Matroid Bi-partition, it follows that
\begin{equation}\label{eq:cross-bound}
m\cdot \cost(B^*)\leq m\cdot \cost(\hat{B})\leq \mathrm{OPT}.
\end{equation}
By the guarantee of the PTAS for unconstrained rank aggregation,
\[
\mathrm{RA}(B^*,\sigma_{B^*})\leq (1+\varepsilon)\mathrm{OPT}(B^*),
\qquad
\mathrm{RA}(E\setminus B^*,\sigma_{E\setminus B^*})\leq (1+\varepsilon)\mathrm{OPT}(E\setminus B^*).
\]
On the other hand, the restrictions $\hat{\sigma}|_{B^*}$ and $\hat{\sigma}|_{E\setminus B^*}$ are rankings over $B^*$ and $E\setminus B^*$, respectively. Therefore,
\[
\mathrm{OPT}(B^*)+\mathrm{OPT}(E\setminus B^*)
\leq
\mathrm{RA}(B^*,\hat{\sigma}|_{B^*})+\mathrm{RA}(E\setminus B^*,\hat{\sigma}|_{E\setminus B^*}).
\]
The right-hand side counts only the contributions of pairs contained entirely in $B^*$ or entirely in $E\setminus B^*$, and hence it is at most the whole objective value of $\hat{\sigma}$. Thus,
\begin{equation}\label{eq:inside-bound}
\mathrm{OPT}(B^*)+\mathrm{OPT}(E\setminus B^*)\leq \mathrm{OPT}.
\end{equation}

Finally, applying \eqref{eq:matroid-fra-decomp} to $\sigma^*$ and $B^*$, and using \eqref{eq:cross-bound} and \eqref{eq:inside-bound}, we get
\begin{align*}\textstyle
\sum_{i=1}^m d_{\mathrm{KT}}(\sigma^*,\pi_i)
&=
m\cdot \cost(B^*)
+
\mathrm{RA}(B^*,\sigma_{B^*})
+
\mathrm{RA}(E\setminus B^*,\sigma_{E\setminus B^*})\\
&\leq
m\cdot \cost(B^*)
+
(1+\varepsilon)\bigl(\mathrm{OPT}(B^*)+\mathrm{OPT}(E\setminus B^*)\bigr)\\
&\leq
\mathrm{OPT}+(1+\varepsilon)\mathrm{OPT}\\
&=
(2+\varepsilon)\mathrm{OPT}.
\end{align*}
Therefore, $\sigma^*$ is a $(2+\varepsilon)$-approximate solution to Matroid-FRA.
\end{proof}

\subsection{Flag-Matroid-FRA Approximation}

In the previous subsection, we considered a matroidal generalization of FRA with $k$-fairness, i.e., only one prefix is required to satisfy the constraint.
In this subsection, we consider the flag matroid analogue, which includes FRA with block-$k$-fairness as a special case.

We apply the generic framework by Chakraborty et al.~\cite{chak25improve}.
They proved that, if there is an efficient procedure to solve the closest feasible ranking problem with some constraint, then we can obtain an efficient $2.881$-approximation algorithm for the rank aggregation problem with the same constraint. Since we have shown that Flag-Matroid-CFR is solvable in polynomial time, the framework applies directly to the flag matroid constraint, and we obtain the following.

\ThmApproxFlag*
\begin{proof}
The claim follows immediately from the generic $2.881$-approximation framework of Chakraborty et al.~\cite{chak25improve} together with the polynomial-time solvability of Flag-Matroid-CFR shown in Corollary~\ref{cor:flag-CFR}.
\end{proof}

\section{Omitted Proofs from Section~\ref{sec:hardness} (Hardness with Few Voters)}\label{app:hardness}

We provide proofs of the hardness results for Matroid-FRA with a constant number of voters. Specifically, we show NP-hardness for the cases of $2$ voters and $3$ voters, and then deduce hardness for all larger constant values of $m$.

In the proofs, we use the terminology that $\pi$ and $\sigma$ are {\em inconsistent} on an unordered pair $\{e,f\}\subseteq E$ if either $e\prec_\pi f$ and $e\succ_\sigma f$, or $e\succ_\pi f$ and $e\prec_\sigma f$.

\ThmHardTwo*

To obtain the theorem, we show NP-completeness of the following decision version.
\begin{theorem}
It is NP-complete to determine, given an instance $I$ of Matroid-FRA with $m=2$ and a number $\theta$, whether $I$ admits a feasible ranking with objective value at most $\theta$.
\end{theorem}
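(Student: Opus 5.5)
The plan is to first make the objective explicit for two voters, and then encode an NP-complete selection problem in the choice of the top-$k$ base. Membership in NP is immediate: a feasible ranking $\sigma$ is a certificate, and both $\Top_k(\sigma)\in\B$ (via the independence oracle, or directly for partition matroids) and $\dKT(\sigma,\pi_1)+\dKT(\sigma,\pi_2)\le\theta$ can be checked in polynomial time. So all the work is in NP-hardness.

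\textbf{Step 1 (reformulation).} Let $P$ be the partial order $\pi_1\cap\pi_2$, i.e., $e<_P f$ iff $e$ precedes $f$ in both rankings. Every unordered pair on which $\pi_1$ and $\pi_2$ disagree contributes exactly $1$ to the objective, whatever $\sigma$ is. A pair on which they agree contributes $2$ if $\sigma$ reverses it and $0$ otherwise. Fix $B=\Top_k(\sigma)$. The restrictions of $P$ to $B$ and to $E\setminus B$ are acyclic, so $B$ and $E\setminus B$ can each be ordered by a linear extension of $P$. The cost of an optimal $\sigma$ with top set $B$ is therefore
\[
\dKT(\pi_1,\pi_2)+2\,\bigl|\set{(f,b)\mid f\in E\setminus B,\ b\in B,\ f<_P b}\bigr|.
\]
Hence the problem becomes: choose a base $B$ of a partition matroid minimizing the number of $P$-relations that enter $B$ from outside. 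Here $P$ is an arbitrary two-dimensional poset, so its comparability graph is an arbitrary permutation graph.

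\textbf{Step 2 (the reduction).} I would reduce from Clique (equivalently, densest-$k$-subgraph with a threshold), using the partition classes to force the shape of $B$, with one class for ``vertex'' elements and one for ``edge'' elements. The difficulty is that $P$ must be two-dimensional, so I cannot directly impose ``edge element above exactly its two endpoints''. Instead I will count penalties with thresholds. For each vertex $v$ and each edge $uv$, I place elements in blocks so that a vertex-type element $x_v$ is $P$-below a controlled set of edge-type elements. Choosing an edge element whose endpoints are unchosen then costs a fixed amount more than choosing one whose endpoints are chosen. The two rankings place the blocks in the same order, and within blocks in opposite orders, so that comparabilities arise only along designated block pairs. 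The quotas are $a$ vertex elements and $\binom a2$ edge elements, and $\theta$ is set so that the bound is met iff the selected edges all lie inside the selected vertex set, i.e., iff a clique of size $a$ exists. Padding elements, forced into $B$ or out of $B$ by quotas, adjust $k$ and absorb constant terms in the cost.

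\textbf{Main obstacle.} I expect the hardest step to be the gadget design under two-dimensionality. Incidence posets of general graphs are not two-dimensional, and optimization over ideals of a single permutation poset tends to be polynomial. So the hardness must come from the interaction of the partition quotas with a weighted crossing count, not from pure order-ideal feasibility. My first attempt will be to replace each incidence by a chain of duplicated copies ordered oppositely in $\pi_1$ and $\pi_2$, so that cross penalties depend only on which vertex \emph{class} an element lies in. I would then verify the two directions of the threshold equivalence by a direct counting argument using the formula from Step 1.
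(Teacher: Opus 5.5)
Your Step 1 is correct. It is exactly the mechanism behind the paper's proof. Pairs on which $\pi_1$ and $\pi_2$ disagree cost $1$ regardless of $\sigma$, and agreeing pairs cost $0$ or $2$. So with $B=\Top_k(\sigma)$ the best cost is $\dKT(\pi_1,\pi_2)+2\,|\{(f,b): f\notin B,\ b\in B,\ f<_P b\}|$. NP membership is also fine.

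The gap is Step 2, which is the entire NP-hardness argument and is only a plan. No elements, comparabilities, quotas, or threshold are specified, and neither direction of the equivalence is verified. It is also unclear that the Clique route can be completed. As you note, a two-dimensional poset cannot place each edge element above exactly its two endpoint elements. If you instead use separate copies of the endpoints for each edge, a partition matroid cannot force all copies of a vertex to be selected together, and that consistency is what a Clique encoding needs.

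The missing idea is that no weighted penalty is needed. Your heuristic that hardness cannot come from ``pure order-ideal feasibility'' is mistaken once the quotas are present. Set $\theta=\dKT(\pi_1,\pi_2)$; the question then becomes whether some base of a partition matroid is a down-set of $P$. This is already NP-hard when $P$ is a disjoint union of $3$-element chains. Such a $P$ is trivially two-dimensional: keep the order within each chain the same in $\pi_1$ and $\pi_2$, and list the chains in opposite orders.

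The paper reduces from Perfect 3-Dimensional Matching:
\begin{itemize}
\item For each triple $t=(x,y,z)\in T$, create a chain $e_{tx}<_P e_{ty}<_P e_{tz}$.
\item For each $v\in X\cup Y\cup Z$, let $Q_v=\{e_{tv}: v\in t\in T\}$ be a partition class with quota exactly $1$, so $k=3q$.
\item Take $\theta=\binom{3|T|}{2}-3|T|$.
\end{itemize}
Suppose a feasible $\sigma$ attains $\theta$. Then $\Top_k(\sigma)$ is a down-set. For each $z\in Z$ it contains the top element $e_{tz}$ of some chain, and hence the whole chain. These $q$ chains are distinct and already fill all $3q$ positions, so the quotas on $X$ and $Y$ force the chosen triples to form a perfect matching. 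Conversely, a perfect matching $N$ yields the down-set $\bigcup_{t\in N}\{e_{tx},e_{ty},e_{tz}\}$. Listing those chains first, in chain order, and the remaining chains afterwards attains $\theta$.
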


\begin{proof}
It is clear that the problem is in NP. We show NP-hardness by a reduction from \emph{Perfect $3$-Dimensional Matching} ($3$DM), a classical NP-complete problem~\cite{karp1972,garey1979}.
An instance of $3$DM consists of three pairwise disjoint sets
$X$, $Y$, and $Z$ with $|X| = |Y| = |Z| = q$,
and a set $T \subseteq X \times Y \times Z$ of triples.
The question is whether there exists a perfect $3$-dimensional matching, i.e., a subset $N \subseteq T$ of size $q$ such that every element of $X \cup Y \cup Z$ appears in exactly one triple of $N$.
Without loss of generality, we may assume that every element of $X\cup Y\cup Z$ appears in at least one triple of $T$, since otherwise the instance is trivially a no-instance.

\begin{figure}[t]
    \centering
        \begin{minipage}{0.4\columnwidth}
            \centering
            \begin{tikzpicture}[x=1cm, y=1cm, element/.style={circle, draw=black, thick, fill=white, minimum size=15pt, inner sep=0pt, font=\bfseries}]
            \node[element] (a) at (-1.4, {1.4*sqrt(3)}) {\(\vphantom{A}a\)};
            \node[element] (u) at (0, {1.4*sqrt(3)}) {\(\vphantom{A}u\)};
            \node[element] (x) at (1.4, {1.4*sqrt(3)}) {\(\vphantom{A}x\)};
            \node[element] (b) at (2.8, 0) {\(\vphantom{A}b\)};
            \node[element] (v) at (2.1, -{1.4*sqrt(3)}/2) {\(\vphantom{A}v\)};
            \node[element] (y) at (1.4, -{1.4*sqrt(3)}) {\(\vphantom{A}y\)};
            \node[element] (c) at (-1.4, -{1.4*sqrt(3)}) {\(\vphantom{A}c\)};
            \node[element] (w) at (-2.1, -{1.4*sqrt(3)}/2) {\(\vphantom{A}w\)};
            \node[element] (z) at (-2.8, 0) {\(\vphantom{A}z\)};

            \begin{pgfonlayer}{background}
            \draw[draw = orange, fill = orange!20, fill opacity=0.2] \convexpath{u,v,w}{10pt};
            \draw[draw = green!70!black, fill = green!20, fill opacity=0.2] \convexpath{x,v,w}{9pt};
            \draw[draw = blue!80, fill = blue!20, fill opacity=0.2] \convexpath{x,b,z}{11pt};
            \draw[draw = red!80, fill = red!20, fill opacity=0.2] \convexpath{x,y,z}{10pt};
            \draw[draw = magenta!80, fill = magenta!20, fill opacity=0.2] \convexpath{a,y,c}{9pt};

            \end{pgfonlayer}
            \end{tikzpicture}
        \end{minipage}
    \qquad \qquad
        \begin{minipage}{0.4\columnwidth}
        \centering
        \begin{tikzpicture}[x=1cm, y=1cm, element/.style={circle, draw=none, minimum size=15pt, inner sep=0pt, font=\bfseries}]
        \node[element, text = orange] (1u) at (-1, 2) {\(\vphantom{Ae_y}\)\makebox[\widthof{\(e_{t_1w}\)}]{\(e_{t_1u}\)}};
        \node[element, text = orange] (1v) at (0, 2) {\(\vphantom{Ae_y}\)\makebox[\widthof{\(e_{t_1w}\)}]{\(e_{t_1v}\)}};
        \node[element, text = orange] (1w) at (1, 2) {\(\vphantom{Ae_y}\)\makebox[\widthof{\(e_{t_1w}\)}]{\(e_{t_1w}\)}};
        \node[element, text = green!70!black] (2x) at (-1, 1) {\(\vphantom{Ae_y}\)\makebox[\widthof{\(e_{t_1w}\)}]{\(e_{t_2x}\)}};
        \node[element, text = green!70!black] (2v) at (0, 1) {\(\vphantom{Ae_y}\)\makebox[\widthof{\(e_{t_1w}\)}]{\(e_{t_2v}\)}};
        \node[element, text = green!70!black] (2w) at (1, 1) {\(\vphantom{Ae_y}e_{t_2w}\)};
        \node[element, text = blue!80] (3x) at (-1, 0) {\(\vphantom{Ae_y}e_{t_3x}\)};
        \node[element, text = blue!80] (3b) at (0, 0) {\(\vphantom{Ae_y}\)\makebox[\widthof{\(e_{t_1w}\)}]{\(e_{t_3b}\)}};
        \node[element, text = blue!80] (3z) at (1, 0) {\(\vphantom{Ae_y}\)\makebox[\widthof{\(e_{t_1w}\)}]{\(e_{t_3z}\)}};
        \node[element, text = red!80] (4x) at (-1, -1) {\(\vphantom{Ae_y}e_{t_4x}\)};
        \node[element, text = red!80] (4y) at (0, -1) {\(\vphantom{Ae_y}\)\makebox[\widthof{\(e_{t_1w}\)}]{\(e_{t_4y}\)}};
        \node[element, text = red!80] (4z) at (1, -1) {\(\vphantom{Ae_y}e_{t_4z}\)};
        \node[element, text = magenta!80] (5a) at (-1, -2) {\(\vphantom{Ae_y}\)\makebox[\widthof{\(e_{t_1w}\)}]{\(e_{t_5a}\)}};
        \node[element, text = magenta!80] (5y) at (0, -2) {\(\vphantom{Ae_y}e_{t_5y}\)};
        \node[element, text = magenta!80] (5c) at (1, -2) {\(\vphantom{Ae_y}\)\makebox[\widthof{\(e_{t_1w}\)}]{\(e_{t_5c}\)}};
        \node [draw = none, text = orange, left= 15pt of 1u, minimum width = 5pt] {\(L_{t_1}:\)};
        \node [draw = none, text = green!70!black, left= 15pt of 2x, minimum width = 5pt] {\(L_{t_2}:\)};
        \node [draw = none, text = blue!80, left= 15pt of 3x, minimum width = 5pt] {\(L_{t_3}:\)};
        \node [draw = none, text = red!80, left= 15pt of 4x, minimum width = 5pt] {\(L_{t_4}:\)};
        \node [draw = none, text = magenta!80, left= 15pt of 5a, minimum width = 5pt] {\(L_{t_5}:\)};

        \node [draw = none, text = black!60, minimum width = 5pt] (Pu) at (-0.7, 2.3) {\(\vphantom{P_y}P_u\)};
        \node [draw = none, text = black!60, minimum width = 5pt] (Pv) at (0.3, 2.3) {\(\vphantom{P_y}P_v\)};
        \node [draw = none, text = black!60, minimum width = 5pt] (Pw) at (1.3, 2.3) {\(\vphantom{P_y}P_w\)};
        \node [draw = none, text = black!60, minimum width = 5pt] (Px) at (-0.7, 1.3) {\(\vphantom{P_y}P_x\)};
        \node [draw = none, text = black!60, minimum width = 5pt] (Pb) at (0.3, 0.3) {\(\vphantom{P_y}P_b\)};
        \node [draw = none, text = black!60, minimum width = 5pt] (Pz) at (1.3, 0.3) {\(\vphantom{P_y}P_z\)};
        \node [draw = none, text = black!60, minimum width = 5pt] (Pa) at (-0.7, -1.7) {\(\vphantom{P_y}P_a\)};
        \node [draw = none, text = black!60, minimum width = 5pt] (Py) at (0.3, -0.7) {\(\vphantom{P_y}P_y\)};
        \node [draw = none, text = black!60, minimum width = 5pt] (Pc) at (1.3, -1.7) {\(\vphantom{P_y}P_c\)};

        \begin{pgfonlayer}{background}
        \node [fit=(1u), draw = black!20, fill=black!5, inner sep=0pt] {};
        \node [fit=(1v) (2v), draw = black!20, fill=black!5, inner sep=0pt] {};
        \node [fit=(1w) (2w), draw = black!20, fill=black!5, inner sep=0pt] {};
        \node [fit=(2x) (3x) (4x), draw = black!20, fill=black!5, inner sep=0pt] {};
        \node [fit=(3b), draw = black!20, fill=black!5, inner sep=0pt] {};
        \node [fit=(3z) (4z), draw = black!20, fill=black!5, inner sep=0pt] {};
        \node [fit=(4y) (5y), draw = black!20, fill=black!5, inner sep=0pt] {};
        \node [fit=(5a), draw = black!20, fill=black!5, inner sep=0pt] {};
        \node [fit=(5c), draw = black!20, fill=black!5, inner sep=0pt] {};
        \end{pgfonlayer}
        \end{tikzpicture}
        \end{minipage}
\caption{Illustration of the reduction from 3DM to Matroid-FRA with two voters. The left figure shows an instance of 3DM, and the right figure shows the partition matroid in the constructed Matroid-FRA instance. In the two voters' preference lists, the blocks $L_t$ appear in reverse order, while the order of elements within each block is the one shown in the figure for both voters.}
\label{fig:two}
\end{figure}

Given an instance $I$ of $3$DM as described above, we construct an instance $I'$ of Matroid-FRA with a target value $\theta$ as follows. (See Figure~\ref{fig:two}.) For each triple $t=(x,y,z)\in T$, we create three elements $e_{tx}, e_{ty}, e_{tz}$. Thus, the ground set in $I'$ is $E=\bigcup_{t=(x,y,z)\in T}\{e_{tx}, e_{ty}, e_{tz}\}$, and $|E|=3|T|$. We next define a matroid. For each $v\in X\cup Y\cup Z$, let
\[
P_v\coloneqq \set{e_{tv}\in E | v\in t\in T}.
\]
Then, the family $\set{P_v | v\in X\cup Y\cup Z}$ forms a partition of $E$. The matroid $M=(E,\B)$ is defined as the partition matroid with respect to this partition, i.e.,
\[
\B=\set{B\subseteq E | \ |B\cap P_v|=1~\text{for all}~v\in X\cup Y\cup Z}.
\]
This matroid has rank $k=3q$.

We next define the two rankings $\pi_1$ and $\pi_2$ over $E$.
For each $t\in T$, let $L_t$ be the ordered list $(e_{tx}, e_{ty}, e_{tz})$ of the three elements generated from $t$. The underlying sets of these lists form a partition of $E$.
Fix an arbitrary total order $\rho$ over $T$ and define the rankings $\pi_1$ and $\pi_2$ over $E$ by concatenating these short lists as follows:
\begin{align*}
\pi_1&=L_{\rho(1)}L_{\rho(2)}\cdots L_{\rho(|T|)},\\
\pi_2&=L_{\rho(|T|)}L_{\rho(|T|-1)}\cdots L_{\rho(1)}.
\end{align*}
Set a target objective value $\theta=\binom{3|T|}{2}-3|T|$. This completes the construction of $I'$. We first make the following observation.

\begin{claim}\label{claim:2voter}
For any (possibly infeasible) ranking $\sigma$ over $E$, its objective value $\sum_{i=1}^2 d_{\mathrm{KT}}(\sigma,\pi_i)$ is at least $\theta$. It is exactly $\theta$ if and only if $e_{tx}\prec_{\sigma}e_{ty}\prec_{\sigma}e_{tz}$ for every $t=(x,y,z)\in T$.
\end{claim}
\begin{proof}
For any ranking $\sigma$ over $E$ and any pair $\{e,f\}\subseteq E$, let $\Delta_{\sigma}(\{e,f\})$ be the number of $\pi_i$ $(i\in\{1,2\})$ such that $\pi_i$ and $\sigma$ are inconsistent on $\{e,f\}$. Then $\Delta_{\sigma}(\{e,f\})\in \{0,1,2\}$. Using this, the objective value can be written as
\[
\sum_{i=1}^2 d_{\mathrm{KT}}(\sigma,\pi_i)=\sum_{\{e,f\}\subseteq E}\Delta_{\sigma}(\{e,f\}).
\]

For any $\{e,f\}\subseteq E$, we have $\Delta_{\sigma}(\{e,f\})\in\{0,2\}$ if $\pi_1$ and $\pi_2$ are consistent on $\{e,f\}$, and otherwise $\Delta_{\sigma}(\{e,f\})=1$. Observe that $\pi_1$ and $\pi_2$ are consistent on $\{e,f\}$ if and only if $e$ and $f$ belong to the same list $L_t$. Therefore, $3|T|$ pairs take values in $\{0,2\}$, while $\binom{|E|}{2}-3|T|=\theta$ pairs take value $1$. This immediately implies that the objective value is at least $\theta$.

The objective value is exactly $\theta$ if and only if $\Delta_{\sigma}(\{e,f\})=0$ for every pair $\{e,f\}$ belonging to the same short list $L_t$. This condition is equivalent to the condition that for each $t=(x,y,z)\in T$, the three elements $e_{tx}, e_{ty}, e_{tz}$ appear in this order in $\sigma$.
\end{proof}

Using this claim, we show both directions.
We first show that if $I$ admits a perfect $3$-dimensional matching $N$, then there is a feasible ranking with objective value $\theta$. Suppose that such an $N$ exists, and define $B_N=\bigcup_{t=(x,y,z)\in N}\{e_{tx}, e_{ty}, e_{tz}\}$.
Then $|B_N\cap P_v|=1$ for every $v\in X\cup Y\cup Z$, and hence $B_N\in \B$.
Define a ranking $\sigma_N$ over $E$ so that (1) the top $3q$ elements consist of $B_N$, obtained by concatenating the $q$ short lists $L_t$ for $t\in N$ in an arbitrary order, and (2) the remaining bottom part is obtained by concatenating the $|T|-q$ short lists $L_t$ for $t\in T\setminus N$ in an arbitrary order. Then, for every $t=(x,y,z)\in T$, it satisfies $e_{tx}\prec_{\sigma_N}e_{ty}\prec_{\sigma_N}e_{tz}$. This ranking $\sigma_N$ is feasible because $\Top_k(\sigma_N)=B_N\in \B$ and attains the objective value $\theta$ by Claim~\ref{claim:2voter}.

We next show the other direction. Suppose that $I'$ admits a feasible ranking $\sigma$ whose objective value is $\theta$. For convenience, write $Z=\{z_1,z_2,\dots,z_q\}$.
Since $\Top_k(\sigma)\in \B$, for each $z_i$, we have $|\Top_k(\sigma)\cap P_{z_i}|=1$. Let $t_i\in T$ be such that $\Top_k(\sigma)\cap P_{z_i}=\{e_{t_i z_i}\}$, and let $x_i$ and $y_i$ be its $X$- and $Y$-elements, i.e., $t_i=(x_i,y_i,z_i)\in T$. Then $t_1,t_2,\dots,t_q$ are all distinct, since their $Z$-elements are distinct (while the $x_i$'s and $y_i$'s are not guaranteed to be distinct at this point). Set $N_\sigma=\set{t_i=(x_i,y_i,z_i)| i\in\{1,2,\dots,q\}}$.
We now show that $N_\sigma$ is a perfect $3$-dimensional matching in $I$.
For each $i\in \{1,2,\dots,q\}$, Claim~\ref{claim:2voter} implies
$e_{t_i x_i}\prec_{\sigma}e_{t_i y_i}\prec_{\sigma}e_{t_i z_i}$,
and hence $e_{t_i z_i}\in \Top_k(\sigma)$ implies $e_{t_i x_i}, e_{t_i y_i}\in \Top_k(\sigma)$. Since $k=3q$ and all $t_i$ are distinct, $\Top_k(\sigma)$ coincides with
$\bigcup_{i=1}^q\{e_{t_i x_i},e_{t_i y_i},e_{t_i z_i}\}$.
The condition $\Top_k(\sigma)\in \B$ then forces
$\textstyle{\left|\bigcup_{i=1}^q\{e_{t_i x_i},e_{t_i y_i},e_{t_i z_i}\}\cap P_v\right|=1}$ for each $v\in X\cup Y\cup Z$,
implying that $x_1,\dots,x_q$, $y_1,\dots,y_q$, and $z_1,\dots,z_q$ are all distinct. This shows that $N_{\sigma}$ is a perfect $3$-dimensional matching.
\end{proof}

\medskip

\ThmHardThree*

To obtain the theorem, we show NP-completeness of the following decision version.

\begin{theorem}
It is NP-complete to determine, given an instance $I$ of Matroid-FRA with $m=3$ and a number $\theta$, whether $I$ admits a feasible ranking with objective value at most $\theta$.
\end{theorem}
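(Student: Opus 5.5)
Membership in NP is immediate: a ranking is a certificate, feasibility needs one independence test, and the objective can be computed directly. For hardness we again reduce from Perfect $3$-Dimensional Matching and reuse the skeleton of the two-voter proof. For each triple $t=(x,y,z)\in T$ we create $e_{tx},e_{ty},e_{tz}$. The matroid is the same partition matroid with classes $P_v$ and rank $k=3q$. What must change is the voter profile, because with an odd number of voters every pair has a strict majority, so there are no "free" cross-block pairs with $\Delta_\sigma=1$ regardless of $\sigma$.

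With three voters, each pair $\{e,f\}$ contributes at least $1$ (the minority count), and contributes exactly $1$ iff $\sigma$ follows the majority on that pair. So $\theta_0=\binom{|E|}{2}$ is a lower bound, attained iff $\sigma$ agrees with the majority tournament on every pair. The plan is to design $\pi_1,\pi_2,\pi_3$ so that the majority tournament has the following two properties.
\begin{enumerate}
\item[(i)] Within each $L_t$ the majority order is $e_{tx}\to e_{ty}\to e_{tz}$, with all three voters agreeing, so violating it costs $2$ extra.
\item[(ii)] Across different triples, the majority relation is as "cheap to violate" as possible. The ideal is that every pair of triples is split $2$--$1$ in a way that is cyclic enough that no ranking can follow the majority on all of them.
\end{enumerate}
The natural construction is a Condorcet-cycle structure: $\pi_1=L_{\rho(1)}\cdots L_{\rho(r)}$, $\pi_2=L_{\rho(r)}\cdots L_{\rho(1)}$, and $\pi_3$ chosen so that cross-block pairs contribute a quantity that is independent of $\sigma$ up to a controlled term. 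I would then compute the exact minimum over \emph{unconstrained} $\sigma$ of the cross-block contribution, given each $L_t$ kept contiguous-ordered. This gives a value $\theta$, and the claim is that the optimum equals $\theta$ iff all $L_t$ are internally ordered and the cross part is optimal.

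A simpler route I would try first uses padding. For each triple $t$, insert $L_t$ into $\pi_3$ in a position that makes $\pi_3$ the "tie-breaker", so that the cross-block minimum is achieved by many rankings, namely every ranking obtainable by freely ordering blocks. This only requires that, restricted to whole blocks, the cross-block cost is the same for every block order. Perhaps a cleaner alternative is to reduce the odd case to the even case by adding, for each $e\in E$, dummy elements forming a rigid frame. The dummies are ordered identically by all three voters and placed so that the third voter effectively cancels. The dummies must go into a separate partition class whose quota forces them outside the top $k$, or into free classes. I expect the chosen construction to be of this kind: a gadget in which $\pi_3$ agrees with $\pi_1$ on some cross pairs and with $\pi_2$ on the rest, arranged so that the total cross cost is $\sigma$-invariant as long as each $L_t$ stays internally sorted and blocks are not interleaved badly.

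The proof then mirrors Claim~\ref{claim:2voter}. First, establish a lower bound $\theta$ for every ranking. Second, characterize equality as "each $L_t$ appears in the order $e_{tx},e_{ty},e_{tz}$, plus some block-level condition satisfiable by any choice of top set that is a union of triples". Third, run the same two directions as before. The forward direction puts the triples of $N$ on top. The backward direction uses the classes $P_{z_i}$ and the internal order to force the top $k$ to be a union of $q$ distinct triples, which is then a perfect matching by the partition constraint.

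The main obstacle is the design of $\pi_3$ so that the cross-block contribution has a $\sigma$-independent minimum that is achievable simultaneously with any prescribed set of triples in the top $k$. Majority cycles make such invariance nontrivial, and this is where I expect the involved construction to be needed. I would first try interleaving each $L_t$ with private dummy elements so that all cross pairs become exact $2$--$1$ cyclic triangles with a constant local optimum.
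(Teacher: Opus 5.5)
There is a genuine gap: you never construct the third ranking, and the skeleton you propose cannot supply the structure you need. With $\pi_1=L_{\rho(1)}\cdots L_{\rho(r)}$ and $\pi_2$ its block reversal, $\pi_1$ and $\pi_2$ disagree on every cross-block pair, so $\pi_3$ alone decides the majority there. Under your requirement (i), the whole majority tournament is therefore just the linear order $\pi_3$, and there is no Condorcet cycle as hoped in (ii).

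Concretely, the objective equals $N_{\rm cross}+\dKT(\sigma,\pi_3)+2w(\sigma)$. Here $N_{\rm cross}$ is the number of cross-block pairs and $w(\sigma)$ is the number of within-block pairs that $\sigma$ reverses. So the cross-block part is never invariant under reordering blocks. Swapping two adjacent contiguous blocks $L_t,L_{t'}$ changes it by $2a-9$, where $a$ is the number of pairs in $L_t\times L_{t'}$ that $\pi_3$ orders with the $L_t$-element first. This is odd, hence nonzero.

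It follows that the bound $\binom{|E|}{2}$ is attained only by $\sigma=\pi_3$. For a fixed top set $B$, the best ranking orders $B$ and $E\setminus B$ by $\pi_3$. Besides the within-block penalty, its cost contains the term $\sum_{b\in B}\pi_3^{-1}(b)$, which depends on $B$ and is not controlled by anything in your construction. Thus the intended characterization (``equality iff each $L_t$ is sorted, plus a block condition met by every union of triples'') fails, and no threshold $\theta$ is actually defined.

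The dummy ``rigid frame'' is also left unspecified. Note that a single voter can never be neutralized on its own, since $\dKT(\sigma,\pi_3)$ is not constant in $\sigma$; Corollary~\ref{cor:all_m} needs a ranking together with its reverse.

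The missing ingredient is a three-voter profile that is cyclic and has two properties at once. Its cost-minimizing rankings must be numerous enough to accommodate any intended top set. Yet its equality condition must force an all-or-nothing behaviour of the gadgets with respect to every prefix. Even your last suggestion, cyclic $2$--$1$ triangles with constant local optimum, would still need this second property.

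The paper takes $E=\{0,1,2\}^d$ and lets $\pi_1,\pi_2,\pi_3$ be the lexicographic orders induced by the rotations $0<1<2$, $1<2<0$, $2<0<1$. Between $\llbracket\alpha0\rrbracket,\llbracket\alpha1\rrbracket,\llbracket\alpha2\rrbracket$ the majority is then cyclic at every level. Claim~\ref{claim:3voter} shows three things. The pairs with longest common prefix $\alpha$ cost at least $4n_\alpha^2$. Equality holds iff every transversal triple appears in a cyclic shift of $0,1,2$. And equality forces every prefix $\Top_\ell(\sigma)$ to contain or avoid one of the three sub-blocks entirely.

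Mandatory and forbidden elements make $\llbracket\beta_j1\rrbracket$ and $\llbracket\beta_j2\rrbracket$ split by every base. Hence $\llbracket\beta_j0\rrbracket$ must lie entirely inside or outside $\Top_k(\sigma)$, which encodes X3C. Conversely, a recursive construction attains $\theta=\sum_\alpha 4n_\alpha^2$ with any exact cover on top; it uses that at each level one of $E_1,E_2,G_1$ is empty. Your proposal has no counterpart to this lemma, so the hardness direction remains unproved. Membership in NP is fine.
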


\begin{proof}
Clearly, the problem is in NP. We show NP-hardness by a reduction from Exact Cover by $3$-Sets (X3C), a classical NP-complete problem~\cite{garey1979}. An instance of X3C consists of a finite set $U$ and a family $\C=\{C_1, \dots,C_q\} \subseteq 2^U$ with $|C_j|=3$ for each $j=1,\dots, q$. The question is whether there exists a subfamily $\X\subseteq \C$ such that each $u\in U$ is contained in exactly one set of $\X$. We may also assume that every element of $U$ belongs to at least one set in $\C$, since otherwise the instance is trivially a no-instance.

Given an instance $I$ of X3C as described above, we construct an instance $I'$ of Matroid-FRA with a target value $\theta$ as follows. Without loss of generality, we may assume that $q=2^h$ for some integer $h$, since otherwise we can add copies of existing $C_j$ to $\C$ (so that $\C$ becomes a multiset) to satisfy this condition; the size of the resulting $\C$ is at most twice the original size. Thus, $h=\log_2 q$. Set $d\coloneqq h+2$ and let $E=\{0,1,2\}^d$. Then $|E|=3^d=9\cdot q^{\log_2 3}$, which is polynomial in the input size. To define the three rankings $\pi_1, \pi_2, \pi_3$ over $E$, we first define the underlying rankings $\rho_1,\rho_2,\rho_3$ over $\{0,1,2\}$ as follows:
\[
0 \prec_{\rho_1} 1 \prec_{\rho_1} 2,\qquad 1 \prec_{\rho_2} 2 \prec_{\rho_2} 0, \qquad 2 \prec_{\rho_3} 0 \prec_{\rho_3} 1.
\]
For each $i=1,2,3$, the ranking $\pi_i$ is defined as the lexicographic order with respect to $\rho_i$. To be more precise, for any $e\in E=\{0,1,2\}^d$, let $e_j$ denote its $j$th digit. Then,
\[
e\prec_{\pi_i}f ~~\Longleftrightarrow~~ \text{there exists }\ell\in [d]\text{ such that } e_j=f_j \text{ for all }j\in[\ell-1]\text{ and } e_\ell\prec_{\rho_i}f_\ell.
\]

Before specifying the matroid, we analyze the behavior of the objective value.
We use the notation
$\{0,1,2\}^{<d}\coloneqq \{\varepsilon\} \cup \{0,1,2\}^1\cup \{0,1,2\}^2\cup \cdots \cup \{0,1,2\}^{d-1}$,
where $\varepsilon$ denotes the empty sequence.
For each $\alpha\in \{0,1,2\}^{<d}$, we write
$\llbracket \alpha \rrbracket\coloneqq\set{\beta\in \{0,1,2\}^d | \text{$\beta$ has $\alpha$ as a prefix}}$,
where $\varepsilon$ is a prefix of every element of $E$.

\medskip
\noindent{{\bf Step\,1. Analyzing the objective function.}}
Let $\sigma$ be any ranking over $E$.
Similarly to the proof of Claim~\ref{claim:2voter}, for any pair $\{e,f\}\subseteq E$, let $\Delta_{\sigma}(\{e,f\})$ be the number of $\pi_i$ $(i\in\{1,2,3\})$ such that $\pi_i$ and $\sigma$ are inconsistent on $\{e,f\}$.

We partition the set $\binom{E}{2}$ of pairs into classes according to the longest common prefix. For each $\alpha\in \{0,1,2\}^{<d}$, let
\[
\mathcal{P}_\alpha=\set{\{e,f\}\subseteq E | \text{$\alpha$ is the longest common prefix of $e$ and $f$}}.
\]
That is, $\mathcal P_\alpha$ consists of all unordered pairs
$\{e,f\}$ such that, after possibly swapping $e$ and $f$, the ordered pair
\((e,f)\) belongs to $(\llbracket \alpha 0 \rrbracket\times \llbracket \alpha 1 \rrbracket)\cup (\llbracket \alpha 1 \rrbracket\times \llbracket \alpha 2 \rrbracket)\cup (\llbracket \alpha 2 \rrbracket\times \llbracket \alpha 0 \rrbracket)$.

By definition, $\{\mathcal{P}_\alpha\}_{\alpha\in \{0,1,2\}^{<d}}$ is a partition of $\binom{E}{2}$.
For each $\alpha\in \{0,1,2\}^{<d}$, define
\[
H_{\alpha}(\sigma)=\sum_{\{e,f\}\in \mathcal{P}_\alpha}\Delta_{\sigma}(\{e,f\}).
\]
Thus, $H_\alpha(\sigma)$ is the contribution to the objective value from the pairs in $\mathcal{P}_\alpha$.

For each $\alpha\in \{0,1,2\}^{<d}$, let $|\alpha|$ denote the length (i.e., the number of digits) of $\alpha$, and let $n_{\alpha}\coloneqq 3^{d-|\alpha|-1}$, which is the size of each of the sets $\llbracket \alpha 0 \rrbracket$, $\llbracket \alpha 1 \rrbracket$, and $\llbracket \alpha 2 \rrbracket$.
\begin{claim}\label{claim:3voter}
For any $\alpha\in \{0,1,2\}^{<d}$, the following hold.
\begin{enumerate}
\item We always have $H_{\alpha}(\sigma)\geq 4\cdot n_\alpha^2$.
\item The equality holds if and only if, for every triple $(e,f,g)\in \llbracket \alpha 0 \rrbracket\times \llbracket \alpha 1 \rrbracket\times \llbracket \alpha 2 \rrbracket$, the elements $e,f,g$ are ordered in $\sigma$ in this order or in a cyclic shift of this order.
\item If the equality holds, then for any positive integer $\ell$, one of the following must hold:
\begin{itemize}
\item $\llbracket \alpha 0 \rrbracket$ or $\llbracket \alpha 1 \rrbracket$ or $\llbracket \alpha 2 \rrbracket$ is completely contained in $\Top_\ell(\sigma)$, or
\item $\llbracket \alpha 0 \rrbracket$ or $\llbracket \alpha 1 \rrbracket$ or $\llbracket \alpha 2 \rrbracket$ is disjoint from $\Top_\ell(\sigma)$.
\end{itemize}
\end{enumerate}
\end{claim}
\begin{proof}
Note that each pair $\{e,f\}\in \mathcal{P}_\alpha$ consists of elements taken from two distinct sets among
$\llbracket \alpha 0 \rrbracket$, $\llbracket \alpha 1 \rrbracket$, and $\llbracket \alpha 2 \rrbracket$, each of which has size $n_{\alpha}$.
By the definition of $\pi_i$ $(i=1,2,3)$, which are the lexicographic orders with respect to $\rho_i$, we have
\begin{equation}
(e,f)\in (\llbracket \alpha 0 \rrbracket\times \llbracket \alpha 1 \rrbracket) \cup (\llbracket \alpha 1 \rrbracket\times \llbracket \alpha 2 \rrbracket)\cup (\llbracket \alpha 2 \rrbracket\times \llbracket \alpha 0 \rrbracket)\implies
\Delta_{\sigma}(\{e,f\})=
\begin{cases}
1 & \text{if } e\prec_{\sigma} f,\\
2 & \text{if } f\prec_{\sigma} e.
\end{cases}
\label{onetwo}
\end{equation}
We next show the following equality:
\[
\sum_{\{e,f\}\in \mathcal{P}_\alpha}\Delta_{\sigma}(\{e,f\})=\frac{1}{n_{\alpha}}\cdot \sum_{(e,f,g)\in \llbracket \alpha 0 \rrbracket\times \llbracket \alpha 1 \rrbracket\times \llbracket \alpha 2 \rrbracket}\left(\Delta_{\sigma}(\{e,f\})+\Delta_{\sigma}(\{f,g\})+\Delta_{\sigma}(\{g,e\})\right).
\]
To see this, fix any pair in $\mathcal{P}_\alpha$, say $\{e^*,f^*\}\in \mathcal{P}_{\alpha}$. Assume without loss of generality that $(e^*,f^*)\in \llbracket \alpha 0 \rrbracket\times \llbracket \alpha 1 \rrbracket$. Then on the right-hand side, the term $\Delta_{\sigma}(\{e^*,f^*\})$ appears $n_\alpha$ times as $(e^*,f^*, g)$ with different $g\in \llbracket \alpha 2 \rrbracket$. Thus this equality holds.

Take any $(e,f,g)\in \llbracket \alpha 0 \rrbracket\times \llbracket \alpha 1 \rrbracket\times \llbracket \alpha 2 \rrbracket$. By \eqref{onetwo}, each term of
$\Delta_{\sigma}(\{e,f\})+\Delta_{\sigma}(\{f,g\})+\Delta_{\sigma}(\{g,e\})$ is $1$ or $2$, but it is impossible that all of them are $1$, since otherwise we would have $e\prec_\sigma f\prec_\sigma g\prec_\sigma e$, a contradiction. Thus, the minimum value of the sum of these three terms is $4$, and it is attained if and only if we have either
$e\prec_\sigma f\prec_\sigma g$, ~$f\prec_\sigma g\prec_\sigma e$, or $
g\prec_\sigma e\prec_\sigma f$,
i.e., $e,f,g$ appear in $\sigma$ in this order or in a cyclic shift of this order. Then
\[
H_{\alpha}(\sigma)=\sum_{\{e,f\}\in \mathcal{P}_\alpha}\Delta_{\sigma}(\{e,f\})
\geq \frac{1}{n_{\alpha}}\cdot n_\alpha^3 \cdot 4
=4\cdot n_{\alpha}^2
=4\cdot 3^{2(d-|\alpha|-1)},
\]
and the equality holds if and only if for every triple $(e,f,g)\in \llbracket \alpha 0 \rrbracket\times \llbracket \alpha 1 \rrbracket\times \llbracket \alpha 2 \rrbracket$, the elements $e,f,g$ are ordered in $\sigma$ in this order or in a cyclic shift of this order. Thus, the first and second statements of the claim are shown.

To prove the third statement, suppose for the sake of contradiction that we have $H_\alpha(\sigma)=4\cdot 3^{2(d-|\alpha|-1)}$ while none of $\llbracket \alpha 0 \rrbracket$, $\llbracket \alpha 1 \rrbracket$, and $\llbracket \alpha 2 \rrbracket$ is contained in or disjoint from $\Top_\ell(\sigma)$.
Then, there exist $e_1, e_2\in \llbracket \alpha 0 \rrbracket$, $f_1, f_2\in \llbracket \alpha 1 \rrbracket$, and $g_1, g_2\in \llbracket \alpha 2 \rrbracket$ such that $e_1, f_1, g_1\in \Top_\ell(\sigma)$ and $e_2, f_2, g_2\in E\setminus \Top_\ell(\sigma)$.

Consider the triple $(e_1, f_1, g_2)$. By the above discussion, for $H_{\alpha}(\sigma)$ to be $4\cdot n_\alpha^2$, these three elements must appear in this order or in a cyclic shift of it. However, since $g_2\in E\setminus \Top_\ell(\sigma)$ cannot precede $e_1,f_1\in  \Top_\ell(\sigma)$, we must have $e_1\prec_\sigma f_1\prec_\sigma g_2$.
Similar arguments applied to the triples $(e_2,f_1, g_1)$ and $(e_1,f_2, g_1)$ respectively imply
$f_1\prec_\sigma g_1\prec_\sigma e_2$ and $g_1\prec_\sigma e_1\prec_\sigma f_2$.
However, these relations imply
$e_1\prec_\sigma f_1\prec_\sigma g_1 \prec_\sigma e_1$, a contradiction.
\end{proof}

By the definition of $H_\alpha(\sigma)$ and Claim~\ref{claim:3voter}, the objective value satisfies
\[
\sum_{i=1}^3 d_{\mathrm{KT}}(\sigma,\pi_i)=\sum_{\alpha\in \{0,1,2\}^{<d}}H_{\alpha}(\sigma)\geq \sum_{\alpha\in \{0,1,2\}^{<d}}4\cdot n_{\alpha}^2,
\]
for any ranking $\sigma$ over $E$. We set the target objective value in $I'$ as $\theta=\sum_{\alpha\in \{0,1,2\}^{<d}}4\cdot n_{\alpha}^2$,
i.e., the right-hand side of the above inequality.

\begin{figure}[t]
\scalebox{0.80}{
  \input{fig_hardness_triangle_2}
}
\caption{An illustration of the constructed Matroid-FRA instance with three voters, obtained by a reduction from X3C. The ground set is $\{0,1,2\}^d$, and each triangle represents a set of elements sharing a common prefix. The figure on the right shows the set of elements with a common prefix in $\{0,1\}^{d-4}$, while the enlarged figure on the left zooms in on the subset corresponding to a common prefix $\alpha \in \{0,1\}^{d-3}$. The three input rankings are defined lexicographically from the underlying orders $\rho_1, \rho_2, \rho_3$, shown at the upper right of the figure. Each 3-set in the X3C instance corresponds to a smallest triangle containing three uncolored elements of $\{0,1,2\}^d$. For each element of the ground set of the X3C instance, there is a partition class consisting of the corresponding elements of $\{0,1,2\}^d$. Elements colored in gray indicate forbidden elements (loops), and those colored in orange indicate mandatory elements that must be included in every matroid base.}
\label{fig:three}
\end{figure}

\medskip
\noindent{{\bf Step\,2. Defining the partition matroid.}}
To define the matroid, recall that $d=h+2$ and $2^h=q$, where $q$ is the number of size-$3$ sets in the X3C instance $I$. Let us enumerate the elements of $\{0,1\}^h$ as $\{0,1\}^h=\{\beta_1,\dots,\beta_q\}$.
Observe that for each $j=1,\dots, q$, since $|\beta_j|=h=d-2$, the set $\llbracket \beta_j \rrbracket$ is partitioned into the following three sets:
\begin{align*}
 \llbracket \beta_j 0 \rrbracket&=\{\beta_j 00, \beta_j 01, \beta_j 02\},\\
 \llbracket \beta_j 1 \rrbracket&=\{\beta_j 10, \beta_j 11, \beta_j 12\},\\
 \llbracket \beta_j 2 \rrbracket&=\{\beta_j 20, \beta_j 21, \beta_j 22\}.
\end{align*}
We use the sets $\llbracket \beta_j 0 \rrbracket$ to implement the size-$3$ sets $C_j=\{u_{j,0}, u_{j,1}, u_{j,2}\}$. (See Figure~\ref{fig:three}.)
For each element $u\in U$ in $I$, define a partition class
\[
P_u=\set{\beta_j 0\ell | j\in [q],\, \ell\in\{0,1,2\},\, u=u_{j,\ell}},
\]
and assign capacity $1$ to it.
In contrast, the elements in $\llbracket \beta_j 1 \rrbracket$ and $\llbracket \beta_j 2 \rrbracket$ are all used as forbidden or mandatory elements as follows. For each of $\beta_j 10, \beta_j 11, \beta_j 20, \beta_j 21$, we consider a partition class consisting only of that single element and having capacity $1$, which forces any base to include these elements; i.e., these are {\em mandatory}. We let $\beta_j 12$ and $\beta_j 22$ be loops (dependent singletons) in the matroid, meaning that these are {\em forbidden} elements.
Thus, for every base, neither $\llbracket \beta_j 1 \rrbracket$ nor $\llbracket \beta_j 2 \rrbracket$ is contained in or disjoint from the base.

We have thus specified the roles of the elements in $\llbracket \beta \rrbracket$ for all $\beta\in \{0,1\}^h=\{0,1\}^{d-2}$. For each remaining element in $E=\{0,1,2\}^d$, i.e., each element $e\in E$ in which $2$ appears in the first $h$ digits at least once, we declare it to be a forbidden element. This completes the definition of the matroid $M$. Let $E_{\rm mand}$ and $E_{\rm forb}$ be the sets of mandatory and forbidden elements, respectively.
Then, the base family $\B$ is written as
\[
\B=\set{ B\subseteq E | |B\cap P_u|=1~(u\in U),~E_{\rm mand}\subseteq B,~E_{\rm forb}\cap B=\emptyset}.
\]
The rank of $M$ is $k=|U|+|E_{\rm mand}|$. This completes the construction of the Matroid-FRA instance $I'$.

\medskip
\noindent{{\bf Step\,3. Correctness of the reduction.}} We prove both directions in turn.

First, suppose that $I'$ admits a feasible ranking $\sigma$ with objective value $\theta$. We show that the X3C instance $I$ admits an exact cover.
Since the objective value is $\theta=\sum_{\alpha\in \{0,1,2\}^{<d}}4\cdot n_{\alpha}^2$, the third statement of Claim~\ref{claim:3voter} implies that, for each $\alpha\in \{0,1,2\}^{<d}$, at least one of $\llbracket \alpha 0 \rrbracket$, $\llbracket \alpha 1 \rrbracket$, and $\llbracket \alpha 2 \rrbracket$ is completely contained in or disjoint from $\Top_k(\sigma)$. In particular, when applied to $\beta_j$ $(j=1,\dots, q)$, this implies that $\llbracket \beta_j 0 \rrbracket$ is either contained in or disjoint from $\Top_k(\sigma)$, because feasibility of $\sigma$ implies that $\Top_k(\sigma)$ includes $\beta_j 10, \beta_j 11, \beta_j 20, \beta_j 21$ and excludes $\beta_j 12$ and $\beta_j 22$. Thus, we have $|\llbracket \beta_j 0 \rrbracket\cap \Top_k(\sigma)|\in\{0,3\}$
for each $j=1,\dots, q$. Let
$\X_{\sigma}=\set{C_j| \llbracket \beta_j 0 \rrbracket\subseteq \Top_k(\sigma)}$.
Since $\Top_k(\sigma)$ is a base, for each $u\in U$, we have $|P_u\cap \Top_k(\sigma)|=1$, which means that exactly one set $C_j$ in $\X_{\sigma}$ contains $u$. Thus, $\X_{\sigma}$ is an exact cover in $I$.

We next show the other direction.
Suppose that the X3C instance $I$ admits an exact cover $\X\subseteq \C$. We show that $I'$ admits a feasible ranking $\sigma$ with objective value $\theta$.
Let $B_{\X}=\left(\bigcup_{j:C_j\in \X}\llbracket \beta_j 0 \rrbracket\right)\cup E_{\rm mand}$.
Since $\X$ is an exact cover, we have $|B_{\X}\cap P_u|=1$ for every $u\in U$, and hence $B_{\X}\in \B$. We now construct a ranking $\hat{\sigma}$ satisfying $\Top_k(\hat{\sigma})=B_{\X}$.
We define the order $\hat{\sigma}$ recursively.
\begin{enumerate}
\item For each $\alpha\in \{0,1,2\}^d$, let $\hat{\sigma}_\alpha$ be the trivial ranking on the singleton $\{\alpha\}$.
\item For each $\alpha\in \{0,1,2\}^{<d}$, we define $\hat{\sigma}_{\alpha}$ using $\hat{\sigma}_{\alpha 0}$, $\hat{\sigma}_{\alpha 1}$, and $\hat{\sigma}_{\alpha 2}$ as follows.
First, partition the set $\llbracket \alpha \rrbracket$ into the following subsets:
\begin{align*}
E_1&=\llbracket \alpha 0 \rrbracket\cap B_{\X},\quad
F_1=\llbracket \alpha 1 \rrbracket\cap B_{\X},\quad
G_1=\llbracket \alpha 2 \rrbracket\cap B_{\X},\\
E_2&=\llbracket \alpha 0 \rrbracket\setminus B_{\X},\quad
F_2=\llbracket \alpha 1 \rrbracket\setminus B_{\X},\quad
G_2=\llbracket \alpha 2 \rrbracket\setminus B_{\X}.
\end{align*}
By the definition of $B_{\X}$ and the set $E_{\rm mand}$, at least one of $E_1=\emptyset$ or $E_2=\emptyset$ or $G_1=\emptyset$ holds. (The first or second holds when $|\alpha|\geq d-2$, and otherwise the third holds.)

Define a ranking $\hat{\sigma}_{\alpha}$ on $\llbracket \alpha \rrbracket$ as follows.
\begin{itemize}
\item If $E_1=\emptyset$, let $\hat{\sigma}_{\alpha}$ arrange the blocks in the order $F_1\, G_1\, G_2\, E_2\, F_2$.
\item If $E_2=\emptyset$, let $\hat{\sigma}_{\alpha}$ arrange the blocks in the order $G_1\, E_1\, F_1\, F_2 \, G_2$.
\item If $G_1=\emptyset$, let $\hat{\sigma}_{\alpha}$ arrange the blocks in the order $E_1\, F_1\, F_2\, G_2\, E_2$.
\end{itemize}
Here, the ordering inside $E_1\cup E_2$ (resp., $F_1\cup F_2$, $G_1\cup G_2$) is given by the orderings $\hat{\sigma}_{\alpha 0}$ (resp., $\hat{\sigma}_{\alpha 1}$, $\hat{\sigma}_{\alpha 2}$).
\end{enumerate}
Let $\hat{\sigma}=\hat{\sigma}_{\varepsilon}$ be the ranking on $E$ obtained in this way. In each of the three cases in the construction, all elements of
$\llbracket \alpha\rrbracket\cap B_{\X}$ are placed before all elements of
$\llbracket \alpha\rrbracket\setminus B_{\X}$. Hence, by induction on the
recursion, every element of $B_{\X}$ precedes every element of
$E\setminus B_{\X}$ in $\hat{\sigma}$. Therefore $\Top_k(\hat{\sigma})=B_{\X}$.

By construction, for every $\alpha\in \{0,1,2\}^{<d}$, the restriction of $\hat{\sigma}$ to $\llbracket \alpha \rrbracket$ coincides with $\hat{\sigma}_{\alpha}$.
The construction also ensures that every $(e,f,g)\in \llbracket \alpha 0 \rrbracket\times \llbracket \alpha 1 \rrbracket\times \llbracket \alpha 2 \rrbracket$ is ordered in $\hat{\sigma}_{\alpha}$ in this order or in a cyclic shift of this order (note that $e\in E_1\cup E_2$, $f\in F_1\cup F_2$, and $g\in G_1\cup G_2$ in the above construction). Therefore, by the second statement of Claim~\ref{claim:3voter}, we have $H_{\alpha}(\hat{\sigma})=4\cdot n_\alpha^2$ for every $\alpha\in \{0,1,2\}^{<d}$, and hence the objective value $\sum_{i=1}^3 d_{\mathrm{KT}}(\hat{\sigma},\pi_i)$ coincides with
$\sum_{\alpha\in \{0,1,2\}^{<d}}4\cdot n_{\alpha}^2=\theta$.
\end{proof}

\CorAll*
\begin{proof}
If $m$ is even, we start from the NP-hard case $m=2$; if $m$ is odd, we start from the NP-hard case $m=3$. To increase the number of voters by $2$, add an arbitrary ranking $\pi$ and its reverse $\pi^{\mathrm{rev}}$ on the same ground set. For every ranking $\sigma$,
$d_{\mathrm{KT}}(\sigma,\pi)+d_{\mathrm{KT}}(\sigma,\pi^{\mathrm{rev}})
= \binom{|E|}{2}$,
which is independent of $\sigma$. Hence each added pair shifts the objective value by the same constant and does not affect which feasible rankings are optimal. Repeating this extension yields NP-hardness for every $m>1$.
\end{proof}

\end{document}